\documentclass{IEEEtran}
\hyphenpenalty 400

\usepackage{amsmath}
\usepackage{amsfonts}
\usepackage{amssymb}
\usepackage{amsthm}
\usepackage{stmaryrd} 
\usepackage{tikz}

\newtheorem{theorem}{Theorem}
\newtheorem{lemma}[theorem]{Lemma}

\newtheorem{corollary}[theorem]{Corollary}
\newtheorem{proposition}[theorem]{Proposition}

\theoremstyle{definition}

\newenvironment{definition}%
   {\pushQED{\hfill$\blacktriangleleft$}\defaux}%
   {\popQED\enddefaux}

\newcommand\E{\mathbb O}
\newcommand\F{\mathbb F}
\newcommand\D{\mathbb D}
\newcommand\EE{\mathsf o}
\newcommand\FF{\mathsf f}
\newcommand\DD{\mathsf d}
\newcommand\tp{\ensuremath{\mathsf{tp}}}
\def\(#1,#2,#3){\langle #1,#2,#3\rangle}
\newcommand\setv{\mathbb{V}}
\newcommand\setn{\mathbb{N}}
\newcommand\parent{\mathord{\shortuparrow}}
\newcommand\FS{\Phi}
\newcommand\FSa{\Psi}
\newcommand\cbrk{\epsilon}
\newcommand\indep{\mathbin{\mkern-.3\thinmuskip\parallel\mkern-.3\thinmuskip}}
\newcommand\notindep{\mathbin{\mkern-.3\thinmuskip\nparallel\mkern-.3\thinmuskip}}
\newcommand\xlt{\sqsubset}
\newcommand\xle{\sqsubseteq}

\makeatletter
\DeclareRobustCommand
 \nll{\mathrel{\m@th\mathpalette\c@ncel\ll}}
\makeatother
\newcommand\orderrel{\ll}
\newcommand\notorderrel{\nll}
\newcommand\wequiv{\buildrel\raise-0.4ex\hbox{$\scriptstyle\mathsf{w}$}\over\equiv}
\newcommand\wxle{\mathrel{\ooalign{\hfil\raise1.4ex\hbox{$\,\scriptstyle\mathsf{w}$}\hfil\crcr
  \raise-0.4ex\hbox{$\xle$}}}}
\newcommand\wxlt{\mathrel{\ooalign{\hfil\raise1.4ex\hbox{$\,\scriptstyle\mathsf{w}$}\hfil\crcr
  \raise-0.4ex\hbox{$\xlt$}}}}
\newcommand\X{\mathcal{X}}
\let\models\vDash
\let\nmodels\nvDash
\newcommand\captp[2]{#1\cap^\tp#2}
\newcommand\mintp[2]{#1\setminus^{\!\tp}#2}
\renewcommand\setminus{\mathbin{\mskip-0.6\thinmuskip\smallsetminus\mskip-0.6\thinmuskip}}
\newcommand\R[2]{\mathcal R(#1\mathbin{|}#2)}

\newcommand{\rot}[3]{#3#2#1} %
\hyphenation{file-system}
\newenvironment{Keywords}[1]{\IEEEkeywords}{\endIEEEkeywords}

\title{Algebra of Data Reconciliation}
\author{%
Elod P. Csirmaz${}^*$\thanks{${}^*$e-mail: \rot{\rot{maz.}{csir}{ep}com}{@}{elod}}
and
Laszlo Csirmaz${}^{**}$\thanks{${}^{**}$e-mail: csirmaz@renyi.hu
R\'enyi Insititute, Budapest, and UTIA, Prague.}}

\begin{document}
\maketitle

\begin{abstract}

With distributed computing and mobile applications becoming ever more
prevalent, synchronizing diverging replicas of the same data is a common
problem. Reconciliation -- bringing two replicas of the same data structure as
close as possible without overriding local changes -- is investigated in an
algebraic model. Our approach is to consider two sequences of simple
commands that describe the changes in the replicas compared to the original
structure, and then determine the maximal subsequences of each that can be
propagated to the other.
The proposed command set is shown to be functionally complete, 
and an update detection algorithm is presented which produces a command
sequence transforming the original data structure into the replica
while traversing both simultaneously.
Syntactical characterization is provided in terms of a rewriting system for
semantically equivalent command sequences. Algebraic properties of sequence
pairs that are applicable to the same data structure are investigated. Based on
these results the reconciliation problem is shown to have a unique maximal
solution. In addition, syntactical properties of the maximal solution allow
for an efficient algorithm that produces it.

\begin{Keywords}{Keywords}
file synchronization,
algebraic model,
confluence,
rewriting system.
\end{Keywords}

\begin{Keywords}{MSC classes} %
08A02,
08A70, %
68M07, %
68P05. %
\end{Keywords}

\begin{Keywords}{ACM classes} %
D.4.3, %
E.5, %
F.2.2, %
G.2. %
\end{Keywords}
\end{abstract}

\section{Introduction}\label{sec:intro}

Synchronization of diverging copies of some data stored on several
independent devices is a mechanism we nowadays take for granted. Examples are
accessing and editing calendar events, documents, spreadsheets, or
distributed internet and web services hosted in the cloud.
While there are numerous commercially available synchronization tools
for, for example, the case of filesystems \cite{KRSD,MPV,PV,TSR,TTPDSH}, 
collaborative editors \cite{OT2,OT1}, distributed databases \cite{CFRD1},
they mostly lack sound theoretical foundation.
The main aim of this paper is to provide such a mathematical framework 
for a simple, nevertheless very important special case of filesystem
synchronization. We consider the first stage of synchronization which is to
determine the maximal amount of changes which
can be safely carried over to the other copy without the need of conflict
resolution.

The synchronization paradigm follows the one described in \cite{BP} and
depicted on Figure \ref{fig:sync-process}. 
\begin{figure}[hbt]%
~~~~\begin{tikzpicture}[shorten >=2pt,scale=0.8]
\def\ss{\scriptstyle}
\node[draw] (q0) at (-1.5,0) {$\FS$};
\node[draw] (q1) at (0,2.2) {$\FS$};
\node[draw] (q2) at (0,-2.2) {$\FS$};
\node[draw] (q3) at (3.5,2.2) {$\FS_1$};
\node[draw] (q4) at (3.5,-2.2) {$\FS_2$};
\node[draw] (q5) at (7.2,2.2) {$\FS'_1$};
\node[draw] (q6) at (7.2,-2.2) {$\FS'_2$};
\node       (q24) at (1.8,-2.1) {};
\node       (q13) at (1.8,2.1)  {};
\draw (0,0) node[rotate=90]{\footnotesize \dots\dots~~replicas~~\dots\dots};
\node[draw] (u1) at (2.2,0.7) {\footnotesize update detector};
\node       (u11) at (3.3,0.7) {};
\node[draw] (u2) at (2.2,-0.7) {\footnotesize update detector};
\node       (u22) at (3.3,-0.7) {};
\node[draw,rotate=90] (R) at (5.3,0) {\footnotesize ~~reconciler~~};
\node       (ap) at (5.3,2.2) {};
\node       (bp) at (5.3,-2.2) {};
\draw (6.8,0) node [rotate=90] {\footnotesize conflicts};
\node         (cr) at (6.8,0) {};
\path[->]
      (q0) edge (q1) edge (q2)
      (q1) edge node[fill=white]{$\ss\alpha_0$} node[above=3pt]{\footnotesize updates} (q3)
      (q2) edge node[fill=white]{$\ss\beta_0$} node[below=3pt]{\footnotesize updates} (q4)
      (q3) edge node[fill=white]{$\ss\beta'$} (q5)
      (q4) edge node[fill=white]{$\ss\alpha'$} (q6);
\path[dotted,->]
      (q13) edge (u1) (q24) edge (u2);
\path[thick,->]
      (q2) edge (u2) (q4) edge (u2) 
      (q1) edge (u1) (q3) edge (u1)
      (u11) edge node[fill=white]{$\ss\alpha$} (R)
      (u22) edge node[fill=white]{$\ss\beta$} (R)
      (R) edge (ap) edge (bp) edge (cr);
\end{tikzpicture}%
\caption{The synchronization process}\label{fig:sync-process}
\end{figure}
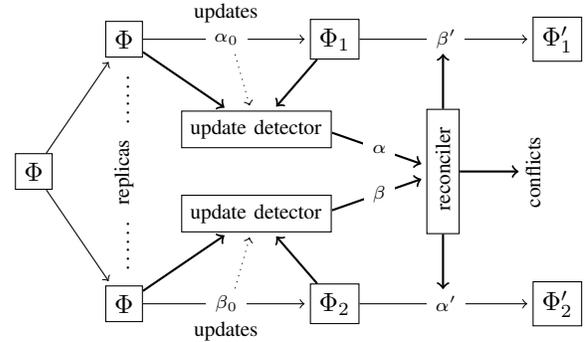
Two identical replicas of the original filesystem $\FS$ are updated independently
yielding the diverged copies $\FS_1$ and $\FS_2$. In the state-based case
the \emph{update detector} compares the original
($\FS$) and current ($\FS_1$ or $\FS_2$) states, and
extracts an update information describing the differences between the
original and the replica. 
In the operation-based case the update detector gets only the
performed operations and works on this sequence. The \emph{reconciler} 
(which represents the first stage of the full reconciliation process)
uses the information provided by the update detectors to 
propagate as many of the changes to the other replica as possible without
destroying local changes. The remaining updates are marked as \emph{conflicts}
and should be handled by a separate conflict resolver.
The details of such 
final conflict resolution is outside of the scope of this paper.
Our reconciler determines the updates which
can be safely carried over to the other replica, and pinpoints the ones
conflict and require further action.
In contrast, the goal of Operational Transformation \cite{OT1}
or the Conflict-free Replicated Data Type \cite{CFRD1} is 
full synchronization where the main focus is on conflict resolution.

In terms of the usual classification of synchronizers (see, e.g.,
\cite{PV,SSH,TSR}), our approach is an operation-based one as changes in the
filesystems are modeled as the effects of specific command
sequences. The task of the update detector is to produce such a (normalized {or canonical}
command sequence, and that of the reconciliation algorithm is to identify the 
commands which can be propagated to the other replica.
The central problem during both update detection and
reconciliation is the ordering (scheduling) of these commands. If update
detection is based on comparing the original state of the system to its
new state, then one can easily collect a set of commands that can create the
differences, but they must be ordered (if at all possible) in a way that the sequence
can be applied to the initial system without causing an error.
Similarly, during reconciliation, typically many maximal propagable
command sequences exist. For more information, we refer to the excellent
survey of the so-called optimistic replication algorithms \cite{SSH}, and,
as working examples, to IceCube, where multiple orders are tested to find an
acceptable one \cite{KRSD,MPV}, or Bayou, where reconciling updates happens
by redoing them in a globally determined order \cite{TTPDSH}.

This paper continues and extends the investigation started in \cite{epcs}.
An important contribution of \cite{epcs} is the abstract notion of
hierarchical data structure and data manipulating commands which,
on the one hand, faithfully represent real
data structures and typical commands on them, and, on the other, have a rich and
intriguing algebraic structure. This structure is investigated here
in a more general setting, while 
additional structural results and characterizations are provided. Based on these theorems
it is proved that even in the extended setting the reconciliation problem
has a unique maximal solution in a very strong sense, and that the solution can
be found using a simple and efficient algorithm.

While the exposition uses the specific
terminology of filesystems and filesystem commands,
the results can form the basis of treating the synchronization of other
structured data, like \textsf{JSON} or \textsf{XML} documents,
stored in an algebraic fashion, with similarly strong
theoretical foundations.

\smallskip

The paper is organized as follows. Hierarchical data structures and
commands on which the algebraic setting is based are defined in Section
\ref{sec:definition} as \emph{filesystems} and \emph{filesystem commands}.
Typical ``real'' filesystems and filesystem commands
can be easily modeled using these abstract notions. The suggested command
set is complete in the sense that any filesystem can be transferred into any
other one by an appropriately chosen command sequence. 
Section \ref{sec:results} summarizes the main results in an informal way.
Command pairs are
considered in Section \ref{sec:pairs} where a set of (syntactical) rewriting
rules are defined. These rules define a syntactical consequence relation
among command sequences (one sequence can be derived from the other one
using the rewriting rules), which is shown to coincide with the semantic
consequence relation for an important subset of command sequences. This
subset, called \emph{simple sequences}, is introduced in Section
\ref{sec:simple-sequences} where additional properties are stated and
proved. The update detection algorithm is detailed in Section
\ref{sec:update}. In reference to rewriting systems, sequence pairs
applicable to the same filesystem are called \emph{refluent}. Understanding
their structure requires significant effort. Section \ref{sec:refluent}
proves several partial results which are used in Section
\ref{sec:reconciliation} where the uniqueness of the maximal reconciliation
is proved together with the correctness of the algorithm identifying it.
Finally Section \ref{sec:conclusion} summarizes the results and lists some
open problems.

\section{Definitions}\label{sec:definition}

Informally, a \emph{filesystem} is modeled as a function which populates
some fixed virtual namespace $\setn$ with values from a set $\setv$. 
Elements of this virtual namespace $\setn$
are \emph{nodes}, which are arranged into a tree-like structure
reflecting the hierarchical structure of the namespace. A
filesystem assigns values to each virtual node such that along each path
starting from a root there are finitely many directories, followed by an
optional file value, followed by empty values.
This virtual namespace can actually reflect the (fully qualified) names of
all imaginable files and directories. In this case, renaming a file (or a
directory) means that the contents of the file (or the whole directory subtree) is
moved from one location to another.

\subsection{Node structure}

Formally, we fix an arbitrary and possibly infinite \emph{node structure $\setn$} endowed with the
partial function $\parent:\setn\to\setn$ which returns the parent of every
non-root node (it is not defined on roots of which there might be several). 
This function must induce a tree-like structure, which means that there
must be no loops or infinite forward chains.
We say that \emph{$n$ is above $m$} and write $n\prec m$
if $n=\parent^i(m)$ for some $i\ge 1$.
As usual, $n\preceq m$ means $n\prec m$ or $n=m$, which is a partial order on 
$\setn$. Minimal elements of this partial order are  
the roots of $\setn$. 
The nodes $n$ and $m$ are \emph{comparable} if either $n\preceq m$ or $m\preceq m$,
and they are \emph{uncomparable} or \emph{independent}, written as $n\indep
m$, otherwise.

\subsection{Filesystem values, filesystems}

A filesystem populates the nodes with values from
a set $\setv$ of possible filesystem values. $\setv$ is partitioned into
directory, file, and empty values as $\setv=\D\cup\F\cup\E$. For a value
$v\in\setv$ its \emph{type}, denoted by $\tp(v)$, is the partition it
belongs to, thus it is one of $\D$, or $\F$ or $\E$.
A \emph{filesystem} is a function $\FS:\setn\to\setv$ which has the
\emph{tree property}: along any branch starting from a root there are zero
or more directory values, then zero or one file value which is followed by
empty values only.%
\footnote{This definition allows filesystems with infinitely
many non-empty nodes, requiring only that every branch is eventually
empty. The set $\F$ reflects all possible file contents together with
additional metainformation. Also, the set $\E$ of ``empty'' values is not
required to have a single element only.}
We assume that neither $\D$ nor $\E$ is empty, and $\F$
has at least two elements.
The collection of all filesystems is denoted by $\X$.

\subsection{Filesystem commands}

The set of available filesystem commands will be denoted by $\Omega$, and the
application of a command $\sigma\in\Omega$ to the filesystem $\FS\in\X$ will be
written as the left action $\sigma\FS$. If $\sigma$ is not applicable to
$\FS$ then we say that \emph{$\sigma$ breaks $\FS$}, which is denoted by $\sigma\FS=\bot$.
Thus filesystem commands are modeled as functions mapping $\X$
into $\X\cup\{\bot\}$, where $\bot$ indicates failure.

The actual command set is specified so that, on one hand, it reflects the
usual filesystem commands, and, on the other, it is more symmetric and more
uniform. It is well-known that the actual choice of the command set has
profound impact on whether reconciliation is possible or not, see
\cite{CBNR,NREC}, and one of the main contributions of \cite{epcs} is the
systematic symmetrization of the traditional filesystem command set.
Accordingly, commands in $\Omega$ are represented by triplets specifying
\begin{itemize}\setlength\itemsep{0pt}
\item a node $n\in\setn$ on which the command acts, 
\item a precondition which specifies the type of the value 
the filesystem must have at node $n$ before executing the command, and
\item a replacement value to be stored at $n$.
\end{itemize}
Such a command is applicable if the precondition holds, and preforming the
replacement does not destroy the tree property of the filesystem.

Formally, the commands in $\Omega$ are the triplets $\sigma=\(n,t,x)$
where $n\in\setn$ is a node, $t\in\{\D,\F,\E\}$
specifies the precondition by requesting the current filesystem value at $n$
to have type $t$, and $x\in\setv$ is the replacement value. 
The \emph{input} and \emph{output type} of the
command $\sigma=\(n,t,x)$ is $t$ and $\tp(x)$, respectively.
The effect of the command $\(n,t,x)$ on the filesystem $\FS\in\X$ is
defined as
$$
  \(n,t,x)\FS = \begin{cases}
     \FS_{[n\to x]} &\mbox{if}\parbox[t]{0.56\displaywidth}{
             ~$\tp(\FS(n))=t$ (precondition)\\
             and $\FS_{[n\to x]}\in\X$ (tree property);}\\
     \bot           &\mbox{otherwise,}
\end{cases}
$$
where the operator $\FS_{[n\to x]}$ changes the value of
the function $\FS$ only at $n$ to $x$ as
$$
  \FS_{[n\to x]}(m)=\begin{cases}
    x & \mbox{if $m=n$,}\\
    \FS(m) & \mbox{otherwise.}
  \end{cases}
$$
If nodes encode the used namespace, then, for example, the typical
filesystem command ``\emph{rmdir $n$}''
corresponds to the abstract command $\(n,\D,\EE)$ where $\EE\in\E$ is some
empty value; this command fails when either $n$ is
not a directory (input type mismatch), or the directory is not empty (after
the replacement the tree property is violated). Similarly, editing a file at
location $n$ is captured by the command $\(n,\F,\FF)$ where $\FF\in\F$ represents the
new file content. 

The stipulation that illegal commands break the filesystem can be
considered as a semantical check. It is a natural requirement
that no command should leave the system in an erroneous state even 
with the promise that it will be corrected later. This requirement of ``local
consistency'' is different from the Eventual
Consistency of \cite{CFRD1}, which requires that the reconciliation process
eventually results in identical copies.

Two additional filesystem commands, denoted by $\cbrk$ and $\lambda$, will be
defined. In practice they do not naturally occur, and are not elements of
$\Omega$, but they are are useful when arguing about command sequences. The
command $\cbrk$ breaks every filesystem while $\lambda$ acts as identity:
$\cbrk\FS=\bot$ and $\lambda\FS=\FS$ for every $\FS\in\X$. These commands
have no nodes, or input or output types.
Commands in $\Omega$ are denoted by $\sigma$, $\tau$ and $\omega$.

\subsection{Command categories}

A command $\(n,t,x)$ matches the pattern $\(n,\mathsf{T},\mathsf{X})$ if
$t\subseteq\mathsf T$ and $x\in\mathsf X$. Only data types $\D$, $\F$, $\E$, 
and their unions will
be used in place of $\mathsf T$ and $\mathsf X$ with the union sign omitted.
In a pattern the symbol $\cdot$  matches any value.

Depending on their input and output types commands can be partitioned into
nine disjoint classes. \emph{Structural commands} change the type of the
stored data, while \emph{transient commands} retain it. In other words,
commands matching $\(\cdot,\E,\F\D)$, $\(\cdot,\F,\E\D)$, or $\(\cdot,\D,\E\F)$
are structural commands, while those matching $\(\cdot,\F,\F)$, $\(\cdot,\E,\E)$,
or $\(\cdot,\D,\D)$ are transient ones.

Structural commands are further split into \emph{up} and \emph{down}
commands, where up commands ``upgrade'' the type from $\E$ to $\F$ to $\D$,
while down commands ``downgrade'' the type of the stored value. That is,
up commands are those matching $\(\cdot,\E,\F\D)$ and $\(\cdot,\F,\D)$,
while down commands are matching $\(\cdot,\D,\F\E)$ and $\(\cdot,\F,\E)$.
The type of the command $\sigma=\(n,t,x)$ is $\tp(\sigma)=\(n,t,\tp(x))$,
and then $\tp(\sigma)=\tp(\sigma')$ iff $\sigma$ and $\sigma'$ have the
same node, same input type and same output type.


\subsection{Command sequences}

The free semigroup generated by $\Omega$ is $\Omega^*$; this
is the set of finite command sequences including the empty sequence $\lambda$.
Elements of $\Omega^*$ act from left to right, that is, $(\alpha\sigma)\FS =
\sigma(\alpha\FS)$ where $\sigma\in\Omega$ and $\alpha\in\Omega^*$. This
definition is in full agreement with the definition of the special command
$\lambda$. A command sequence $\alpha$ breaks a filesystem $\FS$
if some initial segment of $\alpha$ breaks it. In other 
words, $(\sigma\alpha)\FS=\bot$ if either $\alpha\FS=\bot$ or
$\sigma(\alpha\FS)=\bot$.
We use $\alpha$, $\beta$ and $\gamma$ to denote command sequences.

We write $\alpha\xle\beta$ to denote that \emph{$\beta$ semantically extends
$\alpha$}, that is, $\alpha\FS=\beta\FS$ for all filesystems $\FS$ that
$\alpha$ does not break. Similarly,
$\alpha\equiv\beta$ denotes that $\alpha$ and $\beta$ are \emph{semantically
equivalent}, meaning $\alpha\FS=\beta\FS$ for all $\FS\in\X$. Clearly,
$\alpha\equiv\beta$ if and only if both $\alpha\xle\beta$ and
$\beta\xle\alpha$.
As $\cbrk$ breaks every filesystem, $\alpha\not\equiv\cbrk$ means that
$\alpha$ is defined on some filesystem.
In this case we say that $\alpha$ is a \emph{non-breaking sequence}.

For $\Delta\subseteq\Omega^*$ and $\alpha\in\Omega^*$, $\Delta\models\alpha$
denotes that for every filesystem $\FS$, if none of $\delta\in\Delta$ breaks
$\FS$, then neither does $\alpha$. This relation shares many properties of
the ``consequence'' relation used in mathematical logic.
As usual, $\Delta\models\Delta'$ means that $\Delta\models\delta$ for all
$\delta\in\Delta'$, and we
also write $\delta\models\alpha$ instead of $\{\delta\}\models\alpha$.
The following claims are immediate from the definitions.
\begin{proposition}\label{prop:models}
{\upshape(a)} If $\alpha\in\Delta$ then $\Delta\models\alpha$.
{\upshape(b)} If $\Delta\models\alpha$ and $\Delta\subseteq\Delta'$, then
$\Delta'\models\alpha$.
{\upshape(c)} If $\Delta\models\Delta'$ and $\Delta'\models\alpha$ then
$\Delta\models\alpha$.
{\upshape(d)} $\Delta\models\alpha$ if and only if $\Delta'\models\alpha$ for some
finite $\Delta'\subseteq\Delta$.
{\upshape(e)} $\alpha\beta\models\alpha$.
{\upshape(f)} If $\Delta\models\alpha$, then $\{\gamma\delta:\delta\in\Delta\}\models
\gamma\alpha$.
\qed
\end{proposition}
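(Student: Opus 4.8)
The plan is to unwind the definition of $\models$ (and of a sequence \emph{breaking} a filesystem) in each case; five of the six are purely formal one-liners. For (a): if $\alpha\in\Delta$ and no $\delta\in\Delta$ breaks $\FS$, then in particular $\alpha$ does not break $\FS$. For (b): if $\Delta\subseteq\Delta'$ and no member of $\Delta'$ breaks $\FS$, then a fortiori no member of $\Delta$ breaks $\FS$, so $\alpha$ does not break $\FS$ by $\Delta\models\alpha$. For (c): if no $\delta\in\Delta$ breaks $\FS$, then $\Delta\models\Delta'$ gives that no $\delta'\in\Delta'$ breaks $\FS$, whence $\Delta'\models\alpha$ yields that $\alpha$ does not break $\FS$.

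For (e) I would use the already-recorded fact that a sequence breaks a filesystem iff one of its initial segments does: $\alpha$ is an initial segment of $\alpha\beta$, so if $\alpha$ broke $\FS$ then $\alpha\beta$ would too, and contrapositively $\alpha\beta\models\alpha$. For (f): assume no $\gamma\delta$ with $\delta\in\Delta$ breaks $\FS$; then $\gamma$, being an initial segment of every $\gamma\delta$, does not break $\FS$, so $\gamma\FS\in\X$, and for each $\delta\in\Delta$ the sequence $\delta$ does not break $\gamma\FS$ (else $\gamma\delta$ would break $\FS$). Hence no $\delta\in\Delta$ breaks $\gamma\FS$, so $\Delta\models\alpha$ gives that $\alpha$ does not break $\gamma\FS$, i.e. $(\gamma\alpha)\FS=\alpha(\gamma\FS)\neq\bot$.

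The one claim with content is (d). Its ``if'' half is (b) with $\Delta'$ in place of $\Delta$. For ``only if'' the point to exploit is that $\alpha$ is a single \emph{fixed, finite} sequence, so its applicability to $\FS$ is decided by the input types at the finitely many nodes named along $\alpha$ together with, for the tree-property tests, the types at their parents and the (non)emptiness of their descendants; hence $\alpha$ can fail in only finitely many distinguishable ways, and I would pick for each realizable failure mode a witnessing filesystem that $\alpha$ breaks, invoke $\Delta\models\alpha$ to choose for each witness some member of $\Delta$ breaking it, and let $\Delta'$ be this finite collection. I expect the justification that this $\Delta'$ actually satisfies $\Delta'\models\alpha$ to be the main obstacle: the descendant part of a tree-property test is an existential condition over many nodes and the breaking relation is not monotone in the filesystem, so a cleaner route may be a compactness argument on the space of type profiles $n\mapsto\tp(\FS(n))$ in the product topology — though one should then check whether a mild finiteness assumption on $\setn$ is needed, since a node with infinitely many empty-or-file leaves lets one build a single command $\sigma$ and an infinite $\Delta$ with $\Delta\models\sigma$ yet no finite $\Delta'\subseteq\Delta$ satisfying $\Delta'\models\sigma$.
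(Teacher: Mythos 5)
Your treatment of (a), (b), (c), (e) and (f) is correct and is exactly what the paper intends: the paper records these as immediate from the definitions and supplies no argument, and your one-liners — in particular the observation that a sequence breaks $\FS$ iff some initial segment does, which gives (e), and the factorization $(\gamma\delta)\FS=\delta(\gamma\FS)$, which gives (f) — are complete proofs of those parts.

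For (d) your closing doubt is the substantive point, and it is well founded: the ``only if'' half is not merely awkward to justify along the lines you sketch, it is false in the generality the paper allows, because nothing in the definition of $\setn$ bounds the branching (the paper's own motivating namespace of all fully qualified names has infinitely many children per node). Concretely, let $n$ have infinitely many children $m_1,m_2,\dots$, put $\alpha=\(n,\D,\EE)$ and $\Delta=\{\(n,\D,\DD)\}\cup\{\(m_i,\E,\EE):i\ge 1\}$. If no member of $\Delta$ breaks $\FS$, then $\FS(n)$ is a directory and every child of $n$ is empty, hence by the tree property every node below $n$ is empty, so $\alpha$ does not break $\FS$; thus $\Delta\models\alpha$. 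But any finite $\Delta'\subseteq\Delta$ omits some $\(m_j,\E,\EE)$, and the filesystem with directory values at $n$ and its ancestors, a file value at $m_j$, and empty values elsewhere is broken by $\alpha$ (empty at $n$ would be followed by a file at $m_j$) yet by no member of $\Delta'$; so $\Delta'\not\models\alpha$. This also pinpoints why both of your proposed routes stall: the tree-property test of a down command genuinely quantifies over all (possibly infinitely many) descendants, so there are infinitely many distinguishable failure modes, and while the profile space $\{\D,\F,\E\}^\setn$ is compact, the set of profiles on which $\alpha$ breaks is open rather than closed (and the set of profiles of legal filesystems need not be closed because every branch must be eventually empty), so the finite-intersection-property argument does not go through. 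Hence (d) as stated needs an extra hypothesis — $\Delta$ finite (making it trivial) or a finiteness restriction on $\setn$; the paper's appeal to ``immediate from the definitions'' does not cover it. The damage is local: in the rest of the paper $\models$ is only ever applied to finite (two-element) sets, so no later result depends on (d).
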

Property (f) is an analog of the preconditioning property in logical
systems.
Observe that $\alpha$ is non-breaking iff $\alpha\nmodels\cbrk$; and
$\delta\xle\alpha$ implies $\delta\models\alpha$ as the latter only requires
that $\alpha$ is defined where $\delta$ is defined, while the former also
requires that where they are both defined their effect is the same.

For two sequences $\{\alpha,\beta\}\nmodels\cbrk$ iff there is a filesystem
on which both $\alpha$ and $\beta$ are defined.
With an eye on rewriting systems \cite{rewriting}, such a pair is called
\emph{refluent}.


\section{The results}\label{sec:results}

\subsection{Update detection and reconciliation}

A command-based reconciliation system works with two command sequences $\alpha$ and
$\beta$ that have been applied to a single filesystem $\FS$ yielding two different
replicas $\FS_1$ and $\FS_2$ which we need to reconcile. While it is
conceivable that the sequences are based on records of the executed
filesystem operations (operation-based reconciler),
in several filesystem implementations no such records
exist. In these cases the command sequences must be created by comparing
$\FS_1$ (or $\FS_2$) to $\FS$ (state-based reconciler). This process is called \emph{update
detection,} in which we also include
transforming the resulting (or provided) command sequence into
a canonical form required by reconciliation.
The term \emph{canonical sequence} is used informally for sequences 
amenable for the reconciliation process. Our first result is that
the chosen set of filesystem commands has the required expressive power:
one can always find a canonical command sequence that transforms the original filesystem
into the replica.

\begin{theorem}[Informal, update detection]\label{thm:informal-1}
{\upshape(a)} Given arbitrary filesystems $\FS_1$ and $\FS$, there exists a canonical
command sequence $\alpha$ such that $\FS_1 = \alpha\FS$. The commands in
$\alpha$ can be found by traversing $\FS$ and $\FS_1$ while searching for
different node content; the order of the commands can be found in quadratic time
in the size of $\alpha$.\\
{\upshape(b)} Given any command sequence that transforms $\FS$ to $\FS_1$,
the corresponding canonical sequence can be created from it in quadratic time.
\qed
\end{theorem}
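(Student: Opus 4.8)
The plan is to reduce everything to a concrete algorithm that walks both trees simultaneously, emits one command per node where the two filesystems disagree, and then sorts those commands into an applicable order. For part (a), first I would examine the set $N$ of nodes $n$ with $\FS(n)\neq\FS_1(n)$. Since both $\FS$ and $\FS_1$ are filesystems, every branch is eventually empty, so $N$ restricted to any branch is finite; the natural command to put at node $n$ is $\sigma_n=\langle n,\tp(\FS(n)),\FS_1(n)\rangle$, whose precondition matches $\FS$ and whose replacement installs the target value. The content claim is then: these commands can be linearly ordered into a sequence $\alpha$ with $\alpha\FS=\FS_1$. The obstruction to applicability is purely the tree property — at an intermediate stage a node may carry a directory value while its parent is still a file or empty, or a node may be nonempty below a file. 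So the ordering must respect a dependency relation between $\sigma_n$ and $\sigma_m$ when $n\preceq m$: intuitively, ``create top-down, destroy bottom-up.''

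Concretely I would argue that the right schedule is obtained by processing, for each branch, the down-commands (those lowering type) from the leaves upward first, and the up-commands (those raising type) from the root downward afterward; independent nodes may be handled in any order. The key lemma to prove is that after applying a correctly ordered prefix, the partially-updated filesystem $\FS'$ — which equals $\FS_1$ on the already-processed nodes and $\FS$ elsewhere — still satisfies the tree property, and the next scheduled command is applicable to it. This is a finite case analysis on the type of $\FS(n)$ versus $\FS_1(n)$ along the branch through $n$: one checks that any tree-property violation at an edge $(\parent(m),m)$ would force the existence of an already-processable command that should have been scheduled earlier, contradicting the ordering. Because each branch meets $N$ finitely often, the dependency relation restricted to any branch is a finite partial order and hence extends to a linear order; gluing these across branches (nodes on different branches being independent) gives the global sequence $\alpha$. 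For the complexity claim, finding $N$ and the raw commands is a single simultaneous traversal, and producing a consistent linear order is a topological sort of the dependency relation, which has $O(|\alpha|)$ edges between comparable nodes and can be sorted in time $O(|\alpha|^2)$ in the worst case (or better, but quadratic suffices for the stated bound).

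For part (b), suppose we are instead handed an arbitrary command sequence $\gamma$ with $\gamma\FS=\FS_1$. I would not try to manipulate $\gamma$ syntactically; instead I would simply compute $\FS_1=\gamma\FS$ and feed $\FS$ and $\FS_1$ into the part (a) construction. The only subtlety is the cost: $\gamma$ might be long, but the canonical sequence $\alpha$ has length at most $|N|\le|\gamma|$ (every node changed by $\gamma$ is changed by exactly one command of $\gamma$ at minimum, and unchanged nodes contribute nothing), so running the part (a) algorithm costs $O(|\gamma|)$ to extract $N$ plus $O(|\alpha|^2)=O(|\gamma|^2)$ to order it, giving the stated quadratic bound. (If one wants to avoid touching the parts of $\FS_1$ that $\gamma$ never reached, one can restrict attention to the nodes mentioned in $\gamma$; this is a routine optimization.)

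I expect the main obstacle to be the scheduling lemma in part (a): proving rigorously that the ``down bottom-up, then up top-down'' order is always realizable requires carefully tracking which tree-property violations can occur in the intermediate filesystems and showing the invariant is preserved at every step. In particular one must handle the interaction of structural commands on comparable nodes within one branch — e.g. a node being downgraded from directory to empty while a descendant is simultaneously being upgraded — and confirm that the chosen order sequences these so that no intermediate state is broken. Everything else (the traversal, the length bound, the topological sort and its quadratic running time) is bookkeeping once that invariant is in place. I would also need to pin down the precise notion of ``canonical'' used here; presumably it coincides with the simple sequences of Section~\ref{sec:simple-sequences}, and I would check that the $\alpha$ produced above already lies in that class (at most one command per node, ordered consistently with $\preceq$), or add a final normalization pass if not.
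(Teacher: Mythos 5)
Your part (a) is essentially the paper's own construction: the command set is the same (one command per differing node, with precondition type $\tp(\FS(n))$ and replacement value $\FS_1(n)$), and your ``down bottom-up, up top-down'' schedule is precisely an ordering honoring the relation $\orderrel$, obtained by a quadratic topological sort. The only difference is how applicability is justified: the paper gets it from the induction of Theorem \ref{thm:sem-compl} (swap one lowest differing node at a time) together with the Structural Theorem \ref{thm:struct}, which guarantees that \emph{every} $\orderrel$-honoring ordering of a non-breaking simple sequence has the same effect, whereas you propose a direct invariant (``the partially updated function is still a filesystem and the next scheduled command applies''). That invariant can be made to work, but in your write-up it is only sketched; carrying out the deferred case analysis is exactly the content of Theorem \ref{thm:sem-compl} and Theorem \ref{non-breaking}, not bookkeeping. (Also, your up/down dichotomy omits transient commands, which change no type and can be scheduled anywhere.)

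The genuine gap is in part (b). The claim is that the canonical sequence can be created \emph{from the given command sequence}; this is the operation-based case, in which, as Section \ref{sec:intro} puts it, the update detector ``gets only the performed operations and works on this sequence.'' Your reduction instead replays $\gamma$ on the original filesystem and re-runs part (a), so it needs $\FS$ (and hence $\FS_1$) as inputs: besides assuming data the statement does not supply, it collapses (b) into (a) and proves nothing beyond it. The paper's proof (Theorem \ref{thm:update-2}) is purely syntactic: by the Rewriting Theorem \ref{thm:rewriting} every non-breaking sequence rewrites to a simple one, and the quadratic algorithm scans backward from each command for an earlier command on the same node, commutes the two next to each other using the rules of Proposition \ref{prop:rules}, and merges them into a single command by Proposition \ref{command-pairs-1}; the resulting simple sequence is automatically in canonical ($\orderrel$-honoring) order by Theorem \ref{thm:struct}. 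Some such sequence-level normalization is what part (b) actually asserts, and it is missing from your proposal.
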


\emph{Reconciliation} is the process of merging the diverging replicas
as much as possible, and mark cases where it is not possible to do so
without further\allowbreak---usually human---\allowbreak input as \emph{conflicts.}
In our approach it means applying as many updates (commands) as possible that have been
applied to one replica to the other
without breaking the filesystem or overriding local changes.
The remaining commands are marked as conflicting updates.
Resolving these conflicts, as they require knowledge and input not available
in the filesystems, is outside the scope of the reconciliation algorithm.
More formally, $\beta'$ is a \emph{reconciler for $\alpha$ over
$\beta$} if 
\begin{itemize}\setlength\itemsep{0pt}
\item $\beta'$ consists of commands from $\beta$;
\item for any $\FS$, $\beta'$ is applicable to $\alpha\FS$ whenever both
$\alpha\FS$ and $\beta\FS$ are defined;
\item no command in $\beta'$ overrides the effect of any command from
$\alpha$.
\end{itemize}
The main result of this paper is that in this algebraic framework the 
reconciliation problem has a unique maximal solution in a very strong sense.
\begin{theorem}[Informal, reconciliation]\label{thm:informal-2}
Suppose $\alpha$ and $\beta$ are canonical sequences, and there is at least
one filesystem neither of them breaks. Then there is a maximal reconciler
$\beta'$ for $\alpha$ over $\beta$
which can be created from $\alpha$ and $\beta$ in quadratic time.
Moreover, $\beta'$ is optimal in a very strong sense:
for any sequence $\beta''$ consisting of commands of $\beta$, if
$\beta''$ contains a command not in $\beta'$, then either $\beta''$
overrides a change made by $\alpha$, or $\alpha\beta''$ breaks every
filesystem.
\qed
\end{theorem}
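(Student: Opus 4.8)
The plan is to reduce this informal statement to one sharp construction and then establish its properties using the analysis of refluent pairs carried out in Sections~\ref{sec:refluent} and~\ref{sec:reconciliation}. First I would fix ``canonical sequence'' to mean \emph{simple sequence} as in Section~\ref{sec:simple-sequences}: at most one command per node, scheduled so that every prefix is applicable where needed. For such sequences the rewriting calculus of Section~\ref{sec:pairs} makes $\xle$, $\equiv$ and $\models$ decidable by purely syntactic means, so being a ``reconciler for $\alpha$ over $\beta$'' becomes the conjunction of two combinatorial conditions on a subsequence $\gamma$ of $\beta$: that $\{\alpha,\beta\}\models\alpha\gamma$ (applicability of $\gamma$ to $\alpha\FS$ whenever $\alpha\FS$ and $\beta\FS$ are both defined), and that no command of $\gamma$ overrides a command of $\alpha$ at the level of command types. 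The theorem then says precisely that among all such $\gamma$ there is a greatest one, computable in quadratic time.

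The heart of the matter is a structural description of a refluent pair $(\alpha,\beta)$, where $\{\alpha,\beta\}\nmodels\cbrk$. Since both sequences are simple, their interaction is local to individual nodes and to ancestor--descendant pairs of nodes, and the lemmas of Section~\ref{sec:refluent} would make this precise by classifying, for each command of $\beta$, how it relates to $\alpha$: whether $\alpha$ acts on that node, with what value, and what type $\alpha$ forces on the branch through it. Using this classification I would \emph{define} $\beta'$ by a local keep/discard rule on the commands of $\beta$: discard exactly those whose interaction with $\alpha$ is an override (a command of $\alpha$ on the same node with a different replacement value) or a tree-property clash (the type $\alpha$ pins down somewhere on the branch being incompatible with the command's input or output type), while keeping, in particular, the harmless transient commands on nodes that $\alpha$ touches; then re-sort the survivors into simple form. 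That $\beta'$ is a reconciler is then checked directly: it consists of commands of $\beta$; no retained command overrides $\alpha$, by construction; and $\{\alpha,\beta\}\models\alpha\beta'$, because the structural lemmas place each retained command of $\beta$ in a position where applying it after $\alpha$ cannot break any filesystem that $\alpha$ leaves intact. Proposition~\ref{prop:models} --- in particular parts~(c), (e) and the preconditioning property~(f) --- does the bookkeeping.

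Maximality I would prove contrapositively, and this is the step I expect to dominate the effort, because the conclusion must hold for every arrangement of $\beta''$. Let $\beta''$ be any sequence built from commands of $\beta$ that contains some $\sigma$ discarded in the construction. If $\sigma$ was discarded as an override, then wherever $\sigma$ sits in $\beta''$, the sequence $\beta''$ still overrides that change of $\alpha$. Otherwise $\sigma$ was discarded for a tree-property clash on the branch through its node, and one must show $\alpha\beta''\equiv\cbrk$ outright --- that \emph{every} filesystem is broken, for \emph{every} interleaving and repetition of the commands of $\beta''$. Here the key point is that $\alpha$, being simple, fixes the offending type with a single command, and the structural results of Section~\ref{sec:refluent} show that no command available from $\beta$, hence none in $\beta''$, can be interposed so as to repair the clash without either overriding $\alpha$ or itself being a discarded command. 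Turning this into an argument valid over all orderings --- a case analysis on where a would-be repairing command must sit, reducing to the relevant sub-sequences via Proposition~\ref{prop:models}(e) and the preconditioning property~(f) --- is the delicate core of the theorem.

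Finally, the complexity claim: detecting the discarded commands takes a single pass over $\beta$ once a prior scan of $\alpha$ has recorded, for each node it touches, the type $\alpha$ forces along that branch; each keep/discard test is then amortized constant time, so this phase is linear in $|\alpha|+|\beta|$, and re-sorting the survivors into simple form is the same quadratic-time scheduling step already used for update detection in Theorem~\ref{thm:informal-1}(a). Hence $\beta'$ is produced in quadratic time, completing the plan.
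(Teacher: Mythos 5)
Your overall plan (take ``canonical'' to mean \emph{simple}, build $\beta'$ from $\beta$ by a keep/discard rule, then prove applicability and maximality) points in the right direction, but there is a genuine gap at both of the places where the real work lies. First, the keep/discard rule is left vague and in one place is wrong: the paper's reconciler is exactly $\R\beta\alpha=\{\tau\in\mintp\beta\alpha:\tau\indep\mintp\alpha\beta\}$, i.e.\ \emph{every} command of $\beta$ on a node touched by $\alpha$ is excluded (these are precisely the $\captp\beta\alpha$ commands handed to content negotiation), whereas you propose ``keeping the harmless transient commands on nodes that $\alpha$ touches''; and the clash test is against $\mintp\alpha\beta$ only, not against ``the type $\alpha$ pins down somewhere on the branch''. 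This distinction matters: a command of $\beta$ creating a file under a directory that \emph{both} sequences create is not independent of $\alpha$ as a whole, yet it must be kept, and its safety after $\alpha$ is not a local fact about that command but rests on the type-matched common part of the two sequences.

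Second, the two substantive claims are asserted rather than proved. For applicability, ``the structural lemmas place each retained command where it cannot break'' is not a proof; the paper needs the Reduction Theorem~\ref{thm:reduction} to strip off the common part $\captp\alpha\beta$/$\captp\beta\alpha$ (using that matched commands produce type-identical filesystems), Lemma~\ref{lemma-4} to show $\R\beta\alpha$ is closed under $\orderrel$-predecessors inside $\beta_2$ so it can be scheduled first, Theorem~\ref{thm:confluent} for the resulting node-disjoint independent pair, and Theorem~\ref{thm:converse} to reattach the common prefix and conclude $\{\alpha,\beta\}\models\alpha\R\beta\alpha$. For maximality you explicitly defer ``the delicate core'' (that a discarded non-overriding command forces $\alpha\beta''\equiv\cbrk$ for every ordering); the paper settles this via Lemma~\ref{lemma-3} (an $\orderrel$-relation between $\tau\in\beta'$ and $\sigma\in\mintp\alpha\beta$ would force a type-matched copy of $\tau$ in $\alpha$, contradiction) inside the proof of Theorem~\ref{thm:reconc}(b), together with the discussion after that theorem. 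As written, your proposal restates the goals of Theorem~\ref{thm:reconc} but does not supply the reduction step or the $\orderrel$-chain arguments that make them true.
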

Theorem \ref{thm:informal-1} is proved in Section \ref{sec:update} as
Theorems \ref{thm:update-1} and \ref{thm:update-2}, while Theorem
\ref{thm:informal-2} follows from Theorem \ref{thm:reconc} in Section
\ref{sec:reconciliation} and the discussion following its proof.


\subsection{Limitations and extensions}

Our model of filesystems is intentionally simple, but it turns out it is
this property that can guarantee locality and allow the highly symmetric set
of filesystem commands considered in this paper to have the necessary
expressive power. For example, the filesystem model does not consider
metadata on the nodes including timestamps or permissions. While these could
be modeled as contents of special nodes under the one they relate to, such
a model would force reconciliation to be considerably more complicated as
there would be an extra dependency between these nodes. As the parent node
(a file or directory) cannot be created or modified without also setting its
metadata, a conflict on the metadata would need to be propagated to the
parent. If metadata is to be included in the current model, it is likely to
be more fruitful to consider it part of the contents of a file or directory;
however, this would add extra burden on the conflict resolver.

Hard and symbolic links, as well as a possible \emph{move} or \emph{rename} 
command, pose
harder problems as they destroy locality. Our proofs extensively use the
fact that commands on unrelated nodes (no one is an ancestor of the other)
commute, and in the presence of links this property ceases to hold. Possible
extensions of the model that would handle links could be based on so-called
\emph{inodes}, where contents in the main file system are merely pointers to the
real file contents stored elsewhere, allowing multiple nodes to reference
the same content. Commands modifying the pointers and the set of contents
would be separated and investigated separately for conflicts.

A \emph{move} or \emph{rename} command, which, unlike the ones in our model, affects the
filesystem at two nodes at once, can prove convenient as it is easy for
human reviewers to understand and verify that no data is lost, and can move
data with minimal overhead. It could be introduced to the synchronizer system
by breaking it up into a \emph{delete} and a \emph{create} command before it is considered
by the algorithms defined in this paper. The outputs of these algorithms,
the reconciler and conflicting command sequences, could be post-processed in
turn to re-introduce \emph{move} commands by merging a \emph{delete} and a
\emph{create} command wherever possible.

Despite these limitations, our results can immediately be carried over to
any structured data that can be transformed into a tree-like structure, for
example, data in \textsf{JSON} or \textsf{XML} formats. 
See also the \emph{lens} concept in \cite{FGKPS}.

Finally, Theorem \ref{thm:informal-2} can be used as an ``advisor'' for the
final conflict resolver. Conflicting commands either modify the same node in a different way
(first case) requiring content negotiation or other priority consideration,
or one of the commands or command sequences must be rolled back (second
case). Having resolving a conflict, the Theorem can be applied repeatedly
until all conflicts are resolved.


\medskip

\section{Rules on command pairs}\label{sec:pairs}

\begin{proposition}[Command pairs 1]\label{command-pairs-1}
Suppose $\sigma,\tau\in\Omega$ are on the same node $n$. Then exactly one of the
following possibilities hold:\\
{\upshape(a)} $\sigma\tau\xle\omega$ for some $\omega\in\Omega$ also on node $n$,\\
{\upshape(b)} $\sigma\tau\equiv\cbrk$.
\end{proposition}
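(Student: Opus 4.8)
The plan is to unfold the left‑to‑right action of the length‑two sequence $\sigma\tau$ on an arbitrary filesystem and to read the dichotomy off a single type comparison. Write $\sigma=\(n,t_1,x_1)$ and $\tau=\(n,t_2,x_2)$. Since $(\sigma\tau)\FS=\tau(\sigma\FS)$, the sequence $\sigma\tau$ is applicable to $\FS$ precisely when $\tp(\FS(n))=t_1$, $\FS_{[n\to x_1]}\in\X$, $\tp(x_1)=t_2$, and $\FS_{[n\to x_2]}\in\X$, and in that case $(\sigma\tau)\FS=\FS_{[n\to x_2]}$. The structural fact I would record at the outset is that each of the two tree‑property conditions $\FS_{[n\to x_1]}\in\X$ and $\FS_{[n\to x_2]}\in\X$ constrains only the branch strictly above $n$ and the subtree strictly below $n$, and it refers to the value actually placed at $n$ only through its type; this decoupling is what lets the two conditions be satisfied simultaneously.

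Then I split on whether $\tp(x_1)=t_2$. If $\tp(x_1)\neq t_2$, then for every $\FS$ either $\sigma$ already breaks $\FS$, or $\sigma\FS=\FS_{[n\to x_1]}$ carries a value of type $\tp(x_1)\neq t_2$ at $n$, so the precondition of $\tau$ fails and $\tau$ breaks $\sigma\FS$; either way $(\sigma\tau)\FS=\bot$, i.e.\ $\sigma\tau\equiv\cbrk$, which is case (b). If $\tp(x_1)=t_2$, I would take $\omega=\(n,t_1,x_2)$, which is on node $n$, and check $\sigma\tau\xle\omega$: whenever $\sigma\tau$ does not break $\FS$, the computation above gives $\tp(\FS(n))=t_1$ and $\FS_{[n\to x_2]}\in\X$, which are exactly the conditions for $\omega$ to be applicable to $\FS$, and then both $\sigma\tau$ and $\omega$ return $\FS_{[n\to x_2]}$. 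Hence $\sigma\tau\xle\omega$, which is case (a).

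It remains to see that the two alternatives are mutually exclusive and not merely exhaustive. For this I would exhibit, in the case $\tp(x_1)=t_2$, one witnessing filesystem $\FS^\ast$ on which $\sigma\tau$ is defined: give every proper ancestor of $n$ a directory value, give $n$ a value of type $t_1$ (available since $\D$, $\F$, $\E$ are all non‑empty), and give every remaining node an empty value. Every branch of $\FS^\ast$ is then a run of directories followed by at most one non‑empty value followed by empties, so $\FS^\ast\in\X$; and since changing the value at $n$ to $x_1$ or to $x_2$ touches no other node and leaves the branch through $n$ in the same legal shape, also $\FS^\ast_{[n\to x_1]}\in\X$ and $\FS^\ast_{[n\to x_2]}\in\X$, so $\sigma\tau$ is defined on $\FS^\ast$ and hence $\sigma\tau\not\equiv\cbrk$. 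In the complementary case $\sigma\tau\equiv\cbrk$ is nowhere defined, so it cannot coincide with any $\omega\in\Omega$ on a filesystem (and the relation in (a) then holds only vacuously); the two possibilities are therefore disjoint, and since $\tp(x_1)=t_2$ either holds or not, exactly one of them applies. The only point requiring care is the tree‑property bookkeeping for $\FS^\ast$ — verifying that, whichever of $\D$, $\F$, $\E$ the relevant types happen to be, placing non‑empty values solely along the path down to $n$ keeps every branch in the prescribed ``directories, then at most one file, then empties'' form; beyond that there is no real obstacle.
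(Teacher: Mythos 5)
Your proposal is correct and follows essentially the same route as the paper's proof: split on whether the output type of $\sigma$ matches the input type of $\tau$, conclude $\sigma\tau\equiv\cbrk$ when it does not, and otherwise verify $\sigma\tau\xle\omega$ for the merged command $\omega=\langle n,t_1,x_2\rangle$ on the same node. The extra step you add --- the explicit witness filesystem $\FS^\ast$ showing $\sigma\tau\not\equiv\cbrk$ in the matching case, which underpins the ``exactly one'' claim --- is detail the paper leaves implicit, and it is carried out correctly.
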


\begin{proof}
Let the two commands be $\sigma=\(n,t,x)$ and $\tau=\(n,q,y)$, respectively.
If $\tp(x)\neq q$, then case (ii) holds. If
$\tp(x)=q$, then the combined effect of the commands is the same as
that of the command $\omega=\allowbreak\(n,t,y)$. In general only $\xle$ is 
true as $\sigma=\(n,t,x)$ could break a filesystem on which 
$\omega=\(n,t,y)$ works.
\end{proof}

To maintain the tree property, certain command pairs on successive nodes
can only be executed in a
certain order. This notion is captured by the binary relation
$\sigma\orderrel\tau$.

\begin{definition}
For a command pair $\sigma$, $\tau\in\Omega$ 
the binary relation $\sigma\orderrel\tau$ holds if the pair matches 
either $\(n,\D\F,\E)\orderrel\(\parent n,\D,\F\E)$ or $\(\parent
n,\E\F,\D) \orderrel \(n,\E,\F\D)$.
\end{definition}
Observe that $\sigma\orderrel\tau$ implies that
$\sigma$ and $\tau$ are structural commands on consecutive nodes,
and either both are up commands, or both are down commands.
Also, if $\sigma_1\orderrel\sigma_2\orderrel\sigma_3$ then all three
commands are in the same category, thus the corresponding nodes are
going up or going down. In particular, there are no commands which
would form a $\orderrel$-cycle.

\begin{definition}
Let $\sigma$ be on node $n$, and $\tau$ be on a different node $m$. The
(symmetric) binary relation $\sigma\indep\tau$ holds in the following cases:
$n$ and $m$ are uncomparable; or if $n$ and $m$ are comparable then
either the command on the higher node matches $\(\cdot,\D,\D)$, or the
command on the lower node matches $\(\cdot,\E,\E)$, or both.
\end{definition}

\begin{proposition}[Command pairs 2]\label{command-pairs-2}
Suppose $\sigma$ and $\tau$ are on different nodes.\\
{\upshape(a)} $\sigma\indep\tau$ if and only if $\sigma\tau \equiv\tau
\sigma \not\equiv\cbrk$;
\\
{\upshape(b)} if $\sigma\notindep\tau$ then $\sigma\tau\not\equiv\cbrk
\Leftrightarrow \sigma\orderrel
\tau$.
\end{proposition}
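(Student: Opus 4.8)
The plan is to analyze the two commands $\sigma$ and $\tau$ on distinct nodes $n$ and $m$ by a case split on the relative position of $n$ and $m$, in each case checking directly whether the composite $\sigma\tau$ (equivalently $\tau\sigma$, once we know they commute) breaks every filesystem or not, and whether the effect is order-independent. For part (a), first I would prove the easy direction: if $\sigma\indep\tau$, then the replacements happen at different nodes, so as operations on the underlying function $\FS$ they obviously commute provided neither precondition is disturbed by the other command and provided the tree property is never violated. The content of the $\indep$ definition is precisely chosen so that this holds: if $n,m$ are uncomparable, a change at $n$ cannot affect the type at $m$ nor the tree property along the branch through $m$, and vice versa; if they are comparable, then either the upper command is $\(\cdot,\D,\D)$ (so it leaves a directory in place, keeping the lower node legal both before and after) or the lower command is $\(\cdot,\E,\E)$ (so it leaves an empty value, which can sit under anything the upper command might produce). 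In all these situations one checks that $\sigma(\tau\FS)=\tau(\sigma\FS)$ whenever either side is defined, and that there is at least one $\FS$ on which both are defined (using the assumption that $\D,\E$ are nonempty and $\F$ has $\ge 2$ elements to build a witness), giving $\sigma\tau\equiv\tau\sigma\not\equiv\cbrk$.

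For the converse direction of (a) and for (b) I would argue the contrapositive/remaining cases together, assuming $\sigma\notindep\tau$ and determining exactly when $\sigma\tau\not\equiv\cbrk$. Since $n\ne m$, either the nodes are uncomparable — but then $\sigma\indep\tau$ automatically, contradiction, so the nodes must be comparable — or one is strictly above the other. Say $n$ is the higher node, $m=\parent^k n$ reversed, i.e. $n\prec m$ fails; let me instead name the upper node $u$ and lower node $\ell$ with $u\prec \ell$. Because $\sigma\notindep\tau$, the command on $u$ is not $\(\cdot,\D,\D)$ and the command on $\ell$ is not $\(\cdot,\E,\E)$. Now I enumerate: for $\sigma\tau$ to be applicable to some $\FS$, after performing the first command the second's precondition must still be met and the tree property must survive both steps. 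The key combinatorial fact is the tree property constraint: along the branch, once a node carries a non-directory value, everything strictly below must be empty. So if the upper command ever installs a non-$\D$ value while the lower node is non-empty, or if the lower command installs a non-$\E$ value while the upper node is non-$\D$, the sequence breaks. Running through the finitely many type-combinations for $(\tp \text{ of upper cmd output}, \tp \text{ of lower cmd input/output})$ that are \emph{not} excluded by $\indep$, one finds that a non-breaking $\FS$ exists precisely when the pair matches one of the two patterns in the definition of $\sigma\orderrel\tau$: either the lower command downgrades from $\D\F$ to $\E$ and then the upper downgrades from $\D$ to $\F\E$ (a coordinated "down" move, executable only in that order), or the upper command upgrades from $\E\F$ to $\D$ and then the lower upgrades from $\E$ to $\F\D$ (a coordinated "up" move). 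In every other non-$\indep$ case the tree property is violated on every input, so $\sigma\tau\equiv\cbrk$; and one also checks $\tau\sigma\equiv\cbrk$ in the $\orderrel$ cases, confirming that the ordering genuinely matters and that (a)'s converse holds (non-commuting-and-nonbreaking forces $\orderrel$, which is a fortiori $\notindep$).

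Concretely the step order is: (1) establish the witness-filesystem lemma — for any node and any target value, one can build a legal filesystem realizing a prescribed type at that node and its parent, using the nonemptiness hypotheses on $\D,\E,\F$; (2) prove the forward direction of (a) by the commutation check in the uncomparable and the two comparable sub-cases; (3) set up the case analysis for $\sigma\notindep\tau$, reducing to "upper node strictly above lower node" and recording which output/input types remain; (4) use the tree-property characterization to show $\sigma\tau\equiv\cbrk$ outside the two $\orderrel$ patterns, and exhibit a non-breaking witness inside them via step (1); (5) assemble: (b) is exactly step (4), and the converse of (a) follows since $\sigma\orderrel\tau$ is incompatible with $\sigma\indep\tau$ (an $\orderrel$-pair consists of structural commands, neither of the $\(\cdot,\D,\D)$ or $\(\cdot,\E,\E)$ form).

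The main obstacle is step (4): the bookkeeping of type-combinations along a branch. One must be careful that "breaks every filesystem" is about \emph{all} inputs, so ruling out $\sigma\tau\not\equiv\cbrk$ requires showing the tree property fails no matter what $\FS$ was — this is where the precise input-type preconditions do the work, since they pin down $\tp(\FS(u))$ and $\tp(\FS(\ell))$ exactly, and then the first command's output type together with the second command's requirement over-determines the branch in a way that forces a violation. It is worth isolating the tiny sub-lemma that, for a two-node branch $u\prec\ell$ with nothing else between or below relevant, legality of a value assignment is equivalent to "if the $u$-value is non-$\D$ then the $\ell$-value is in $\E$" (the rest of the branch being handled by induction or by choosing the witness to be empty elsewhere); with that in hand the enumeration is short and mechanical. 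I would also remark that the symmetry of $\indep$ makes it enough to treat "upper-above-lower" once, so no separate argument for the mirror configuration is needed.
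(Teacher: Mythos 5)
Your plan is correct and matches the paper's approach: the paper disposes of this proposition with exactly the ``tedious but straightforward case-by-case checking'' that you spell out (commutation plus a witness filesystem when $\sigma\indep\tau$; type/tree-property bookkeeping pinning the pair to the two $\orderrel$ patterns otherwise). Just make sure the breaking argument explicitly covers comparable but non-adjacent nodes, where the untouched intermediate node is what forces $\sigma\tau\equiv\cbrk$ even when the input/output types alone would fit an $\orderrel$ pattern.
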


\begin{proof}
Tedious, but straightforward case by case checking.
\end{proof}

An immediate consequence of Proposition \ref{command-pairs-2} is

\begin{proposition}\label{remark-R1}
If $\sigma$ and $\tau$ are on different nodes, then exactly one of the
following three possibilities hold: $\sigma\orderrel\tau$, or
$\tau\orderrel\sigma$, or $\sigma\tau\equiv\tau\sigma$. \qed
\end{proposition}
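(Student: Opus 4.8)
The plan is a case split on whether $\sigma\indep\tau$ holds, identifying in each case which single one of the three alternatives is true. Everything reduces to bookkeeping on top of Proposition~\ref{command-pairs-2} once two auxiliary facts are recorded: first, that $\orderrel$ admits no cycles (noted right after its definition, and in any case immediate, since $\sigma\orderrel\tau$ forces one of the two commands to act on the parent of the other's node, so $\sigma\orderrel\tau$ and $\tau\orderrel\sigma$ cannot both hold); and second, that $\sigma\orderrel\tau$ entails $\sigma\notindep\tau$.

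I would dispatch the second fact first, as it is the only step with any content. If $\sigma\orderrel\tau$, then by the remark following the definition of $\orderrel$ the commands $\sigma$ and $\tau$ are structural and sit on consecutive, hence comparable, nodes; but the definition of $\indep$ for comparable nodes demands that the command on the higher node match $\(\cdot,\D,\D)$ or the command on the lower node match $\(\cdot,\E,\E)$, and both of these patterns are transient, contradicting structurality. Since $\indep$ is symmetric, this also shows $\tau\orderrel\sigma$ entails $\sigma\notindep\tau$.

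Now the main argument. If $\sigma\indep\tau$, then by the previous step neither $\sigma\orderrel\tau$ nor $\tau\orderrel\sigma$ holds, while Proposition~\ref{command-pairs-2}(a) gives $\sigma\tau\equiv\tau\sigma$; so exactly the third alternative holds. If instead $\sigma\notindep\tau$ --- equivalently $\tau\notindep\sigma$ --- then Proposition~\ref{command-pairs-2}(b) applied to both orderings yields $\sigma\tau\not\equiv\cbrk\Leftrightarrow\sigma\orderrel\tau$ and $\tau\sigma\not\equiv\cbrk\Leftrightarrow\tau\orderrel\sigma$, and at most one of $\sigma\orderrel\tau$, $\tau\orderrel\sigma$ can hold by the no-cycle fact. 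If $\sigma\orderrel\tau$, then $\sigma\tau\not\equiv\cbrk$ while $\tau\sigma\equiv\cbrk$, so $\sigma\tau\not\equiv\tau\sigma$ and exactly the first alternative holds; the case $\tau\orderrel\sigma$ is symmetric; and if neither holds, then $\sigma\tau\equiv\cbrk\equiv\tau\sigma$, so $\sigma\tau\equiv\tau\sigma$ and exactly the third alternative holds. This exhausts all cases.

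The one point to be careful about --- and the only place the argument could go wrong --- is circularity: Proposition~\ref{command-pairs-2}(b) is only available when $\sigma\notindep\tau$, so the implication $\sigma\orderrel\tau\Rightarrow\sigma\notindep\tau$ is genuinely needed, both to close the $\sigma\indep\tau$ branch cleanly and to see that ``$\sigma\orderrel\tau$ together with $\sigma\tau\equiv\tau\sigma$'' is impossible. Beyond that, the whole proof is an unwinding of parts (a) and (b) together with the absence of $\orderrel$-cycles.
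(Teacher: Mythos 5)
Your proof is correct and follows the same route the paper intends: the paper gives no written argument, declaring the statement an immediate consequence of Proposition~\ref{command-pairs-2}, and your write-up is exactly that consequence spelled out (case split on $\sigma\indep\tau$, plus the two observations that $\orderrel$ has no two-element cycles and that $\sigma\orderrel\tau$ forces $\sigma\notindep\tau$, both of which the paper records right after the definitions). No gaps; the only extra care you took --- handling the subcase where neither $\sigma\orderrel\tau$ nor $\tau\orderrel\sigma$ holds and both products are $\equiv\cbrk$, so the commuting alternative holds vacuously --- is precisely the bookkeeping the paper leaves implicit.
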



\subsection{Rewriting rules}

Statements in Propositions \ref{command-pairs-1} and \ref{command-pairs-2}
can be considered as \emph{rewriting rules} on command sequences where the
command pair $\sigma\tau$ on the left hand side can be replaced by
one or two commands on the right hand side. Let us summarize these rewriting
rules for future use.

\begin{proposition}[Rewriting rules]\label{prop:rules}
For a command pair $\sigma\tau$,\\
{\upshape(a)} if $\sigma$, $\tau$ are on the same node, then either
$\sigma\tau\equiv\cbrk$, or 
$\sigma\tau\xle\omega$ for some $\omega\in\Omega$
which is on the same node as $\sigma$ and $\tau$ are;\\
{\upshape(b)} if $\sigma$, $\tau$ are on different nodes and
$\sigma\notorderrel\tau$, then $\sigma\tau\equiv\tau\sigma$ if
$\sigma\indep\tau$, and 
$\sigma\tau\equiv\cbrk$ otherwise.
\qed
\end{proposition}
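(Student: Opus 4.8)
The plan is to derive the ``Rewriting rules'' proposition directly from Propositions \ref{command-pairs-1} and \ref{command-pairs-2} together with the trichotomy in Proposition \ref{remark-R1}, since essentially all the content is already packaged there. Part (a) is almost verbatim Proposition \ref{command-pairs-1}: given $\sigma,\tau$ on the same node, exactly one of $\sigma\tau\equiv\cbrk$ and $\sigma\tau\xle\omega$ (with $\omega\in\Omega$ on that same node) holds, so I would simply invoke that proposition and note that the $\omega$ it produces is on node $n$ by construction.

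For part (b), I would start from the hypothesis that $\sigma$ and $\tau$ are on different nodes and $\sigma\notorderrel\tau$. The argument splits on whether $\sigma\indep\tau$ holds. If $\sigma\indep\tau$, then Proposition \ref{command-pairs-2}(a) gives immediately $\sigma\tau\equiv\tau\sigma$ (and in fact $\not\equiv\cbrk$), which is the first alternative. If $\sigma\notindep\tau$, then Proposition \ref{command-pairs-2}(b) applies: $\sigma\tau\not\equiv\cbrk \Leftrightarrow \sigma\orderrel\tau$. Since we have assumed $\sigma\notorderrel\tau$, the right-hand side fails, hence $\sigma\tau\equiv\cbrk$, which is the second alternative. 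This exhausts both cases, so (b) holds.

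I would also remark that the phrase ``these statements can be considered as rewriting rules'' is just the observation that each equivalence/extension lets one replace the two-letter word $\sigma\tau$ on the left by the shorter (one- or zero-letter, or permuted two-letter) word on the right while preserving semantics in the appropriate $\equiv$ or $\xle$ sense; no further proof is needed for that reading.

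The only mild subtlety worth flagging is bookkeeping rather than mathematics: one must make sure the case division in (b) is genuinely exhaustive and that the hypothesis $\sigma\notorderrel\tau$ is used exactly where needed. Proposition \ref{remark-R1} confirms the trichotomy $\sigma\orderrel\tau$ / $\tau\orderrel\sigma$ / $\sigma\tau\equiv\tau\sigma$ for commands on different nodes, so under $\sigma\notorderrel\tau$ we are always in one of the two situations handled above (note that $\tau\orderrel\sigma$ forces $\sigma\notindep\tau$ and $\sigma\tau\equiv\cbrk$ by \ref{command-pairs-2}(b), consistent with the claim). Hence there is no real obstacle here; the proposition is a direct corollary, and the proof is a two-line case analysis citing the two preceding propositions.
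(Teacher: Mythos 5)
Your proposal is correct and follows exactly the route the paper intends: the proposition is stated with no separate proof (just $\qed$) precisely because part (a) is Proposition \ref{command-pairs-1} restated and part (b) is the two-case reading of Proposition \ref{command-pairs-2} under the hypothesis $\sigma\notorderrel\tau$, which is your argument verbatim. Nothing is missing.
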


The syntactical rewriting rules indicated in Proposition \ref{prop:rules} fall into three patterns:
\begin{itemize}
\item[] $\sigma\tau\equiv\tau\sigma$ ($\sigma$ and $\tau$ commute);
\item[]
$\sigma\tau\equiv\cbrk$ (the pair, in this order, breaks every
filesystem);
\item[] $\sigma\tau\xle\omega$ for some single command $\omega$.
\end{itemize}
In the first two cases the rule preserves semantics, while in the
last case extends it.
To handle breaking sequences seamlessly, three 
additional rules are added expressing that $\cbrk$ is an absorbing 
element \cite{absorbing}:
$$
   \cbrk\cbrk\equiv\cbrk, \quad
   \sigma\cbrk\equiv\cbrk, \quad
   \cbrk\sigma\equiv\cbrk.
$$

\begin{definition}
For two command sequences 
$\alpha\wxle\beta$ denotes that there is a rewriting sequence 
using the above rules 
which produces $\beta$ from $\alpha$; and $\alpha\wequiv\beta$ denotes that
there is a rewriting sequence using semantic preserving rules only.
\end{definition}

Observe that $\alpha\wequiv\beta$ is not symmetric but
clearly transitive.
With an abuse of notation, we write
$\alpha\wequiv\beta$ to mean that either both $\alpha\wequiv\cbrk$ and
$\beta\wequiv\cbrk$, or $\alpha\wequiv\beta$ 
which clearly makes $\wequiv$ symmetric.
This extended notation lets us rephrase Proposition \ref{remark-R1} in terms of rewriting:

\begin{proposition}\label{remark-R1*} 
If $\sigma$ and $\tau$ are on different nodes, then exactly one of the
following three possibilities hold: $\sigma\orderrel\tau$,
$\tau\orderrel\sigma$, or $\sigma\tau\wequiv\tau\sigma$.\qed
\end{proposition}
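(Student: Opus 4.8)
The plan is to lean on Proposition~\ref{remark-R1}, which already supplies the ``exactly one'' trichotomy for the semantic relation $\equiv$, and to bridge the gap to $\wequiv$ using soundness of the rewriting calculus together with the extended convention for $\wequiv$.

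First I would handle mutual exclusivity. The point is that every rule admissible in a $\wequiv$-derivation — the commute rule $\sigma\tau\equiv\tau\sigma$, the break rule $\sigma\tau\equiv\cbrk$ of Proposition~\ref{prop:rules}(b), and the three absorbing rules for $\cbrk$ — is semantics-preserving, so $\alpha\wequiv\beta$ implies $\alpha\equiv\beta$ (both read under the extended convention, so that two derivably-breaking sequences count as equivalent on either side). Hence $\sigma\tau\wequiv\tau\sigma$ forces $\sigma\tau\equiv\tau\sigma$, and Proposition~\ref{remark-R1} then excludes both $\sigma\orderrel\tau$ and $\tau\orderrel\sigma$; moreover $\sigma\orderrel\tau$ and $\tau\orderrel\sigma$ are incompatible by the same proposition (equivalently, because $\orderrel$ admits no cycles). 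So at most one of the three alternatives holds.

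For exhaustiveness I would split on whether an $\orderrel$-relation is present. If $\sigma\orderrel\tau$ or $\tau\orderrel\sigma$ we are done. Otherwise $\sigma\notorderrel\tau$, and Proposition~\ref{prop:rules}(b) applies to the pair $\sigma\tau$: if $\sigma\indep\tau$, then $\sigma\tau$ rewrites to $\tau\sigma$ in a single step, so $\sigma\tau\wequiv\tau\sigma$ literally; if $\sigma\notindep\tau$, then $\sigma\tau$ rewrites to $\cbrk$ in one step, so $\sigma\tau\wequiv\cbrk$, and since $\indep$ is symmetric and also $\tau\notorderrel\sigma$, the same rule rewrites $\tau\sigma$ to $\cbrk$, so $\tau\sigma\wequiv\cbrk$; the extended convention then gives $\sigma\tau\wequiv\tau\sigma$ in this case as well. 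Thus exactly one of the three alternatives holds.

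There is no genuinely hard step; the only thing to be careful about is keeping the two readings of $\wequiv$ apart — the literal one, asserting a concrete semantics-preserving derivation, and the extended one, asserting merely that both sides are derivably breaking — and invoking $\sigma\tau\equiv\cbrk$ as the listed rewriting rule of Proposition~\ref{prop:rules}(b) rather than as a bare semantic fact. With that bookkeeping in place the argument is a short case analysis sitting on top of Propositions~\ref{remark-R1} and~\ref{prop:rules}.
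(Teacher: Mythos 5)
Your argument is correct and is essentially the paper's own: the paper offers no separate proof, treating the proposition as an immediate rephrasing of Proposition~\ref{remark-R1} via the rewriting rules of Proposition~\ref{prop:rules} and the extended convention for $\wequiv$, which is exactly the case analysis you spell out (one-step rewrites for exhaustiveness, soundness of the semantics-preserving rules plus the semantic trichotomy for exclusivity). Your explicit care about the literal versus extended reading of $\wequiv$ is a fair filling-in of what the paper leaves implicit.
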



\subsection{Functional completeness}

The command set $\Omega$ is sufficiently rich to allow transforming any
filesystem into any other one assuming that they differ at finitely many nodes
only. The proof is constructive, meaning that it not only proves the existence of,
but actually specifies an algorithm that creates, such a sequence.

\begin{theorem}\label{thm:sem-compl}
The command set $\Omega$ is complete in the following sense. Let $\FS_0$ and
$\FS_1$ be two filesystems differing at finitely many nodes only. There
is a command sequence $\alpha\in\Omega^*$ which transforms the first
filesystem to the other one as $\alpha\FS_0=\FS_1$.
\end{theorem}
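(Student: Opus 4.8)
The plan is to build $\alpha$ node-by-node, but to do so in an order that respects the tree property at every intermediate stage. Let $N\subseteq\setn$ be the finite set of nodes where $\FS_0$ and $\FS_1$ disagree; since filesystems are eventually empty along each branch, $N$ is finite and every node in $N$ has finite depth. The naive idea — for each $n\in N$ issue the command $\(n,\tp(\FS_0(n)),\FS_1(n))$ — fails because applying these commands in an arbitrary order can transiently violate the tree property (e.g.\ creating a file under a node that is still empty, or deleting a directory that still has children). So the real content is a \emph{scheduling} argument.

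First I would reduce to the case where all changes are ``downgrades'' followed by all ``upgrades.'' Concretely, consider the pointwise ``meet'' filesystem $\FS_{1/2}$ defined by $\FS_{1/2}(n)=\FS_0(n)$ wherever $\tp(\FS_0(n))$ is not strictly higher (in the $\E<\F<\D$ order) than $\tp(\FS_1(n))$, and an appropriate empty/file value of the lower type otherwise; one checks $\FS_{1/2}$ is a genuine filesystem (it lies ``below'' both, so no branch can fail the tree property), it differs from each of $\FS_0,\FS_1$ at finitely many nodes, and on $N$ every change $\FS_0\to\FS_{1/2}$ is type-nonincreasing while every change $\FS_{1/2}\to\FS_1$ is type-nondecreasing. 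Then $\alpha=\alpha_{\mathrm{down}}\alpha_{\mathrm{up}}$ where $\alpha_{\mathrm{down}}$ realizes $\FS_0\to\FS_{1/2}$ and $\alpha_{\mathrm{up}}$ realizes $\FS_{1/2}\to\FS_1$, so it suffices to handle these two monotone cases.

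For the monotone cases the scheduling is dictated by $\orderrel$. In the down phase, order the commands so that a command on node $n$ is issued \emph{before} the command on $\parent n$ whenever both are present (process $N$ from the leaves upward, in any linear extension of $\succ$ restricted to $N$); issue each command as $\(n,\tp(\FS_0(n)),x)$ where $x$ is the target value if $n\in N$ and otherwise no command is issued. One then verifies by induction on the number of commands already applied that the tree property is maintained: emptying or file-downgrading a node can only ever be blocked by a non-empty descendant, and by construction every descendant in $N$ has already been downgraded to its (lower) target, while descendants not in $N$ already had the target value of $\FS_1$, hence of $\FS_{1/2}$, which is compatible. The up phase is the mirror image: process $N$ from the root downward, so that a node is upgraded only after its parent has reached its (higher) target type, and again induction on applied commands shows the tree property survives each step. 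Concatenating, $\alpha\FS_0=\FS_1$.

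The main obstacle I anticipate is not the existence of the two linear orders — any linear extension of the ancestor relation works — but the careful bookkeeping in the inductive invariant for the tree property, in particular handling the nodes \emph{not} in $N$ that lie on branches through $N$: one must argue that these ``untouched'' nodes never obstruct a command, which is exactly where the hypothesis $\FS_0,\FS_1\in\X$ (both already satisfy the tree property) and the definition of $\FS_{1/2}$ as lying below both are used. A secondary subtlety is the possibility that $\FS_0(n)$ and $\FS_1(n)$ differ but have the \emph{same} type; such a node is a transient change, it is unaffected by the down/up split (it can be placed in either phase), and it never threatens the tree property, so it may simply be appended at the end. Everything else is the ``tedious but straightforward'' case analysis already invoked for Propositions~\ref{command-pairs-1} and~\ref{command-pairs-2}.
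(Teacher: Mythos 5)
Your plan is correct, but it takes a genuinely different route from the paper. The paper argues by induction on the number of differing nodes: it picks a \emph{lowest} node $n$ where $\FS_0$ and $\FS_1$ disagree, observes that of the two one-node modifications ($\FS_0$ changed at $n$ to the new value, or $\FS_1$ changed at $n$ back to the old value) the one that \emph{downgrades} the type at $n$ always keeps the tree property (because the two filesystems agree below $n$), and accordingly either prepends or appends the single command for $n$ to the sequence obtained inductively. You instead construct an explicit intermediate filesystem $\FS_{1/2}$, the pointwise type-minimum of $\FS_0$ and $\FS_1$, and schedule all downgrades leaves-upward followed by all upgrades root-downward; your verification of the two inductive invariants is sound (the parent of an untouched downgraded node is still a directory, descendants already sit at their $\FS_{1/2}$ values, and dually in the up phase), and your observation that same-type edits can go anywhere is fine since they never affect types along a branch. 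What each buys: the paper's recursion is shorter and immediately yields the remark following the theorem, namely that one can take exactly one command per differing node, i.e.\ a simple sequence; your two-phase schedule is more global and in effect exhibits directly the ordering honoring $\orderrel$ that the update-detection algorithm of Section \ref{sec:update} relies on (down commands child-before-parent, up commands parent-before-child), so it anticipates Theorem \ref{thm:struct}. One small point: to recover the one-command-per-node conclusion you should define $\FS_{1/2}(n)=\FS_1(n)$ (not merely ``an appropriate value of the lower type'') at nodes where the type strictly drops, otherwise such a node needs a structural command in the down phase and a further transient edit later, which is harmless for the theorem as stated but weaker than what the paper's construction delivers.
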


\begin{proof}
By induction on the number of nodes $\FS_0$ and $\FS_1$ differ. If this
number is zero, let $\alpha$ be $\lambda$. Otherwise let $n$ be 
one of the lowest
nodes where $\FS_0(n)=x_0$ and $\FS_1(n)=x_1$ differ, that is, where $\FS_0(m)=
\FS_1(m)$ for every node $m$ below $n$.

Let $\FS'_0$ be $\FS_{0[n\to x_1]}$
and $\FS'_1$ be $\FS_{1[n\to x_0]}$.
Clearly, the number of nodes at which $\FS'_0$ and $\FS_1$ 
differ is one less.
If $\FS'_0$ is not broken, then the induction gives $\alpha'$ for which $\alpha'\FS'_0=\FS_1$, 
and we set $\alpha$ to $\(n,\tp(x_0),x_1)\alpha'$. This does not break the
filesystem because $\FS'_0=\(n,\tp(x_0),x_1)\FS_0$. 
Similarly, if $\FS'_1$ is not broken, then the induction gives $\alpha'\FS_0=\FS'_1$
for some $\alpha'$, and we set $\alpha$ to $\alpha'\(n,\tp(x_0),x_1)$.

It remains to show that either $\FS'_0\in\X$ or $\FS'_1\in\X$, which holds 
if the corresponding function has the tree property. It is trivial if $x_0$ and $x_1$ have the same data
type. Otherwise, as $\FS_0$ and $\FS_1$ have the same values below $n$, 
the filesystem in which the value at $n$ is downgraded 
(in the sense that $\D>\F>\E$)
will retain the tree property.
\end{proof}

Actually, a stronger statement has been proved. The command sequence
transforming $\FS_0$ to $\FS_1$ consists of the commands
$$
  \{  \(n,\tp(\FS_0(n)),\FS_1(n)) \,: n\in\setn \mbox{ and } \FS_0(n)\neq\FS_1(n) \}
$$
in some order. In particular, each command in $\alpha$ is on a different node.


\section{Simple sequences}\label{sec:simple-sequences}

Since rewriting rules are semantically correct, $\alpha\wequiv\beta$ implies
$\alpha\equiv\beta$, and $\alpha\wxle\beta$ implies $\alpha\xle\beta$. The
natural question arises whether this set of rewriting rules is
\emph{complete}, meaning that the converse implication also holds:
$\alpha\equiv\beta$ implies $\alpha\wequiv\beta$. We will prove in Theorem
\ref{thm:simple-completeness} that this is indeed the case for the special
class of \emph{simple sequences}.

\begin{definition}
A finite command sequence is \emph{simple} if it contains at most one command on
each node.
\end{definition}

For example, the command sequence $\(n,\E,\FF)\cdots\(n,\F,\FF')$ creates a file at node $n$ 
and then later edits it. It is \emph{not} simple as it touches the node $n$
twice. The sequence $\(n,\D,\EE)\cdots\(n,\E,\FF)$ deletes the directory at $n$,
and then later creates a file there; it is also not simple. The following sequence
deletes the three files under a directory at $n$, and then deletes the directory itself:
\begin{equation}\label{eq:1}
   \(n_1,\F,\EE)\(n_2,\F,\EE)\(n_3,\F,\EE)\(n,\D,\EE).
\end{equation}
This sequence is simple if the nodes $n_i$ are
different and $\parent n_i=n$.

Simple sequences form a semantically rich subset of $\Omega^*$ as
every command sequence can be turned into a simple one while extending its
semantics (Theorem \ref{thm:rewriting}). At the same time non-breaking
simple sequences have strong structural properties (Theorem
\ref{thm:struct}), which makes them suitable for reconciliation.

\begin{definition}
(a) The simple sequence $\alpha\in\Omega^*$ \emph{honors $\orderrel$} provided
that if two
commands $\sigma$ and $\tau$ from $\alpha$ satisfy $\sigma\orderrel\tau$,
then $\sigma$ precedes $\tau$ in $\alpha$.

\noindent (b)
The command $\sigma\in\alpha$ is
a \emph{leader in $\alpha$} if it is $\orderrel$-minimal, that is, there is
no $\tau\in\alpha$ for which $\tau\orderrel\sigma$. In particular, each transient
command is a leader.
\end{definition}

As an example, in sequence (\ref{eq:1}) the commands
$\(n_i,\F,\EE)$ are leaders, while the last command is not. In the sequence
$$
   \(\parent n,\E,\DD) \(n,\E,\DD') \(n_1,\E,\FF_1)\(n_2,\E,\FF_2)\(n_3,\E,\FF_3)
$$
creating a directory $\DD$ at $\parent n$, a directory $\DD'$ under $\DD$ and three files
under $\DD'$, the only leader is the first command.

\begin{lemma}\label{lemma-7}
{\upshape(a)} Suppose for a simple sequence $\alpha=\alpha_1\sigma\alpha_2$ where $\sigma$ is a
leader in $\alpha$. If $\alpha\not\wequiv\cbrk$, then $\alpha\wequiv\sigma\alpha_1\alpha_2$.\\
{\upshape(b)} If the simple sequence $\alpha$ does not honor $\orderrel$, then
$\alpha\wequiv\cbrk$.\\
{\upshape(c)} Let $\beta$ be a permutation of the simple sequence $\alpha\not\wequiv\cbrk$ such that
$\beta$ honors $\orderrel$. Then $\alpha\wequiv\beta$.
\end{lemma}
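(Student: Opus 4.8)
The plan is to prove the three parts in order, since each builds on the previous one.

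For part (a), the strategy is to move $\sigma$ to the front of $\alpha$ one step at a time, swapping it leftward past each command of $\alpha_1$. Write $\alpha_1 = \tau_1\tau_2\cdots\tau_k$; I want to show $\sigma$ can be commuted past $\tau_k$, then $\tau_{k-1}$, and so on, i.e. that $\tau_i\sigma \wequiv \sigma\tau_i$ for each $i$. By Proposition~\ref{remark-R1*}, for commands on different nodes the only obstruction to $\tau_i\sigma\wequiv\sigma\tau_i$ is $\tau_i\orderrel\sigma$ or $\sigma\orderrel\tau_i$; the former is excluded because $\sigma$ is a leader in $\alpha$, and the latter would force a genuine reordering. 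For commands on the same node, Proposition~\ref{prop:rules}(a) applies. I would argue that if any of these swaps is blocked — either because $\sigma\orderrel\tau_i$, or because $\tau_i\sigma\equiv\cbrk$ while $\sigma\tau_i\not\equiv\cbrk$, or because $\sigma,\tau_i$ share a node and $\tau_i\sigma$ must be fused — then one can localize a sub-sequence of $\alpha$ that is already $\wequiv\cbrk$; since $\cbrk$ is absorbing, this would give $\alpha\wequiv\cbrk$, contradicting the hypothesis. The delicate point is the same-node case: if $\sigma$ and $\tau_i$ are on the same node, simplicity of $\alpha$ forbids this, so that case cannot arise at all — this is exactly where "simple" is used. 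So the only real possibilities are commuting past $\tau_i$ or hitting a $\cbrk$; pushing $\sigma$ all the way to the front yields $\sigma\tau_1\cdots\tau_k\alpha_2 = \sigma\alpha_1\alpha_2$.

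For part (b), suppose $\alpha$ does not honor $\orderrel$: there are $\sigma\orderrel\tau$ in $\alpha$ with $\tau$ occurring before $\sigma$. I would take such a pair with $\tau$ as far right as possible (equivalently, choose $\sigma,\tau$ so that the block between them is minimal, or take $\sigma$ to be a leader witnessing the violation — note every command lies $\orderrel$-above some leader, and one can replace $\sigma$ by a leader $\sigma_0\orderrel^*\sigma$ that still occurs after $\tau$, using that $\orderrel$ has no cycles as remarked after the definition). Assume for contradiction $\alpha\not\wequiv\cbrk$. Apply part (a) to bring the leader $\sigma_0$ to the front: $\alpha \wequiv \sigma_0\alpha'$ where $\alpha'$ is the rest. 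But wait — this requires $\sigma_0$ to be a leader in $\alpha$, which it is. Now $\tau$ is somewhere in $\alpha'$, and we still have (some command $\orderrel$-related to) the configuration violating the order. The cleaner route: if $\sigma$ is a leader in $\alpha$ and appears after $\tau$ with $\tau\orderrel\sigma$ — impossible, because by (a) we'd move $\sigma$ before $\tau$, but the point of honoring is about the original order. Let me instead argue directly: pick the violating pair $\tau\orderrel\sigma$ with the number of commands strictly between them minimal. I claim every command $\rho$ strictly between $\tau$ and $\sigma$ satisfies $\rho\indep\sigma$: if not, then $\rho$ is on a comparable-or-equal node and (same node excluded by simplicity) Proposition~\ref{remark-R1*} gives $\rho\orderrel\sigma$ or $\sigma\orderrel\rho$; the first contradicts minimality (the pair $\rho\orderrel\sigma$ is closer), and the second, combined with $\tau\orderrel\sigma$... here I'd use the structural remark that $\orderrel$-related commands lie on consecutive nodes going the same direction, to derive a contradiction or a closer pair. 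Once every intervening $\rho$ commutes with $\sigma$, I can slide $\sigma$ left to sit immediately after $\tau$, obtaining $\alpha\wequiv\alpha_1\tau\sigma\alpha_2$; then $\tau\sigma$ with $\sigma\orderrel\tau$... but note $\tau\orderrel\sigma$, not $\sigma\orderrel\tau$, so the wrong order is $\tau\sigma$, and by Proposition~\ref{command-pairs-2}(b) (since $\tau\notindep\sigma$ as they're on consecutive nodes with the specific pattern) we get $\tau\sigma\equiv\cbrk$ — wait, I need $\tau\sigma\wequiv\cbrk$, which holds because $\tau\notorderrel\sigma$ (as $\sigma\orderrel\tau$ holds and $\orderrel$ has no 2-cycles) and $\tau\notindep\sigma$, so Proposition~\ref{prop:rules}(b) gives $\tau\sigma\equiv\cbrk$, hence the rewriting rule $\tau\sigma\equiv\cbrk$ applies and then absorption gives $\alpha\wequiv\cbrk$, the contradiction.

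For part (c), given that $\beta$ is a permutation of $\alpha\not\wequiv\cbrk$ honoring $\orderrel$, I would induct on the length of $\alpha$. Let $\sigma$ be the first command of $\beta$. Since $\beta$ honors $\orderrel$ and $\sigma$ is its first command, there is no $\tau\in\beta$ with $\tau\orderrel\sigma$; as $\beta$ is a permutation of $\alpha$, the same holds in $\alpha$, so $\sigma$ is a leader in $\alpha$. Write $\alpha = \alpha_1\sigma\alpha_2$; by part (a), $\alpha\wequiv\sigma\alpha_1\alpha_2$. Now $\alpha_1\alpha_2$ is a simple sequence, it is a permutation of $\beta'$ (the tail of $\beta$ after $\sigma$), and $\beta'$ honors $\orderrel$; also $\alpha_1\alpha_2\not\wequiv\cbrk$ since otherwise $\sigma\alpha_1\alpha_2\wequiv\cbrk$ by absorption, contradicting $\alpha\not\wequiv\cbrk$. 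By the induction hypothesis $\alpha_1\alpha_2\wequiv\beta'$, hence $\sigma\alpha_1\alpha_2\wequiv\sigma\beta' = \beta$ (prepending a fixed command to a semantic-preserving rewriting sequence is still a semantic-preserving rewriting sequence), and therefore $\alpha\wequiv\beta$ by transitivity.

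The main obstacle I anticipate is part (b): correctly choosing the witnessing violating pair and verifying that every command strictly between them commutes with $\sigma$, which forces a careful case analysis using the structural facts recorded after the definitions of $\orderrel$ and $\indep$ (consecutive nodes, same up/down direction, no $\orderrel$-cycles). Parts (a) and (c) are comparatively routine once (a)'s swap-or-hit-$\cbrk$ dichotomy is established, with the crucial use of simplicity being precisely that two commands to be swapped are never on the same node, so Proposition~\ref{remark-R1*} always applies.
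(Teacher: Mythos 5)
Parts (a) and (c) of your proposal are sound and essentially follow the paper's route: in (a) simplicity rules out the same-node case, leadership rules out $\tau_i\orderrel\sigma$, and the only remaining obstruction $\sigma\orderrel\tau_i$ makes the adjacent pair $\tau_i\sigma$ rewrite to $\cbrk$, contradicting $\alpha\not\wequiv\cbrk$; in (c) your induction (pull the first command of $\beta$, which is a leader, to the front of $\alpha$ via (a)) is a mirror image of the paper's argument and works.

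Part (b), which you yourself flag as the delicate step, has a genuine gap. Your central claim --- that for a violating pair $\sigma\orderrel\tau$ (with $\tau$ occurring first) chosen with a minimal number of intervening commands, every intervening $\rho$ satisfies $\rho\indep\sigma$ --- is false. Take down commands $\tau=\(\parent n,\D,\FF)$, $\rho=\(m,\F,\EE)$ with $\parent m=n$, and $\sigma=\(n,\D,\EE)$, in the order $\tau\,\rho\,\sigma$. Then $\rho\orderrel\sigma\orderrel\tau$, the only violating pair is $(\sigma,\tau)$, yet the intervening $\rho$ is not independent of $\sigma$, and no commutation rule lets you slide $\sigma$ past $\rho$. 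Your disposal of the sub-case $\rho\orderrel\sigma$ (``contradicts minimality'') is invalid: the closer pair $(\rho,\sigma)$ occurs in the correct order, so it is not a violating pair and minimality says nothing about it. You also drop the third alternative of Proposition~\ref{remark-R1*}: $\rho\notindep\sigma$ with neither $\rho\orderrel\sigma$ nor $\sigma\orderrel\rho$ (e.g.\ structural commands on comparable nodes two levels apart), where again no swap is available. Consequently the plan of making $\sigma$ adjacent to $\tau$ and applying $\tau\sigma\wequiv\cbrk$ cannot be carried out in general. The paper's proof avoids this by changing the target pair: scan the segment between $\tau$ and $\sigma$ from the right; whenever the neighbouring command fails to commute with the current one, either the adjacent pair already rewrites to $\cbrk$ (done), or it is $\orderrel$-below it, building a chain $\sigma_k\orderrel\cdots\orderrel\sigma_1\orderrel\sigma\orderrel\tau$; the pair that finally breaks is $\tau$ against the top of the chain, $\tau\sigma_k\wequiv\cbrk$, since both are structural commands on comparable nodes and $\tau\notorderrel\sigma_k$ because $\orderrel$ has no cycles. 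In the example above the breaking pair is $\tau\rho$, not $\tau\sigma$ --- a pair your sketch never reaches.
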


\begin{proof}
All statements are consequences of the facts that exactly one of
$\sigma\tau\wequiv\tau\sigma$, $\sigma\orderrel\tau$ and
$\tau\orderrel\sigma$ holds (Proposition \ref{remark-R1*}), together with
$\tau\sigma\wequiv\cbrk$ when $\sigma\notindep\tau$ and
$\tau\notorderrel\sigma$ (Proposition \ref{command-pairs-2}).

(a) Let $\tau$ be the last command in $\alpha_1$. Then $\tau\sigma\wequiv\sigma\tau$ as we cannot
have neither $\tau\orderrel\sigma$ (as $\sigma$ is a leader), nor
$\sigma\orderrel\tau$ (as $\alpha\not\wequiv\cbrk$).

(b) Let $\sigma\orderrel\tau$ and consider the simple sequence
$\tau\alpha\sigma$. Let the last command in $\alpha$ be $\sigma_1$. If it commutes with $\sigma$, then swap
them, and continue investigating the remaining commands between $\tau$ and
$\sigma$. If they do not, $\sigma_1\orderrel\sigma$ must hold as otherwise
$\tau\alpha\sigma$ would rewrite to $\cbrk$. Then, if $\sigma_1$ commutes with
$\sigma_2$, the
command before it, swap them as before, otherwise $\sigma_2\orderrel\sigma_1$.
Ultimately the command we get following $\tau$ is $\sigma_k$ for some $k$. Now
$\tau$ and $\sigma_k$ must be on comparable nodes; both of them are structural
commands (thus $\tau\notindep\sigma_k$), and $\tau\notorderrel \sigma_k$
as there are no $\orderrel$-cycles, therefore $\tau\sigma_k\wequiv\cbrk$.

(c) By induction on the length of $\alpha$. Let $\sigma$ be a leader in
$\alpha$, then $\alpha\wequiv\sigma\alpha_1$ by (a). Let
$\beta=\beta_1\sigma\beta_2$. We claim $\tau\sigma\wequiv\sigma\tau$ for all
$\tau\in\beta_1$. It is so as $\tau\notorderrel\sigma$ ($\sigma$ is a
leader), and $\sigma\notorderrel\tau$ (since $\beta$ honors $\orderrel$). Thus
$\beta\wequiv\sigma\beta_1\beta_2$, and the induction on $\alpha_1$ and
$\beta_1\beta_2$ gives the claim.
\end{proof}

\begin{lemma}\label{lemma-1}
Let $\alpha\not\wequiv\cbrk$ be a simple sequence which contains
the commands $\sigma$ and $\tau$ on (comparable) nodes $n$ and $m$ such that
$\sigma\notindep\tau$.
Then $\sigma$ and $\tau$ are structural commands, 
and $\alpha$ contains structural commands on all nodes between $n$ and $m$.
\end{lemma}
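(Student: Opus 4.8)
The plan is to argue by contradiction along the chain of nodes between $n$ and $m$, using Lemma~\ref{lemma-7} to normalize $\alpha$ and then Proposition~\ref{command-pairs-2}(b) to force a break. First note that since $\sigma\notindep\tau$ and $\sigma,\tau$ are on comparable nodes, Proposition~\ref{command-pairs-2}(b) already tells us $\sigma\tau\not\equiv\cbrk$ forces $\sigma\orderrel\tau$, hence (by the definition of $\orderrel$) both $\sigma$ and $\tau$ are structural commands on a path, either both up or both down. By symmetry (replacing $\alpha$ by a suitable ``reversed'' perspective is not available, so instead) assume without loss of generality that $n\preceq m$ and both are down commands, so $\sigma$ matches $\(\cdot,\D,\F\E)$ or $\(\cdot,\F,\E)$ at the higher node and $\tau$ matches $\(\cdot,\D,\F\E)$ or $\(\cdot,\F,\E)$ at the lower node; the up case is completely analogous.

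Next I would enumerate the nodes on the path: let $n=p_0\prec p_1\prec\cdots\prec p_k=m$ be the chain from $n$ down to $m$, where $p_{i}=\parent p_{i+1}$. The claim to prove is that for each intermediate $p_i$ ($0<i<k$) there is a command of $\alpha$ on $p_i$ and it is structural. Suppose not: pick the smallest $i$ with $0<i<k$ such that $\alpha$ either has no command on $p_i$ or has a transient command there. I want to derive $\alpha\wequiv\cbrk$, contradicting the hypothesis. The key observation is a ``type-tracking'' argument: because $\alpha$ is simple, each node has at most one command, and by Lemma~\ref{lemma-7}(c) we may replace $\alpha$ by a semantically equivalent permutation $\beta$ that honors $\orderrel$ (this is legitimate since $\alpha\not\wequiv\cbrk$ guarantees, via Lemma~\ref{lemma-7}(b), that a $\orderrel$-honoring permutation exists and is $\wequiv$-equivalent). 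In $\beta$, consider the filesystem state just before $\tau$ is applied: along the branch through $m$, the tree property forces the value at each $p_i$ to be a directory whenever there is anything non-empty strictly below it at execution time, and $\sigma$ acting on node $p_0=n$ downgrades it away from $\D$. I would make this precise by showing that if some intermediate node $p_i$ is never upgraded to $\D$ by a structural command of $\alpha$ lying appropriately in the order, then at the moment $\tau$ (the down command on the lowest node $m$) or $\sigma$ executes, the tree property is violated, i.e.\ some initial segment of $\beta$ breaks every filesystem, so $\beta\wequiv\cbrk$ and hence $\alpha\wequiv\cbrk$ --- a contradiction. Conversely, a structural command on $p_i$ that upgrades it to $\D$ must itself be preceded (to keep the tree property) by structural commands on $p_{i+1},\dots$, which is exactly how the chain of structural commands propagates; this also forces the $\orderrel$-chain $\sigma\orderrel\cdots\orderrel\tau$ to be realized inside $\alpha$, pinning down that every $p_i$ carries a structural command.

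The main obstacle I expect is the bookkeeping in the type-tracking step: one has to track, for each node on the path, what type it holds at the relevant execution point, knowing only that $\alpha$ is simple (so at most one command per node) and that $\alpha$ is non-breaking. The cleanest route is probably to reduce to the $\orderrel$-honoring permutation $\beta$ first, then to peel commands off the front of $\beta$ using Lemma~\ref{lemma-7}(a) to bring leaders forward, and argue about the ``residual'' filesystem type at each $p_i$; the crux is showing that a missing or transient command at some intermediate $p_i$ necessarily produces a tree-property violation either when the command below it fires (because you cannot have a non-directory with a non-empty descendant) or when $\sigma$ fires at the top. I would handle the up case and the down case separately but note they are mirror images under reversing the roles of $\E$ and $\D$. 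Once the violation is exhibited as an initial segment breaking every filesystem, Proposition~\ref{prop:rules} and the absorbing rules for $\cbrk$ give $\alpha\wequiv\cbrk$, completing the contradiction.
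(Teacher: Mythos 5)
The core of your plan has a genuine gap: it argues semantically where the lemma is syntactic. The hypothesis is only $\alpha\not\wequiv\cbrk$, yet your type-tracking step reasons about ``the filesystem state just before $\tau$ is applied,'' which presupposes that some filesystem exists on which $\alpha$ (or its permutation $\beta$) can be executed, i.e.\ $\alpha\not\equiv\cbrk$. That implication is not available at this point: it is essentially Theorem~\ref{non-breaking} and part (a) of Theorem~\ref{thm:simple-completeness}, which are proved \emph{after} this lemma and whose proofs rest on Theorem~\ref{thm:struct}(c) and Proposition~\ref{claim:add-one}, which in turn are direct consequences of the lemma you are proving -- so invoking them here is circular. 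The same problem reappears at your closing step: exhibiting ``an initial segment that breaks every filesystem'' gives only the semantic statement $\alpha\equiv\cbrk$; to get $\alpha\wequiv\cbrk$ from Proposition~\ref{prop:rules} and the absorbing rules you must actually produce, by explicit commutations, two \emph{adjacent} commands matching a breaking pattern. Passing from $\equiv\cbrk$ to $\wequiv\cbrk$ in general is precisely the completeness direction, again not yet proved. As written, your argument at best establishes a semantic variant of the lemma, not the stated one.

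There is also a flaw at the very start: Proposition~\ref{command-pairs-2}(b) concerns the two-command sequence $\sigma\tau$, and since $\sigma$ and $\tau$ are in general separated by other commands of $\alpha$, you have no information that $\sigma\tau\not\equiv\cbrk$; moreover, when $n$ and $m$ are not parent and child, $\sigma\orderrel\tau$ cannot hold at all (the relation is defined only on consecutive nodes), so the proposition would give $\sigma\tau\equiv\cbrk$ rather than structurality. Hence you cannot conclude up front that $\sigma$ and $\tau$ are structural -- that is part of the conclusion to be proved. The paper avoids all of this by staying purely on the rewriting side: take a shortest counterexample, note it must begin with $\sigma$ and end with $\tau$, and observe via Proposition~\ref{remark-R1*} that the first two commands either commute (yielding a shorter counterexample) or satisfy $\sigma\orderrel\sigma'$, which simultaneously forces $\sigma$ to be structural and moves the problem one node closer to $m$; no filesystem is ever constructed. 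To salvage your route you would have to first prove, independently of this lemma, that $\alpha\not\wequiv\cbrk$ yields a filesystem $\alpha$ does not break -- a substantial missing piece.
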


\begin{proof}
If $n$ and $m$ are immediately related, the claim is a consequence of
Proposition \ref{remark-R1*}.

If not, let $\alpha$ be the shortest counterexample to this claim.
Then $\alpha$ must
start with $\sigma$ and end with $\tau$ (we can assume
this order), as otherwise a shorter counterexample would exist.
Also, $\alpha$ must contain more than two commands, as
by Proposition \ref{command-pairs-2}
it would otherwise rewrite to $\cbrk$.

Of $\sigma$ and $\tau$, we consider the command that is on the lower node.
If it is $\sigma$, isolate the first two commands in
$\alpha=\sigma\sigma'\beta$.
If $\sigma\sigma'\wequiv\sigma'\sigma$, then $\sigma\beta$ would be a
shorter counterexample.
Consequently, by Proposition
\ref{remark-R1*}, $\sigma\orderrel\sigma'$. It means that $\sigma$ is a
structural command, and $\sigma'$ is a structural
command on an immediate relative of $n$ which is still below $m$.
Therefore $\sigma'\notindep\tau$, and $\sigma'\beta$ would be
be a shorter counterexample. 
If $\tau$ is on the lower node, we isolate the last two commands in
$\alpha=\beta\tau'\tau$ and proceed in a similar fashion.
\end{proof}

\begin{theorem}[Rewriting theorem]\label{thm:rewriting}
For each command sequence $\alpha$ either $\alpha\wequiv\cbrk$ or
there is a simple sequence $\alpha^*$ such that $\alpha\wxle\alpha^*$.
\end{theorem}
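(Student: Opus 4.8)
The plan is to proceed by induction on the length of $\alpha$, turning $\alpha$ into a simple sequence one repeated node at a time by pushing two occurrences of commands on the same node together and collapsing them with the single-node rule (Proposition~\ref{prop:rules}(a)). Concretely, if $\alpha$ is already simple we are done; if $\alpha\wequiv\cbrk$ we are done. Otherwise pick a node $n$ that carries at least two commands of $\alpha$, and among all such pairs choose one that is ``closest'' in $\alpha$, say $\alpha = \alpha_1\,\sigma\,\alpha_2\,\tau\,\alpha_3$ with $\sigma$ and $\tau$ both on $n$ and no command of $\alpha_2$ on $n$. The goal is to show $\sigma\alpha_2\wequiv\alpha_2\sigma$ (equivalently that $\sigma$ commutes past every command of $\alpha_2$), so that the two $n$-commands become adjacent; then Proposition~\ref{prop:rules}(a) replaces $\sigma\tau$ by a single command $\omega$ on $n$ (here the relation weakens from $\wequiv$ to $\wxle$, which is exactly why the theorem only claims $\wxle\alpha^*$), strictly decreasing the length, and the induction hypothesis finishes the job.

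So the heart of the argument is: if $\sigma$ and $\tau$ lie on the same node $n$, $\alpha = \alpha_1\sigma\alpha_2\tau\alpha_3 \not\wequiv\cbrk$, and no command of $\alpha_2$ is on $n$, then every $\rho\in\alpha_2$ commutes with $\sigma$ (as a rewriting step, $\sigma\rho\wequiv\rho\sigma$). Suppose not. By Proposition~\ref{remark-R1*}, for the first offending $\rho$ we have either $\sigma\orderrel\rho$ or $\rho\orderrel\sigma$; the case $\sigma\notindep\rho$ with neither $\orderrel$ holding is excluded since that forces $\sigma\rho\wequiv\cbrk$ and hence $\alpha\wequiv\cbrk$. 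In either $\orderrel$ case, $\sigma$ is a structural command on $n$ and $\rho$ is a structural command on an immediate relative of $n$. I would like to conclude via Lemma~\ref{lemma-1} applied to $\sigma$ and $\tau$: these are commands on the same node $n$, and if both are structural then Lemma~\ref{lemma-1}'s mechanism (or rather its proof technique, extended to the equal-node case) should say that $\alpha\wequiv\cbrk$, since two structural commands on the same node with a structural command pushing ``outward'' on a neighbour cannot coexist in a non-breaking sequence. More carefully: having $\sigma\orderrel\rho$ means $\sigma$ downgrades $n$ (or upgrades, dually) and $\rho$ acts on $\parent n$ or a child; tracking the type of $n$ between $\sigma$ and $\tau$ shows $\tau$'s precondition on $n$ is then incompatible, or a $\orderrel$-obstruction surfaces, forcing a $\cbrk$-rewrite. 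The bookkeeping here is the genuine content, and it mirrors the case analysis already done in Lemmas~\ref{lemma-7} and~\ref{lemma-1}.

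I expect the main obstacle to be precisely this structural case-check — showing that a repeated node forces either commutativity of the intervening commands with $\sigma$ or else $\alpha\wequiv\cbrk$. The clean way to package it is to first establish an auxiliary claim, analogous to Lemma~\ref{lemma-1} but for equal nodes: \emph{if $\alpha\not\wequiv\cbrk$ is any command sequence containing commands $\sigma,\tau$ on the same node $n$ with no command on $n$ strictly between them, then $\sigma$ commutes (in the $\wequiv$ sense) with every command occurring between them.} Once that claim is in hand, the induction described above goes through mechanically: bring the pair together, collapse via Proposition~\ref{prop:rules}(a) to drop the length, and recurse. One should also note that the collapsing step preserves $\not\wequiv\cbrk$ only in the sense that if the collapsed sequence \emph{were} $\wequiv\cbrk$ we would simply be in the first alternative of the theorem — so no separate argument is needed to keep the induction hypothesis applicable. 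A minor point to handle explicitly: the $\cbrk$-absorption rules let us treat any sequence that rewrites to $\cbrk$ uniformly, so the two alternatives of the statement are genuinely exhaustive.
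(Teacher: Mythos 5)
Your outer induction (bring the two same-node commands together, collapse them with Proposition~\ref{prop:rules}(a), accept the weakening to $\wxle$, recurse) matches the paper, but the auxiliary claim you reduce everything to is false, and this is a genuine gap rather than deferred bookkeeping. Let $c$ be a child of $n$ and consider $\alpha=\(n,\E,\DD)\,\(c,\E,\FF)\,\(n,\D,\DD)$. Applied to a filesystem in which $n$ and everything below it is empty and everything above $n$ is a directory, $\alpha$ does not break, so $\alpha\not\wequiv\cbrk$; the two commands on $n$ have no command on $n$ between them; yet the first command does \emph{not} commute with the intervening one, since $\(n,\E,\DD)\orderrel\(c,\E,\FF)$. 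So in your case analysis the branch $\sigma\orderrel\rho$ simply does not force $\alpha\wequiv\cbrk$, and no case-check can make it do so: here the pair on $n$ must be brought together by moving the \emph{later} command leftwards (indeed $\(c,\E,\FF)\indep\(n,\D,\DD)$), not by pushing $\sigma$ to the right. This is exactly why the paper's proof works from \emph{both} ends of the middle segment $\gamma$: at each step either the first command of $\gamma$ commutes with the left occurrence or the last command commutes with the right occurrence, and only when both ends are blocked by $\orderrel$ does it derive $\cbrk$, via Lemma~\ref{lemma-1} (or directly when $\gamma$ has length one).

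Two further problems with the details. First, choosing the closest pair \emph{on a fixed node} $n$ does not make the intermediate segment simple, so Lemma~\ref{lemma-1} is not applicable to it; e.g.\ in $\(n,\F,\EE)\,\(\parent n,\D,\EE)\,\(\parent n,\E,\DD)\,\(n,\E,\FF)$ (also non-breaking) neither end of the middle commutes with the adjacent $n$-command, and one must first collapse the inner repetition on $\parent n$. The paper avoids this by taking the \emph{first} command that repeats a node, which guarantees the segments $\tau\gamma$ and $\gamma\sigma$ are simple. Second, the ``tracking the type of $n$'' argument you sketch is semantic: at best it shows $\alpha\equiv\cbrk$, whereas the theorem requires the syntactic conclusion $\alpha\wequiv\cbrk$, and completeness of the rewriting rules (Theorem~\ref{thm:simple-completeness}) is available only for simple sequences and only later; the paper's argument stays syntactic throughout by invoking Lemma~\ref{lemma-1} inside the simple segment.
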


\begin{proof}
We assume $\alpha$ is not simple.
Let $\sigma$ be the first command in $\alpha$ for which there is an earlier
command $\tau$ on the same node. Let this node be $n$.
Splitting $\alpha$ around these
commands we get
$$
     \alpha=\beta \tau\gamma\sigma\beta'.
$$
Now $\tau\gamma$ and $\gamma\sigma$ are simple sequences.
We claim that $\tau\gamma\sigma$ simplifies (or rewrites to
$\cbrk$), which is proved by induction on the length of $\gamma$. 
If $\gamma$ is empty, then it is guaranteed by Proposition
\ref{command-pairs-1}. Otherwise let $\tau'$ be the first command in $\gamma$. If
$\tau\tau'\wequiv\tau'\tau$, then the induction hypothesis gives the claim.
Thus we must have $\tau\orderrel\tau'$ (as otherwise $\tau\tau'$
would rewrite to $\cbrk$). Similarly, if the last command in $\gamma$ is $\sigma'$,
then either $\sigma'\sigma\wequiv\sigma\sigma'$, when we are done, or
$\sigma'\orderrel\sigma$.

If the length of $\gamma$ is at least two, then $\tau'$ and $\sigma'$ are on
different nodes (as $\gamma$ is simple). We also know they are on nodes that
are immediately related to $n$, and consequently $n$ is between them.
As both $\tau'$ and $\sigma'$ are structural commands, from Lemma
\ref{lemma-1} we know that $\gamma$ contains a command on $n$, which is
impossible.

Finally, if $\gamma$ has a length of one, then $\tau\orderrel\tau'
=\sigma'\orderrel\sigma$, contradicting the assumption that $\tau$ and $\sigma$
are on the same node.
\end{proof}

\begin{definition}
(a) A \emph{$\orderrel$-chain} is a command sequence $\sigma_1\orderrel \sigma_2
\orderrel \cdots\orderrel \sigma_k$ connecting $\sigma_1$ and $\sigma_k$
(or $\sigma_k$ and $\sigma_1$).

\noindent (b)
A finite set $T$ of nodes is a \emph{subtree rooted at $n\in T$} if every
other element of $T$ is below $n$, and if $t\in T$, then nodes between $t$
and $n$ are also in $T$.

\noindent (c)
Finally, $S\subset\Omega$ is a \emph{simple set} if all commands in
$S$ are on different nodes, and for any two $\sigma,\tau\in S$ either
$\sigma\indep\tau$, or $S$ contains a $\orderrel$-chain connecting
$\sigma$ and $\tau$.
\end{definition}

Some structural properties of simple sets are summarized below 
to paint an intuitive picture of their structure.
To this end
let $S\subset\Omega$ be a fixed simple set. Split the
set of nodes of the commands in $S$ into three disjoint parts $N_\D\cup
N_\E \cup N_*$ as follows. If a command in $S$ matches $\(n,\D,\D)$, then put
its node $n$ into $N_\D$; if it matches $\(n,\E,\E)$, then put $n$ into $N_\E$, otherwise
put it into $N_*$. The following statements are immediate from the
definition.

\begin{proposition}
{\upshape(a)} No node in $N_\D$ is below any node in $N_\E\cup N_*$.\\
{\upshape(b)} No node in $N_\E$ is above any node in $N_\D\cup N_*$.\\
{\upshape(c)} $N_*$ is a disjoint union of subtrees whose roots are pairwise
uncomparable.\\
{\upshape(d)} If $\sigma,\tau\in S$, $\sigma\orderrel\tau$, then $\sigma$ and $\tau$
are on consecutive nodes in the same subtree of $N_*$.
Conversely, if $T\subset N_*$ is one of
the subtrees and $\sigma$, $\tau\in S$ are commands on consecutive nodes
of $T$, then $\sigma\orderrel\tau$ or $\tau\orderrel\sigma$.
Moreover, the commands with
nodes in $T$ are either all up commands, or all down commands.\\
{\upshape(e)} Leaders ($\orderrel$-minimal elements) of $S$ are the commands
on nodes in $N_\D\cup N_\E$, on the root nodes of up-subtrees, and on the leaves of down-subtrees.
\qed
\end{proposition}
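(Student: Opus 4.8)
The plan is to reduce everything to two facts about $\orderrel$ already recorded in the excerpt: every command occurring in a $\orderrel$-chain of length at least two is structural, and the nodes of such a chain form a path of consecutive nodes that is monotone in the tree (all the individual steps being up, or all down, with no $\orderrel$-cycles). Together with the definitions of $N_\D,N_\E,N_*$ and of $\indep$ on comparable nodes, these pin down which nodes carrying commands of $S$ can be comparable.

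The one real lemma I would isolate first is: if $\sigma,\tau\in S$ lie on comparable nodes, say $m$ above $n$, and $\sigma\notindep\tau$, then by the definition of a simple set $S$ contains a $\orderrel$-chain connecting $\sigma$ and $\tau$; this chain is nontrivial (as $m\neq n$), hence $\sigma$ and $\tau$ are structural, and, because the chain is monotone along consecutive nodes, its nodes are exactly the nodes between $m$ and $n$, each carrying a command of $S$, and all these commands are up commands, or all are down commands. The only thing to verify is that a monotone chain of consecutive nodes with prescribed endpoints occupies precisely the path joining them in the tree, which is clear.

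Now (a) and (b) follow at once: a command on a node of $N_\D$ is $\(\cdot,\D,\D)$ and one on a node of $N_\E$ is $\(\cdot,\E,\E)$, both transient, so such a command cannot occur in a $\orderrel$-chain of length at least two, hence (by the lemma) it can sit on a node comparable to some other node of $S$ only if the two commands are $\indep$; and the clause in the definition of $\indep$ for comparable nodes is exactly so designed that for such a transient command this forces the other node to be in $N_\D$ (if above) or in $N_\E$ (if below), which is the content of (a) and (b). For (c), the lemma says $N_*$ is \emph{interval-closed}: if two of its nodes are comparable, every node between them is in $N_*$. In a forest an interval-closed set of nodes is automatically a disjoint union of subtrees (each equivalence class under ``joined by a chain of pairwise-comparable members'' has a common topmost member, which is its root, and is itself interval-closed), and two such subtrees with comparable roots would be fused through the interval between those roots; hence the roots are pairwise uncomparable. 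For (d): if $\sigma\orderrel\tau$ then both are structural, so their (consecutive) nodes lie in $N_*$ and therefore, being comparable, in one and the same subtree; conversely, commands of $S$ on consecutive nodes of a subtree sit on comparable nodes of $N_*$, so the $\orderrel$-chain furnished by the lemma has length two, i.e.\ they are directly $\orderrel$-related; and the direction (up versus down) is constant along each such adjacent pair while adjacent pairs in a subtree share a node, so it is constant throughout the subtree. For (e): commands on nodes of $N_\D\cup N_\E$ are $\orderrel$-minimal because $\orderrel$ never touches them; in an up-subtree every non-root node has its parent in the subtree carrying an up command that $\orderrel$-precedes it, and nothing outside the subtree can $\orderrel$-precede it either (that would force the predecessor's node into $N_*$, hence into the same subtree by interval-closedness), so the only leader is the root; symmetrically the leaders of a down-subtree are exactly its leaves.

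The step I expect to be the main obstacle is the lemma together with the passage ``forest yields a disjoint union of subtrees'' in (c): this is the genuine content of the proposition even though each individual implication is short, and it is easy to be careless about the up/down bookkeeping and about the degenerate singleton component carrying a transient $\(\cdot,\F,\F)$ command, which is a leader but belongs to no up- or down-subtree and is best flagged as an understood exception in (d) and (e).
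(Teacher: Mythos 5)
Your argument is correct; the paper in fact offers no proof here (it simply declares the statements ``immediate from the definition''), and your unfolding is exactly the verification being taken for granted: the lemma that a nontrivial $\orderrel$-chain between comparable nodes of a simple set occupies precisely the consecutive nodes between them with a constant up/down direction, the resulting interval-closedness of $N_*$, and the fact that in a forest each finite comparability-connected, interval-closed component has a unique topmost node, which yields (a)--(e). Your flag about transient $\(\cdot,\F,\F)$ commands landing in $N_*$ as singleton components is apt and not a defect of your proof: the paper implicitly treats such singletons as degenerate up-/down-subtrees (their command is a leader and is simultaneously the root and a leaf), and with that reading (d) and (e) hold as stated.
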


\begin{theorem}[Structural theorem of simple sequences]\label{thm:struct}
Let $\alpha\not\wequiv\cbrk$ be a simple sequence. Then\\
{\upshape(a)} $\alpha$ honors the relation $\orderrel$;\\
{\upshape(b)} if $\beta$ is a permutation of $\alpha$ and $\beta$
honors $\orderrel$, then $\beta\wequiv\alpha$;\\
{\upshape(c)} the set of commands in $\alpha$ is a simple set.\\
{\upshape(d)} Suppose the commands of the sequence $\beta$ form a simple set, and
$\beta$ honors $\orderrel$. Then $\beta\not\wequiv\cbrk$.
\end{theorem}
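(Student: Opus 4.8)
The plan is to prove parts (a)--(d) of Theorem \ref{thm:struct} in order, since each part feeds into the next. For part (a), I would argue by contradiction: if $\alpha$ fails to honor $\orderrel$, then by Lemma \ref{lemma-7}(b) we get $\alpha\wequiv\cbrk$, contradicting the hypothesis. This is essentially immediate.

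For part (b), the statement is exactly Lemma \ref{lemma-7}(c) once we know that $\beta$, being a permutation of $\alpha\not\wequiv\cbrk$, is itself a candidate to which the lemma applies; the lemma is stated for a permutation $\beta$ of $\alpha$ honoring $\orderrel$, which is precisely our situation. So part (b) is a direct citation.

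For part (c) I need to show that the set $S$ of commands in $\alpha$ is a simple set: all commands are on distinct nodes (immediate from simplicity of $\alpha$), and for any two $\sigma,\tau\in S$, either $\sigma\indep\tau$ or $S$ contains a $\orderrel$-chain connecting them. Take $\sigma,\tau\in S$ with $\sigma\notindep\tau$. If they are on the same node this cannot happen (distinct nodes), so they are on different comparable nodes, and by Lemma \ref{lemma-1} both are structural and $\alpha$ contains structural commands on every node between them. I would then walk along the nodes between $\sigma$ and $\tau$: consecutive commands in this walk are structural commands on immediately related nodes, so by Proposition \ref{remark-R1*} each consecutive pair is related by $\orderrel$ in one direction or the other. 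This gives a sequence of $\orderrel$-related pairs, but I must check they line up into a genuine $\orderrel$-chain (all arrows pointing consistently), which follows because $\alpha$ honors $\orderrel$ by part (a): if two consecutive links pointed ``toward each other'' or ``away from each other'' inconsistently, honoring $\orderrel$ would force a contradiction on the relative order of $\sigma$ and $\tau$ in $\alpha$ — more carefully, since all these commands are structural on a chain of consecutive nodes and (by the remark after the definition of $\orderrel$) three $\orderrel$-related commands are all up or all down, the whole run is monotone, yielding the required $\orderrel$-chain. So part (c) reduces to Lemma \ref{lemma-1} plus Proposition \ref{remark-R1*} plus part (a).

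Part (d) is the converse direction and the main obstacle: given that the commands of $\beta$ form a simple set and $\beta$ honors $\orderrel$, show $\beta\not\wequiv\cbrk$. The natural strategy is to exhibit a filesystem $\FS$ that $\beta$ does not break, since $\beta\wequiv\cbrk$ would imply $\beta\equiv\cbrk$ by soundness of the rewriting rules. I would construct $\FS$ using the structural description in the Proposition preceding this theorem: the nodes split as $N_\D\cup N_\E\cup N_*$, with $N_*$ a disjoint union of up- and down-subtrees. For a down-subtree the commands downgrade values along the tree and must be applied leaves-first; for an up-subtree they upgrade and must be applied root-first; honoring $\orderrel$ guarantees $\beta$ respects exactly these orders. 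So I would set $\FS$ to have, at each command's node, its required input type (choosing a concrete value of that type), and empty everywhere else, then check the tree property is preserved after each prefix of $\beta$: the commutation/independence structure of a simple set (any two commands either independent or joined by a $\orderrel$-chain) ensures there is no ``hidden'' ordering obstruction beyond the ones $\orderrel$ already records, and the $N_\D$/$N_\E$ positioning (Proposition parts (a),(b)) ensures directory nodes sit above everything and empty-input nodes below, so no tree-property violation arises from cross-subtree interactions. The delicate point — and where I expect to spend the most effort — is verifying that when $\beta$ honors $\orderrel$ and the command set is simple, every intermediate filesystem along the execution of $\beta$ on this $\FS$ genuinely satisfies the tree property; this requires a careful induction on the prefixes of $\beta$, using at each step that the next command's node has the correct current type (because no earlier command touched it, since nodes are distinct) and that applying it keeps branches well-formed (using that any node strictly between two active nodes also carries a structural command scheduled consistently, from the $\orderrel$-chain condition). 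Once local consistency is maintained throughout, $\beta\FS\neq\bot$, hence $\beta\not\equiv\cbrk$, hence $\beta\not\wequiv\cbrk$.
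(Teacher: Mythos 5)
Your parts (a) and (b) are exactly the paper's route: both are contained in Lemma \ref{lemma-7}. Part (c) is also in the paper's spirit (it rests on Lemma \ref{lemma-1}), but one step of your argument is not justified as cited: Proposition \ref{remark-R1*} alone does not force two structural commands on parent/child nodes to be $\orderrel$-comparable, because its third alternative $\sigma\tau\wequiv\tau\sigma$ also covers the case where \emph{both} orders break (e.g.\ $\(n,\F,\DD)$ together with $\(\parent n,\F,\DD')$ are structural, not independent, and related by $\orderrel$ in neither direction). To get the $\orderrel$-links you must use, as the paper does, the \emph{proof} of Lemma \ref{lemma-1}, which produces each link as an $\orderrel$-relation between commands made adjacent in the non-breaking sequence; with that replacement your monotonicity remark (the middle command of two consecutive links cannot have input type both $\D$ and $\E$) does close the argument. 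This is a fixable, citation-level gap.

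Part (d) is where the genuine gap lies. Your route is semantic: exhibit $\FS$ with $\beta\FS\neq\bot$ and conclude $\beta\not\wequiv\cbrk$ by soundness. The logic of that reduction is fine, but the execution is not. First, the filesystem you define --- required input type at each command node, ``empty everywhere else'' --- is in general not a filesystem at all: a single command $\(n,\F,\FF)$ on a non-root node already forces its ancestors to be directories, otherwise the tree property fails before any command is applied; you would have to set every node above each leader not matching $\(n,\E,\E)$ to a directory value, as in the construction used later in Theorem \ref{thm:simple-completeness}. Second, the induction over prefixes of $\beta$, which is the entire substance of the claim, is only announced (``where I expect to spend the most effort''), not carried out; and you cannot discharge it by appealing to Theorem \ref{non-breaking}, Corollary \ref{file-comparable} or Proposition \ref{claim:add-one}, since those results come after, and depend on, Theorem \ref{thm:struct} --- you would have to reprove that material here from scratch. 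Note that the paper proves (d) by a short, purely syntactic argument that avoids filesystems altogether: since $\beta$ is simple, only the commutation and breaking rules can ever apply; a breaking pair $\sigma\tau$ must satisfy $\sigma\notindep\tau$, hence by the simple-set hypothesis $\sigma$ and $\tau$ are joined by a $\orderrel$-chain with at least one intermediate command; commutation never swaps $\orderrel$-comparable commands, so the chain's order in $\beta$ (which honors $\orderrel$) is invariant under rewriting, and $\sigma$, $\tau$ can never become adjacent in a breaking configuration. Either adopt that argument or complete your semantic construction in full; as it stands, (d) is not proved.
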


\begin{proof}
(a) and (b) has been proved in Lemma \ref{lemma-7}.

(c) is a direct consequence of Lemma \ref{lemma-1}. We know the commands in
$\alpha$ apply to different nodes, and that for any $\sigma\notindep\tau$
there is a chain of commands on immediately related nodes leading from one
to the other. The details of the proof show that they must also form a
$\orderrel$-chain.

(d) All applicable rewriting rules are of the form $\sigma\tau\wequiv\tau \sigma$ and
$\sigma\tau\wequiv\cbrk$. It means that $\beta\wequiv\cbrk$ if and only if
one can use the commutativity rules to rearrange $\beta$ to contain two
consecutive commands to which the $\sigma\tau\wequiv\cbrk$ rule would apply. 
Assume this is the case.
Then
$\sigma\notindep\tau$, thus $\sigma$ and $\tau$ are connected by some $\orderrel$-chain
$\sigma=\sigma_1\orderrel\cdots\orderrel\sigma_k=\tau$. We also know $k>2$. Observe that $\sigma_1$, \dots,
$\sigma_k$ must appear in $\beta$ in this order, and this order remains after
applying any commutativity rule. This is a contradiction as then $\sigma$
and $\tau$ can never become consecutive commands.
\end{proof}

\begin{definition}
For sequences $\alpha$, $\beta$ we write $\alpha\indep\tau$ to mean
$\sigma\indep\tau$ for all $\sigma\in\alpha$; and write $\alpha\indep\beta$
to mean $\alpha\indep\tau$ for all $\tau\in\beta$.
\end{definition}
An immediate consequence
of Theorem \ref{thm:struct} is the following.

\begin{proposition}\label{claim:add-one}
Let $\alpha\tau$ be a simple sequence where $\alpha\not\wequiv\cbrk$. Then\\
{\upshape(a)} $\alpha\tau\not\wequiv\cbrk$ if and only if either 
$\alpha\indep\tau$ or $\sigma\orderrel\tau$ for some $\sigma\in\alpha$.\\
{\upshape(b)} $\tau\alpha\not\wequiv\cbrk$ if and only if either
$\tau\indep\alpha$ or $\tau\orderrel\sigma$ for some $\sigma\in\alpha$.
\qed
\end{proposition}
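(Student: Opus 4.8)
The plan is to derive both parts directly from the Structural Theorem \ref{thm:struct} and the rewriting rules, treating (a) in detail and obtaining (b) by the mirror argument. First I would prove the ``if'' direction of (a). Suppose $\alpha\indep\tau$. Since every command of $\alpha$ commutes with $\tau$, and $\alpha$ itself is non-breaking, moving $\tau$ leftward through $\alpha$ by repeated commutativity rewrites shows $\alpha\tau\wequiv\tau\alpha$; but no single step can introduce a breaking pair (each is a commuting pair), so $\alpha\tau\not\wequiv\cbrk$. Alternatively, and more cleanly, I would invoke Theorem \ref{thm:struct}(d): the commands of $\alpha\tau$ form a simple set (the nodes are distinct because $\alpha\tau$ is simple, and $\tau$ is $\indep$ from everything in $\alpha$ so the simple-set condition is met trivially for the new pairs), and one can order $\alpha\tau$ so that it honors $\orderrel$ — for instance, take the honoring order of $\alpha$ given by \ref{thm:struct}(a) and insert $\tau$ wherever it is consistent (possible since $\tau$ participates in no $\orderrel$ relation with any command of $\alpha$). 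Hence that permutation is $\not\wequiv\cbrk$, and by \ref{thm:struct}(b) so is $\alpha\tau$. If instead $\sigma\orderrel\tau$ for some $\sigma\in\alpha$, the commands of $\alpha\tau$ still form a simple set (the $\orderrel$-chain consisting of the single edge $\sigma\orderrel\tau$ connects $\tau$ to $\sigma$, and composing with chains already inside $\alpha$ connects $\tau$ to every command $\indep$-incomparable with it — here one must check that $\tau$, being the top/bottom of a one-edge chain, is either $\indep$ or chain-connected to every other command of $\alpha$, which follows because $\alpha$ is already a simple set and $\sigma$ is), and $\alpha$ has an honoring order ending at a leader; appending $\tau$ after $\sigma$ in an honoring order of $\alpha$ yields an honoring order of $\alpha\tau$, so \ref{thm:struct}(d) again gives $\alpha\tau\not\wequiv\cbrk$.

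For the ``only if'' direction of (a), assume $\alpha\indep\tau$ fails and $\sigma\orderrel\tau$ fails for every $\sigma\in\alpha$; I must show $\alpha\tau\wequiv\cbrk$. Since $\alpha\tau$ is simple, it is not broken by any same-node pair. Pick $\sigma\in\alpha$ with $\sigma\notindep\tau$ (exists by the failure of $\alpha\indep\tau$); by Proposition \ref{remark-R1*} exactly one of $\sigma\orderrel\tau$, $\tau\orderrel\sigma$, $\sigma\tau\wequiv\tau\sigma$ holds, and the first is excluded, so either $\tau\orderrel\sigma$ or they commute. If $\tau\orderrel\sigma$ for some such $\sigma$, then the permutation of $\alpha\tau$ putting $\tau$ first fails to honor $\orderrel$ (it has $\tau$ before $\sigma$ yet $\tau\orderrel\sigma$), and one checks it still forms a simple set; but actually the cleanest route is: consider $\tau\alpha$ — wait, that is part (b), so instead I would argue directly via Lemma \ref{lemma-7}(b): take any command $\sigma\in\alpha$ with $\sigma\notindep\tau$. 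If $\tau\orderrel\sigma$, then in $\alpha\tau$ the pair $(\tau,\sigma)$ witnesses that $\alpha\tau$ fails to honor $\orderrel$; Lemma \ref{lemma-7}(b) would then give $\alpha\tau\wequiv\cbrk$, provided $\alpha\tau$ is still simple, which it is. If on the other hand $\sigma\tau\wequiv\tau\sigma$ for every $\sigma\in\alpha$ with $\sigma\notindep\tau$ — but that is impossible, since commuting is the negation of $\sigma\notindep\tau$ by Proposition \ref{command-pairs-2}(a) (two commands on different nodes commute iff $\sigma\indep\tau$). So for our chosen $\sigma$ with $\sigma\notindep\tau$, neither $\sigma\orderrel\tau$ (assumed to fail) nor $\sigma\tau\wequiv\tau\sigma$ (fails since $\sigma\notindep\tau$) holds, hence $\tau\orderrel\sigma$ by \ref{remark-R1*}, and we are in the case already handled: $\alpha\tau$ does not honor $\orderrel$, so $\alpha\tau\wequiv\cbrk$ by Lemma \ref{lemma-7}(b).

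Part (b) is the left-right mirror: the relation $\orderrel$ is, up to reversal of sequence order, symmetric in the sense captured by \ref{remark-R1*}, and $\tau\alpha\not\wequiv\cbrk$ iff some honoring permutation of the simple set $\{\tau\}\cup\alpha$ realizes $\tau$ before all commands $\sigma$ with $\tau\orderrel\sigma$; the same case analysis, with ``$\sigma\orderrel\tau$ for some $\sigma\in\alpha$'' replaced by ``$\tau\orderrel\sigma$ for some $\sigma\in\alpha$'' and ``$\tau$ first'' replaced by ``$\tau$ appended,'' goes through verbatim. The main obstacle I anticipate is the bookkeeping in the ``if'' direction showing that $\{\tau\}\cup\alpha$ is genuinely a simple set when $\sigma\orderrel\tau$ holds for a single $\sigma$: one must verify that the new $\orderrel$-edge, together with the chains already present in $\alpha$ (which exists since $\alpha$'s command set is a simple set by \ref{thm:struct}(c)), connects $\tau$ to every command of $\alpha$ that is not $\indep$ with $\tau$. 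This reduces to a short argument about how $\orderrel$-chains in the subtree structure of Proposition~\ref{thm:struct} extend by one node at an end, plus the observation that a command at the end of such a chain is $\indep$ from everything outside the chain's subtree; I expect this to be routine but it is the one place where the geometry of $\setn$ rather than pure rewriting is needed.
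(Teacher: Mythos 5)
The step you yourself flag as ``routine bookkeeping'' is exactly where the argument breaks, and it cannot be closed: in the case that merely some $\sigma\in\alpha$ satisfies $\sigma\orderrel\tau$, the commands of $\alpha\tau$ need not form a simple set, and in fact the ``if'' direction of the statement fails as literally worded. Take a node $t$ with two distinct children $n$ and $p$, and let $\alpha=\(n,\F,\EE)\,\(p,\E,\FF)$ and $\tau=\(t,\D,\EE)$. Then $\alpha\tau$ is simple, $\alpha\not\wequiv\cbrk$ (it works on a filesystem with a directory at $t$, a file at $n$, and $p$ empty), and $\(n,\F,\EE)\orderrel\tau$; but $\(p,\E,\FF)\notindep\tau$ while $\(p,\E,\FF)\notorderrel\tau$ and $\tau\notorderrel\(p,\E,\FF)$, so the adjacent pair $\(p,\E,\FF)\,\(t,\D,\EE)$ rewrites to $\cbrk$ by Proposition \ref{prop:rules}(b), hence $\alpha\tau\wequiv\cbrk$ (semantically: once a file exists at $p$, the directory at $t$ cannot be removed). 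So your parenthetical claim that $\tau$ is $\indep$ or chain-connected to every command of $\alpha$ ``because $\alpha$ is a simple set and $\sigma$ is in it'' is false; sufficiency would need the stronger hypothesis that \emph{every} $\rho\in\alpha$ with $\rho\notindep\tau$ is joined to $\tau$ by a $\orderrel$-chain inside $\alpha\tau$. Part (b) fails symmetrically (e.g.\ $\tau=\(t,\E,\DD)$, $\alpha=\(n,\E,\FF)\,\(p,\F,\FF')$). Note the paper gives no proof here -- the proposition is asserted as an immediate consequence of Theorem \ref{thm:struct} -- and only the necessity (``only if'') halves are used later, e.g.\ in Lemma \ref{lemma-2} and Theorem \ref{non-breaking}; those are the halves worth salvaging, together with the $\alpha\indep\tau$ (resp.\ $\tau\indep\alpha$) case of sufficiency, which you do prove correctly.

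There is also an unsound step in your ``only if'' argument: from $\sigma\notindep\tau$ and $\sigma\notorderrel\tau$ you conclude $\tau\orderrel\sigma$, citing Propositions \ref{remark-R1*} and \ref{command-pairs-2}(a). But the third alternative of Proposition \ref{remark-R1*} uses the extended $\wequiv$ notation and also covers pairs for which \emph{both} orders rewrite to $\cbrk$; Proposition \ref{command-pairs-2}(a) only says that commuting \emph{without breaking} is equivalent to $\indep$. The pair $\(p,\E,\FF)$, $\(t,\D,\EE)$ above is exactly such a case: not independent, yet no $\orderrel$ relation in either direction, so ``$\alpha\tau$ fails to honor $\orderrel$'' is unavailable and Lemma \ref{lemma-7}(b) does not apply. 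The necessity direction should instead go through Theorem \ref{thm:struct}(a) and (c): if $\alpha\tau\not\wequiv\cbrk$ and some $\rho\in\alpha$ has $\rho\notindep\tau$, then the command set of $\alpha\tau$ is a simple set, so a $\orderrel$-chain connects $\rho$ to $\tau$; its edge at $\tau$ cannot be of the form $\tau\orderrel\rho'$ with $\rho'\in\alpha$, since $\tau$ is last and $\alpha\tau$ honors $\orderrel$, so it is $\sigma'\orderrel\tau$ for some $\sigma'\in\alpha$, which is what (a) asserts; the mirror argument (with $\tau$ first) gives (b).
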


Our next goal is to show that on simple sequences the set of
rewriting rules is semantically complete in a strong sense. To this end
we first state a result which shows that the filesystem commands capture a
surprising amount of information. If the simple sequence $\alpha$ does not
break $\FS$, then clearly $\FS$ must match the input type of every command
in $\alpha$. This simple necessary condition is almost sufficient.

\begin{theorem}\label{non-breaking}
Let $\alpha\not\wequiv\cbrk$ be a simple sequence and $\FS\in\X$ be a
filesystem. $\alpha$ does not break $\FS$ if and only if the following
conditions hold for each $\sigma\in\alpha$:\\
{\upshape(a)} If $\sigma$ is on node $n$, then $\FS(n)$ has the data type
required by $\sigma$.\\
{\upshape(b)} If $\sigma$ is a leader matching $\(n,\E,\F\D)$, then the nodes
above $n$ are directories.\\
{\upshape(c)} If $\sigma$ matches $\(n,\D,\F\E)$, then nodes below $n$ not
mentioned in $\alpha$ are empty.
\end{theorem}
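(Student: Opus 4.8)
The plan is to prove both directions of the biconditional in Theorem~\ref{non-breaking}. The forward direction---that if $\alpha$ does not break $\FS$ then (a)--(c) hold---should be essentially immediate from the definitions, so I would dispatch it first and briefly. Condition (a) is clear since each command's precondition must be satisfied at the moment it is applied, and by Theorem~\ref{thm:struct}(a) a leader is $\orderrel$-minimal, so nothing in $\alpha$ can change the value at its node before it runs; hence $\FS(n)$ itself must already have the required type. For (b), if $\sigma=\(n,\E,\F\D)$ is a leader, running it turns node $n$ into a file or directory; the tree property then forces all nodes above $n$ to be directories, and again since $\sigma$ is a leader no earlier command in $\alpha$ touched those ancestor nodes (a command on $\parent n$ related to $\sigma$ would be $\orderrel$-below it, contradicting leadership), so they must already be directories in $\FS$. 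For (c), if $\sigma=\(n,\D,\F\E)$ runs and downgrades $n$, the tree property of the result forces every node strictly below $n$ to be empty at that moment; a node below $n$ not mentioned in $\alpha$ is never changed by $\alpha$, so it must be empty in $\FS$ to begin with.

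The substantive direction is the converse: assuming (a)--(c) for every $\sigma\in\alpha$, show $\alpha$ does not break $\FS$. My approach is induction on the length of $\alpha$, using Theorem~\ref{thm:struct}(a) to pick a leader $\sigma$ of $\alpha$ and Lemma~\ref{lemma-7}(a) (via $\alpha\not\wequiv\cbrk$ and semantic soundness of rewriting) to rewrite $\alpha\wequiv\sigma\alpha'$ where $\alpha'$ is $\alpha$ with $\sigma$ deleted. Since $\wequiv$ preserves semantics, it suffices to show $\sigma\alpha'$ does not break $\FS$: namely that $\sigma$ is applicable to $\FS$, and that $\alpha'$ does not break $\sigma\FS$. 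Applicability of $\sigma$ to $\FS$: the input-type precondition is exactly (a) for $\sigma$; the tree-property part is where (b) and (c) do their work---if $\sigma$ is an up leader $\(n,\E,\F\D)$, condition (b) gives that the ancestors are directories, so the upgrade preserves the tree property; if $\sigma$ is a down command of the form $\(n,\D,\F\E)$ (note a down command of this shape is automatically a leader, as nothing $\orderrel$-precedes it), condition (c) together with (a) applied to the other commands of $\alpha$ on nodes below $n$ handles the nodes below $n$: those mentioned in $\alpha$ have, by (a), the type $\FS$ assigns them, and I must argue this type is empty (or that these commands will finish clearing the subtree)—this is the delicate point; transient leaders $\(n,\D,\D)$, $\(n,\E,\E)$, $\(n,\F,\F)$ and structural leaders $\(n,\F,\D)$, $\(n,\F,\E)$ are easy since they don't change the shape of the tree around unmentioned nodes.

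Then for the inductive step I need: $\alpha'\not\wequiv\cbrk$ (inherited: a permutation-minus-one of a non-breaking simple sequence is still a non-breaking simple sequence, by Proposition~\ref{claim:add-one} or directly from Theorem~\ref{thm:struct}(d), since removing commands cannot create a $\orderrel$-chain obstruction), and that conditions (a)--(c) hold for every $\tau\in\alpha'$ with respect to the new filesystem $\sigma\FS$. Here the key observations are: $\sigma$ is on a node $n$ that is either $\orderrel$-below every command it relates to, or independent of them; so for $\tau\in\alpha'$ on a node $m$, if $\sigma\indep\tau$ then $\FS$ and $\sigma\FS$ agree at $m$ and at all nodes relevant to $\tau$'s conditions; if $\sigma\notindep\tau$ then, since $\sigma$ is a leader, $\sigma\orderrel\tau$, meaning $\sigma,\tau$ are adjacent structural commands going the same direction, and one checks case-by-case (up: $\sigma=\(\parent m,\E\F,\D)$, $\tau=\(m,\E,\F\D)$; down: $\sigma=\(m',\D\F,\E)$ with $\parent{m'}=m$...) that applying $\sigma$ establishes exactly the precondition and tree-property context that $\tau$ needs, so (a)--(c) transfer. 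The main obstacle I anticipate is condition (c) and its interaction with the induction: when a down-command $\(n,\D,\F\E)$ is applied, I must be sure that by the time it runs all the nodes strictly below $n$ are empty—the mentioned ones get emptied by other commands of $\alpha$, but I need that those commands have already executed, which is guaranteed precisely because $\alpha$ honors $\orderrel$ (Theorem~\ref{thm:struct}(a)): a down-subtree is processed leaves-first, so all descendants of $n$ carrying structural down-commands precede $\(n,\D,\F\E)$. Carefully tracking which commands have run when—and hence what the current filesystem looks like below $n$—in the induction is the real work; everything else is bookkeeping on the finitely many command-type cases enumerated in the "Command categories" subsection.
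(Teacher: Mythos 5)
Your overall plan coincides with the paper's: necessity is dispatched directly, and sufficiency is proved by induction on the length of $\alpha$, peeling off a leader $\sigma$ (the paper simply takes the first command of the sequence, which is a leader because $\alpha$ honors $\orderrel$; your rewriting $\alpha\wequiv\sigma\alpha'$ via Lemma~\ref{lemma-7}(a) amounts to the same), checking that $\sigma$ applies to $\FS$, and then transferring (a)--(c) to $\sigma\FS$ using the dichotomy of Proposition~\ref{claim:add-one} ($\sigma\indep\tau$ versus $\sigma\orderrel\tau$), exactly as in the paper. However, there is a genuine gap at the very point you flag as ``the delicate point'': applicability of a leader matching $\(n,\D,\F\E)$. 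Condition (c) only guarantees emptiness of nodes below $n$ \emph{not} mentioned in $\alpha$; for the mentioned ones you say you ``must argue this type is empty (or that these commands will finish clearing the subtree)'' and stop there. The second alternative cannot work in your own setup: in the decomposition $\sigma\alpha'$ the leader runs \emph{first}, so no other command has cleared anything yet. The missing argument is structural: since $\alpha\not\wequiv\cbrk$, its command set is a simple set (Theorem~\ref{thm:struct}(c), via Lemma~\ref{lemma-1}), so any $\tau\in\alpha$ on a node below $n$ is either independent of $\sigma$ or joined to it by a $\orderrel$-chain; such a chain would consist of down commands, which run child-to-parent, so it would terminate at $\sigma$ and contradict that $\sigma$ is a leader; and independence, with $\sigma$ on the higher node and not of shape $\(n,\D,\D)$, forces $\tau$ to match $\(\cdot,\E,\E)$. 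Hence by condition (a) every mentioned node below $n$ is already empty in $\FS$, and together with (c) the downgrade at $n$ preserves the tree property. Your parenthetical claim that a command matching $\(n,\D,\F\E)$ is ``automatically a leader'' is also false --- $\(m,\D\F,\E)\orderrel\(n,\D,\F\E)$ for any child $m$ of $n$ --- though this is harmless since you chose $\sigma$ to be a leader anyway.

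Two smaller points of looseness in the inductive step. First, in the independent case $\FS$ and $\sigma\FS$ need not literally agree at all nodes relevant to $\tau$'s conditions: the node of $\sigma$ changes value, and when it lies below the node of a down command $\tau$ what saves you is that $\sigma$ then matches $\(\cdot,\E,\E)$, so the \emph{type} stays empty, which is all that (c) requires --- this is precisely the case distinction the paper spells out. Second, ``removing commands cannot create a $\orderrel$-chain obstruction'' is not true in general (deleting an intermediate command of a chain can disconnect two non-independent commands); it is safe here only because you delete a leader, or, more simply, because $\alpha'\wequiv\cbrk$ would give $\sigma\alpha'\wequiv\cbrk$ and hence $\alpha\wequiv\cbrk$. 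With the down-leader argument supplied and these points tightened, your proof becomes the paper's proof.
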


\begin{proof}
The conditions are clearly necessary. For the converse use induction on the
length of $\alpha$. Let $\FS$ be a filesystem satisfying the conditions
for commands in $\tau\alpha$, where $\tau$ is on node $m$. Clearly, $\tau$
can be applied to $\FS$ as $\tau$ is a leader, thus it suffices to check
that $\tau\FS$ satisfies the conditions for $\sigma\in \alpha$. Let
$\sigma$ be on node $n$. (a) clearly holds as $m$ and $n$ are different.
For (b) observe that by Proposition \ref{claim:add-one} if $\sigma$ is a
leader in $\alpha$, then either it is a leader in $\tau\alpha$ (and then
$\tau\indep \sigma$) or $\tau\orderrel \sigma$. In the first case either
$m$ and $n$ are incomparable, or $m$ is below $n$, or $\tau$ matches
$\(n,\D,\D)$. In all cases $\tau\FS$ and $\FS$
have the same types of values above $n$.
We know $\sigma$ is an up command, so
in the second case $\tau$ is one, too, and it is on the parent node
of that of $\sigma$. $\tau\FS[m]$ is
a directory, and thus every other node above it is a directory as well.

The reason why (c) holds is similar. If $\tau\indep\sigma$, then either $m$
is above $n$, or below it and $\tau$ matches $\(m,\E,\E)$. Otherwise there is a
$\orderrel$-chain connecting $\tau$ and $\sigma$ consisting of down commands
only. Thus $\tau$ matches $\(m,\D\F,\E)$, and therefore
$\tau\FS[m]$ is empty.
\end{proof}

The following corollary, which merges the last two conditions,
will be used when constructing non-breaking filesystems.
\begin{corollary}\label{file-comparable}
Let $\alpha$ be a simple sequence, and let $\FS_1$ and $\FS_2$ be filesystems
so that $\FS_1(n)$ and $\FS_2(n)$ have the same types for every node $n$
which is {\upshape(a)} the node of some command in $\alpha$, and
{\upshape(b)} comparable to the node of a structural command in $\alpha$.
Then $\alpha\FS_1\neq\bot$ iff $\alpha\FS_2\neq\bot$.
\qed
\end{corollary}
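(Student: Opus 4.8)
The plan is to deduce Corollary~\ref{file-comparable} directly from Theorem~\ref{non-breaking}. First, observe that the statement is trivially true if $\alpha\wequiv\cbrk$, since then $\alpha$ breaks \emph{every} filesystem (this follows because rewriting rules are semantically correct, so $\alpha\equiv\cbrk$), and both sides of the ``iff'' are false. So we may assume $\alpha\not\wequiv\cbrk$ and apply Theorem~\ref{non-breaking} to both $\FS_1$ and $\FS_2$.

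The core of the argument is to show that each of the three conditions (a), (b), (c) of Theorem~\ref{non-breaking} holds for $\FS_1$ exactly when it holds for $\FS_2$. For condition (a): the nodes mentioned there are precisely the nodes of commands in $\alpha$, and by hypothesis (a) of the corollary $\FS_1$ and $\FS_2$ have matching types at all such nodes, so condition (a) transfers. For condition (b): if $\sigma$ is a leader matching $\(n,\E,\F\D)$, then $\sigma$ is an up command, hence structural; the nodes strictly above $n$ are comparable to the node of this structural command, so by hypothesis (b) of the corollary $\FS_1$ and $\FS_2$ have the same types there, and ``the nodes above $n$ are directories'' holds for one iff it holds for the other. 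For condition (c): if $\sigma$ matches $\(n,\D,\F\E)$ it is a down command, hence structural; a node $m$ strictly below $n$ and not mentioned in $\alpha$ is comparable to the node of a structural command ($\sigma$ itself), so by hypothesis (b) of the corollary $\FS_1(m)$ and $\FS_2(m)$ have the same type, in particular $\FS_1(m)\in\E$ iff $\FS_2(m)\in\E$; thus condition (c) transfers as well.

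Combining these three equivalences, the full conjunction of conditions in Theorem~\ref{non-breaking} holds for $\FS_1$ iff it holds for $\FS_2$; by that theorem this means $\alpha$ does not break $\FS_1$ iff it does not break $\FS_2$, which is exactly $\alpha\FS_1\neq\bot$ iff $\alpha\FS_2\neq\bot$. The only mild subtlety, and the step most worth stating carefully, is matching the scope of ``comparable'' in hypothesis (b) of the corollary with the nodes actually referenced in conditions (b) and (c) of the theorem: in (b) one needs the \emph{strict} ancestors of $n$, and in (c) the \emph{strict} descendants of $n$, and in both cases these are comparable to $n$, which is the node of a structural command, so hypothesis (b) of the corollary covers them. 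No heavy machinery is needed beyond a clean bookkeeping of which nodes each condition constrains.
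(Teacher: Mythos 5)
Your proof is correct and takes essentially the same route as the paper, which states the corollary as an immediate consequence of Theorem~\ref{non-breaking} (it simply ``merges'' that theorem's conditions); your condition-by-condition transfer between $\FS_1$ and $\FS_2$ is exactly the intended argument. The additional handling of the degenerate case $\alpha\wequiv\cbrk$ and the careful matching of which nodes each condition constrains are fine but add nothing beyond the paper's implicit reasoning.
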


\begin{theorem}[Completeness theorem for simple sequences]\label{thm:simple-completeness}
For a simple sequence $\alpha$, if
$\alpha\equiv\cbrk$, then $\alpha\wequiv\cbrk$. In general, if
$\alpha$ and $\beta$ are simple sequences such that $\alpha\equiv\beta$,
then $\alpha\wequiv\beta$ provided there are at least two different values
in each data type.
\end{theorem}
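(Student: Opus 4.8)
The plan is to prove the two statements essentially together, by contraposition on the "in general" part and then extracting the first part as a special case. So suppose $\alpha$ and $\beta$ are simple sequences with $\alpha\not\wequiv\beta$; I want to produce a filesystem $\FS$ witnessing $\alpha\not\equiv\beta$. First I would dispose of the degenerate cases: if one of them, say $\alpha$, satisfies $\alpha\wequiv\cbrk$ while $\beta\not\wequiv\cbrk$, then by Theorem~\ref{thm:struct}(c),(d) and Theorem~\ref{non-breaking} I can build a filesystem that $\beta$ does not break (satisfy conditions (a)--(c) of Theorem~\ref{non-breaking} using the freedom of choice of values, which is where the ``two values per type'' hypothesis enters for condition~(c)); this $\FS$ distinguishes them. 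This same construction, applied with $\beta=\lambda$, gives the first sentence of the theorem: an $\alpha\not\wequiv\cbrk$ is non-breaking, hence $\alpha\not\equiv\cbrk$.

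So assume both $\alpha\not\wequiv\cbrk$ and $\beta\not\wequiv\cbrk$. By Theorem~\ref{thm:struct}(a) both honor $\orderrel$, and by part~(b) any two $\orderrel$-honoring permutations of the same multiset of commands are $\wequiv$. Hence if $\alpha$ and $\beta$ consisted of the same set of commands we would have $\alpha\wequiv\beta$, contrary to assumption. Therefore there is a command $\sigma=\(n,t,x)$ occurring in one of them, say in $\alpha$, but not in $\beta$. I now split on whether $\beta$ contains \emph{some} command on node $n$ or not.

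\emph{Case 1: $\beta$ has no command on node $n$.} Here I want a filesystem $\FS$ on which $\alpha$ runs to completion, so that $(\alpha\FS)(n)=x$, while $\beta$ either breaks or leaves node $n$ with a value whose \emph{type} differs from $\tp(x)$ — or, if $\tp(x)=t$ (a transient command), I instead arrange $(\beta\FS)(n)=\FS(n)\neq x$ using a second value of that type, which is exactly the place the hypothesis ``at least two values in each type'' is needed a second time. The filesystem is constructed by the recipe of Theorem~\ref{non-breaking}: I fix the types at the nodes of $\alpha$ and at nodes comparable to structural commands of $\alpha$ (Corollary~\ref{file-comparable} then guarantees $\alpha\FS\neq\bot$), and I have latitude in the actual values and in the types at all other nodes. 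Since $\beta$ never touches $n$, $(\beta\FS)(n)=\FS(n)$ whenever $\beta$ does not break $\FS$; choosing $\FS(n)$ to be $t$-typed and equal to the precondition of $\sigma$ but \emph{not} equal to $x$ does the job, because $(\alpha\FS)(n)=x$.

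\emph{Case 2: $\beta$ has a command $\tau=\(n,q,y)$ on node $n$.} Since $\sigma\notin\beta$ we have $\tp(\sigma)\neq\tp(\tau)$, i.e.\ $(n,t,\tp(x))\neq(n,q,\tp(y))$. I again build $\FS$ so that $\alpha\FS\neq\bot$. Running $\beta$ on this $\FS$, if it breaks we are done; if not, then the input type seen by $\tau$ must be $q$ and the type left at $n$ is $\tp(y)$, whereas $\alpha$ needs type $t$ at $n$ and leaves $\tp(x)$. If $t\neq q$ I can pick $\FS(n)$ of type $t$, forcing $\beta$ to break, and we are done. Otherwise $t=q$ but $\tp(x)\neq\tp(y)$, so after the two sequences the \emph{types} at $n$ differ, and again $\alpha\FS\neq\beta\FS$. (When $\beta$ happens to have several commands on $n$, ``$\tau$'' here should be taken as the net effect of that block, which by Proposition~\ref{command-pairs-1} is a single command up to $\xle$; this does not change the argument.)

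The step I expect to be the real obstacle is the construction of the witness filesystem that simultaneously (i) is not broken by $\alpha$ and (ii) forces the desired behavior of $\beta$ at node $n$: one must check that the constraints Theorem~\ref{non-breaking} imposes on $\FS$ for the sake of $\alpha$ never collide with the choice of type/value at $n$ demanded to separate $\beta$, and that when they \emph{would} collide (e.g.\ $n$ is comparable to a structural command of $\alpha$ and $\beta$ also needs a definite type there) the types are in fact forced to agree, so that the remaining freedom in the two extra values per type still suffices. This bookkeeping — organizing the $\orderrel$-chains and subtrees of the simple set of $\alpha$ around the node $n$, using the Proposition after Theorem~\ref{thm:struct} and Proposition~\ref{claim:add-one} — is where the argument has to be carried out carefully; everything else is an application of the structural results already in hand.
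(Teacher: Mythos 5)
Your proposal is, in substance, the paper's own proof written in contrapositive form: the first statement is obtained by constructing, from the leaders of a simple $\alpha\not\wequiv\cbrk$, a filesystem satisfying the conditions of Theorem~\ref{non-breaking}; the second statement reduces to showing that two semantically equivalent non-breaking simple sequences must consist of the same commands and then invoking Theorem~\ref{thm:struct}(b), with the two-values-per-type hypothesis entering exactly where a command's replacement value could coincide with the value already stored at its node. So the route is the same; only two points need repair.

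First, in your Case~2 the deduction ``since $\sigma\notin\beta$ we have $\tp(\sigma)\neq\tp(\tau)$'' is false: simplicity of $\beta$ only gives $\sigma\neq\tau$, so the subcase $t=q$, $\tp(x)=\tp(y)$, $x\neq y$ is covered by neither of your branches. It is in fact the easiest subcase (the two runs leave the distinct values $x$ and $y$ at $n$), but as written your case split misses it. Second, the ``real obstacle'' you flag at the end is not an obstacle, and the paper settles it in one line: on any filesystem $\FS$ that $\alpha$ does not break, the type of $\FS(n)$ is already forced to be $t$ by condition~(a) of Theorem~\ref{non-breaking}, so the only freedom you need is the choice of the value \emph{within} that type; replacing $\FS(n)$ by another value of the same type alters no type anywhere, so by Corollary~\ref{file-comparable} it affects the applicability of neither $\alpha$ nor $\beta$. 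Relatedly, your parenthetical that the two-values hypothesis is needed for condition~(c) of Theorem~\ref{non-breaking} (and hence in the degenerate case and the first statement) is misplaced -- the first statement needs no such hypothesis; its only genuine use is the transient, same-value situation you correctly identify in Case~1. Finally, the aside about $\beta$ having several commands on $n$ is vacuous, since $\beta$ is simple.
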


\begin{proof}
a) Assume $\alpha\not\wequiv\cbrk$. It suffices to
construct a filesystem $\FS$ which satisfies the conditions of
Theorem \ref{non-breaking}. Start with the empty filesystem. For each leader
$\sigma\in\alpha$ on a node $n$ that does not match $\(n,\E,\E)$, change the nodes
above $n$ in $\FS$ to a directory value, and set $n$ to a value matching the input
type of $\sigma$.

After the process finishes, condition (a) of Theorem \ref{non-breaking}
clearly holds for leaders. If $\tau$ is not a leader, then it is either an up
command or a down command. In the first case its leader is above $\tau$,
thus the node of $\tau$ is empty in $\FS$ (leaders are on uncompatible
nodes), as required. If $\tau$ is a down
command, then its leader is below $\tau$, and then the node of $\tau$ is a
directory node. Condition (b) is clear from the construction. For (c) remark
that non-empty values are only above nodes mentioned in $\alpha$.

\smallskip

b)
Assume $\alpha$ and $\beta$ are simple sequences, and none of them breaks
every filesystem. If they contain the same command set, then (b) of
Theorem \ref{thm:struct} gives the claim, thus it suffices to show
that, e.g., each command in $\beta$ is also in $\alpha$. Let $\sigma\in\beta$ be
a command on node $n$, and $\FS$ be a filesystem on which both $\alpha$ and
$\beta$ work. If $\alpha$ does not contain any command on $n$, then
$(\beta\FS)(n)$ differs from $\FS(n) = (\alpha\FS)(n)$, contradicting
$\alpha\equiv\beta$, except when the replacement value in $\sigma$ is the
same as $\FS(n)$. In this case, however, $\FS(n)$ can be replaced by another
value from the same data type to force $\FS(n)$ and $(\sigma\FS)(n)$ be
different.
If $\alpha$ contains a command $\tau$ on $n$, then the
input types of $\sigma$ and $\tau$ must be the same, and the replacement
values of $\sigma$ and $\tau$ must be the same, thus $\sigma$ and $\tau$ are
the same commands.
\end{proof}

In the case when there is only one value in some data type, the second part
of the theorem does not remain true. For example, if there is only one
element in the empty data type, then $\(n,\E,\EE)$ commands are guaranteed not to make
any change to the filesystem, thus $\alpha$ and $\beta$ may contain
additional commands of this type without changing their semantics. If these
``no-operation'' commands are deleted from the equivalent $\alpha$ and
$\beta$ sequences extending their semantics, or if the rewriting rules are
extended with removal rules for such commands as in \cite{epcs}, then they become
rewritable.


\section{Update detection}\label{sec:update}

Given the original filesystem $\FS$ and its copy $\FS_1$ modified at
finitely many nodes, 
determine the simple update sequence $\alpha\in\Omega^*$ for which
$\alpha\FS=\FS_1$ as follows:
\begin{enumerate}
\item For each node $n\in\setn$ where $\FS(n)$ and $\FS_1(n)$ differ, add the
command $\(n,\tp(\FS(n)),\FS_1(n))$ to the command set $S$.
\item Order $S$ to become sequence $\alpha$ which honors $\orderrel$.
\end{enumerate}

The set $S$ can be created by traversing $\FS$ and $\FS_1$ simultaneously.
Every command set can be ordered to honor $\orderrel$ by first creating the
transitive closure of $\orderrel$, and then using topological sort. The
sorting procedure is clearly quadratic in the number of commands.

\begin{theorem}[Correctness of update detection]\label{thm:update-1}
The simple command sequence $\alpha\in\Omega^*$ returned by the update detector
works as expected: $\alpha\FS = \FS_1$.
\end{theorem}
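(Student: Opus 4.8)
The plan is to verify the three conditions of Theorem~\ref{non-breaking} for the sequence $\alpha$ and the filesystem $\FS$, which will show $\alpha$ does not break $\FS$, and then to argue that $\alpha\FS$ agrees with $\FS_1$ on every node. First I would observe that by construction every command of $\alpha$ is on a node $n$ where $\FS(n)\neq\FS_1(n)$, that $\alpha$ is simple (at most one command per node), and that $\alpha$ honors $\orderrel$ by step~2; moreover $\alpha$ is on finitely many nodes and each command $\(n,\tp(\FS(n)),\FS_1(n))$ has precondition matching $\FS(n)$. So condition (a) of Theorem~\ref{non-breaking} is immediate, and it remains to check (b) and (c), and separately that $\alpha\not\wequiv\cbrk$ so that Theorem~\ref{non-breaking} applies at all.

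For $\alpha\not\wequiv\cbrk$: since the commands of $\alpha$ are on distinct nodes, by Theorem~\ref{thm:struct}(d) it suffices to show this command set is a simple set and $\alpha$ honors $\orderrel$; the latter holds by construction, so I would check that for any two commands $\sigma,\tau\in\alpha$ on comparable nodes $n\prec m$ with $\sigma\notindep\tau$, there is a $\orderrel$-chain in $S$ connecting them. Here is where the tree property of both $\FS$ and $\FS_1$ does the work: if $\sigma\notindep\tau$ then (by the definition of $\indep$) the command on the higher node is not of type $\(\cdot,\D,\D)$ and the command on the lower node is not of type $\(\cdot,\E,\E)$, so along the path from $n$ to $m$ the types in $\FS$ and in $\FS_1$ must change in a way forced by the tree property, and I would show that every intermediate node carries a command and that consecutive ones are $\orderrel$-related --- essentially the same argument as Lemma~\ref{lemma-1}, run on the data rather than on the sequence. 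Conditions (b) and (c) of Theorem~\ref{non-breaking} follow from the same type-counting: a leader matching $\(n,\E,\F\D)$ creates a non-empty value at $n$, so $\FS_1$ has a non-empty value at $n$, forcing every node above $n$ to be a directory in $\FS_1$; but those nodes carry no command (else $n$ would not be a leader, or the node above would not differ appropriately), hence they are directories in $\FS$ too. Condition (c) is dual: if $\sigma$ matches $\(n,\D,\F\E)$ then $\FS_1(n)$ is not a directory, so by the tree property every node below $n$ is empty in $\FS_1$, and an unmentioned node below $n$ has the same (empty) value in $\FS$.

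Once $\alpha\FS\neq\bot$ is established, the equality $\alpha\FS=\FS_1$ follows node by node: for a node $n$ carrying a command $\(n,\tp(\FS(n)),\FS_1(n))$, the last (and only) command on $n$ sets its value to $\FS_1(n)$; for a node $n$ carrying no command, $\FS(n)=\FS_1(n)$ by the definition of $S$, and no command touches $n$, so $(\alpha\FS)(n)=\FS(n)=\FS_1(n)$. Since $\alpha$ acts only by pointwise replacements at the mentioned nodes (Proposition~\ref{prop:models}-style reasoning, or simply unwinding the left action), these two cases exhaust all nodes, giving $\alpha\FS=\FS_1$. The main obstacle I anticipate is the chain argument establishing $\alpha\not\wequiv\cbrk$ --- i.e.\ showing the detected command set is genuinely a simple set --- since it requires translating the purely syntactic Lemma~\ref{lemma-1} into a statement about the two concrete filesystems $\FS$ and $\FS_1$ and carefully tracking how the tree property constrains type changes along a path; the rest is bookkeeping.
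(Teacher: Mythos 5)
Your proposal is correct in outline, but it takes a genuinely different route from the paper. The paper gets the theorem almost for free: the constructive proof of the functional-completeness Theorem~\ref{thm:sem-compl} (see the remark following it) already yields a sequence $\beta$ with $\beta\FS=\FS_1$ whose command set is exactly the detected set $S$; since $\beta$ is then a non-breaking simple sequence, Theorem~\ref{thm:struct} gives that $S$ is a simple set and that any permutation of it honoring $\orderrel$ --- in particular $\alpha$ --- is equivalent to $\beta$, hence $\alpha\FS=\beta\FS=\FS_1$. You instead verify directly that $S$ is a simple set (your ``data-level'' rerun of Lemma~\ref{lemma-1}), invoke Theorem~\ref{thm:struct}(d) for $\alpha\not\wequiv\cbrk$, check the conditions of Theorem~\ref{non-breaking} to see that $\alpha$ does not break $\FS$, and finish node by node. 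This does work: the chain argument you flag as the main obstacle goes through, since $\sigma\notindep\tau$ on comparable nodes forces, via the tree property of both $\FS$ and $\FS_1$, a structural command of the matching (all-up or all-down) kind on every intermediate node, and these are consecutive $\orderrel$-pairs. One small inaccuracy in your check of condition (b): nodes above the node $n$ of a leader matching $\(n,\E,\F\D)$ may well carry commands, namely transient ones matching $\(\cdot,\D,\D)$; what you can rule out (because otherwise $\parent n$ would have to carry an up command, contradicting leadership) is a command there with non-directory input type, and that is all you need for those nodes to be directories in $\FS$. In short, your route buys independence from Theorem~\ref{thm:sem-compl} at the cost of redoing work the paper has already banked there; the paper's route buys a three-line proof.
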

\begin{proof}
By Theorem \ref{thm:sem-compl} there is a command sequence $\beta$
such that $\beta\FS=\FS_1$, and this command set consists of exactly the
commands in the above set $S$. By Theorem \ref{thm:struct} this set $S$ is
simple, and any ordering of $S$ honoring $\orderrel$ is 
semantically equivalent to $\beta$.
\end{proof}

\begin{theorem}\label{thm:update-2}
Suppose $\FS_1=\alpha^*\FS$ for some command sequence $\alpha^*$. Then there is a simple
sequence $\alpha$ such that $\FS_1=\alpha\FS$. This $\alpha$ can be computed
from $\alpha^*$ in quadratic time.
\end{theorem}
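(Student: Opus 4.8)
The plan is to produce $\alpha$ purely syntactically from $\alpha^*$ via the rewriting procedure underlying Theorem~\ref{thm:rewriting}, and then read the quadratic bound off that procedure.

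Since $\alpha^*\FS=\FS_1\neq\bot$, the sequence $\alpha^*$ does not break $\FS$; as the rewriting rules are semantically sound, $\alpha^*\wequiv\cbrk$ would force $\alpha^*\equiv\cbrk$, so $\alpha^*\not\wequiv\cbrk$. Theorem~\ref{thm:rewriting} then supplies a simple $\alpha$ with $\alpha^*\wxle\alpha$, and soundness of $\wxle$ gives $\alpha^*\xle\alpha$. Because $\alpha^*$ is defined on $\FS$, this means $\alpha\FS=\alpha^*\FS=\FS_1$, so $\alpha$ has the required property; note it is obtained from $\alpha^*$ with no reference to $\FS$ whatsoever.

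For the running time I would make the procedure behind Theorem~\ref{thm:rewriting} explicit and iterate it. A single \emph{pass} scans the current sequence left to right while keeping the set of already-seen nodes; in $O(\ell)$ time, where $\ell$ is the current length, it either certifies that the sequence is simple, or locates the first command $\sigma$ whose node $m$ occurred earlier together with the last previous command $\tau$ on $m$, yielding a factorization $\beta\,\tau\gamma\sigma\,\beta'$ in which $\gamma$ contains no command on $m$. The inductive argument in the proof of Theorem~\ref{thm:rewriting}, applied to $\tau\gamma\sigma$ (which is non-breaking, so the ``$\cbrk$'' alternatives of Propositions~\ref{remark-R1*} and~\ref{command-pairs-1} are never triggered), unwinds into at most $|\gamma|$ commutation steps followed by one merge of two same-node commands into a single command. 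Hence a pass costs $O(\ell)$ and strictly decreases the length by exactly one (the single merge). The result is again non-breaking by soundness of $\wxle$, so Theorem~\ref{thm:rewriting} keeps applying, and after at most $|\alpha^*|$ passes the sequence is simple; the total work is $O(|\alpha^*|^2)$.

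The main point needing care is precisely this per-pass bound: one must check that the recursion in the proof of Theorem~\ref{thm:rewriting} always collapses into one of the two ``commute and recurse'' sub-cases until $\gamma$ is exhausted. That is exactly where the original proof argues the remaining sub-cases are impossible: if $|\gamma|\ge 2$ they would place two structural commands on distinct nodes immediately related to $m$ inside the simple sequence $\gamma$, forcing, by Lemma~\ref{lemma-1}, a command on $m$ within $\gamma$ — a contradiction; and if $|\gamma|=1$ they would connect $\tau$ and $\sigma$ by a $\orderrel$-chain of length two, putting them two steps apart rather than on their common node $m$ — again impossible. Once this bookkeeping is recorded, the ``length decreases by one per pass'' observation closes the complexity bound.
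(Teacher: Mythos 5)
Your proposal is correct and follows essentially the same route as the paper: existence of the simple $\alpha$ via Theorem~\ref{thm:rewriting} together with soundness of $\wxle$ (so $\alpha\FS=\alpha^*\FS=\FS_1$), and the quadratic bound by making the rewriting procedure explicit---repeatedly locating a same-node pair, commuting the two commands together, and merging them, each pass costing linear time and shortening the sequence by one. Your extra bookkeeping (non-breaking excludes the $\cbrk$ branches; at most $|\gamma|$ commutations per merge) just spells out what the paper's proof sketches.
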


\begin{proof}
By Theorem \ref{thm:rewriting} there exists a simple sequence $\alpha$ such
that $\alpha^*\wxle\alpha$, and then $\alpha\FS=\FS_1$. The proof also
indicates a quadratic algorithm generating $\alpha$. For each command
$\sigma\in\alpha^*$ search backward from $\sigma$ to find the first command
$\tau\in\alpha^*$ which is on the same node as $\sigma$. If such a $\tau$ is
found, then use commutativity rules to move $\tau$ ahead and $\sigma$
backward until $\tau$ and $\sigma$ are next to each other, and then replace
them by a single command.
\end{proof}


\section{Refluent sequences}\label{sec:refluent}

Recall that two simple command sequences $\alpha$ and $\beta$ are
\emph{refluent} if there is a filesystem which neither $\alpha$ nor $\beta$
breaks. Using the $\models$ notation, it can be expressed as
$\{\alpha,\beta\}\nmodels\cbrk$.
This section starts with a characterization of refluent pairs in the special
case when the node sets of $\alpha$ and $\beta$ are disjoint. A general
reduction theorem together with a partial converse is provided for the case
when $\alpha$ and $\beta$ share commands on the same node.

\begin{theorem}[Refluent sequences]\label{thm:refluent}
The node-disjoint non-breaking simple sequences $\alpha$ and $\beta$ are
refluent if and only if
for each leader $\sigma$ in $\alpha$
one of the following conditions hold:
\begin{itemize}
\item $\sigma\indep\beta$, or
\item $\sigma$ matches $\(n,\E,\F\D)$ and there is a command on
$\parent n$ in $\beta$
matching $\(\parent n,\D,\F\E)$, or
\item $\sigma$ matches $\(\parent n,\D,\F\E)$ and some leader in
$\beta$ matches $\(n,\E,\F\D)$;
\end{itemize}
and the symmetric statements hold for each leader in $\beta$.
\end{theorem}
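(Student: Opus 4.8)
The plan is to reduce the problem to a node-by-node analysis built on Theorem~\ref{non-breaking}, using the fact that $\alpha$ and $\beta$ have disjoint node sets, so that the only way one sequence can break after the other is through the ``global'' conditions (b) and (c) of that theorem—the requirement that ancestors be directories for up-leaders, and that undescribed descendants be empty for down-commands. For the forward direction I would assume $\alpha$ and $\beta$ are refluent, fix a filesystem $\FS$ broken by neither, and run $\alpha\FS$, then examine whether $\beta$ still works on $\alpha\FS$. The point is that $\alpha\FS$ and $\FS$ agree on every node not mentioned in $\alpha$; so by Corollary~\ref{file-comparable}, whether $\beta$ breaks $\alpha\FS$ depends only on the \emph{types} of $\alpha\FS$ at nodes mentioned in $\beta$ or comparable to a structural command of $\beta$. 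For a leader $\sigma$ of $\alpha$ on node $n$ that is \emph{not} $\indep\beta$, Lemma~\ref{lemma-1} (or rather its proof as reflected in Theorem~\ref{thm:struct}) forces $\sigma$ and the offending command of $\beta$ to be structural on comparable nodes; a short case analysis of which structural types can coexist—together with the observation that $\alpha$, being non-breaking and simple, must itself populate the intermediate nodes consistently (Theorem~\ref{non-breaking}(b),(c))—pins down exactly the two listed ``interlocking'' patterns: $\sigma$ is an up-leader at $n$ whose parent carries a down-command of $\beta$, or $\sigma$ is a down-command at $\parent n$ whose child carries an up-leader of $\beta$. Because the situation is symmetric in $\alpha$ and $\beta$, the same conclusion holds with the roles swapped.

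For the converse I would construct a common non-breaking filesystem directly, imitating the construction in the proof of Theorem~\ref{thm:simple-completeness}(a). Start from the empty filesystem; process the leaders of $\alpha$ and of $\beta$ together, and for each leader $\sigma$ on a node $n$ that does not match $\(n,\E,\E)$, set all nodes above $n$ to directory values and set $n$ to a value of $\sigma$'s input type. The hypotheses of the theorem are exactly what is needed to check this is well defined and yields a filesystem satisfying the conditions of Theorem~\ref{non-breaking} for \emph{both} $\alpha$ and $\beta$: if an up-leader $\sigma$ of $\alpha$ at $n$ has a down-command of $\beta$ on $\parent n$, then by the symmetric hypothesis that down-command's behaviour is compatible with the up-leader (an up-leader of $\beta$ sitting below it, namely $\sigma$'s place is taken by the $\beta$-leader), so assigning $\parent n$ a directory and $n$ an empty value is consistent; the conflicting demands ``ancestor must be a directory'' (from up-leaders) and ``descendants not mentioned must be empty'' (from down-commands) never collide precisely because the listed exceptions license exactly the overlaps that would otherwise be fatal. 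One then verifies conditions (a), (b), (c) of Theorem~\ref{non-breaking} for $\alpha$ and, by the symmetric construction, for $\beta$, so the filesystem is broken by neither and the pair is refluent.

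The main obstacle I expect is the forward direction's case analysis: showing that a non-independent leader $\sigma$ of $\alpha$ \emph{must} fall into one of the two interlocking patterns rather than some third incompatible configuration. This is where one has to use not just $\sigma$ but the whole structure of $\alpha$—in particular, that $\alpha$ is itself non-breaking and simple forces the nodes between $\sigma$ and any other command of $\alpha$ on a comparable node to be populated in a tree-consistent way (via Theorem~\ref{non-breaking}(b),(c) and the subtree structure of simple sets in the Proposition preceding Theorem~\ref{thm:struct}), and the node-disjointness of $\alpha$ and $\beta$ means the ``witness'' command in $\beta$ interacts with $\alpha$'s populated namespace only through type compatibility at a single comparable node. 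Careful bookkeeping of the six structural command types on two comparable nodes, eliminating the combinations that would make $\{\alpha,\beta\}$ break every filesystem, is the crux; once that is done, both directions are essentially an application of Theorem~\ref{non-breaking} and Corollary~\ref{file-comparable}.
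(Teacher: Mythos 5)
Your sufficiency direction (conditions imply refluent) is essentially the paper's: run the leader construction from the proof of Theorem~\ref{thm:simple-completeness}(a) for the leaders of $\alpha$ and $\beta$ together and check the conditions of Theorem~\ref{non-breaking} for both sequences; apart from bookkeeping that the paper also leaves implicit, this part is fine.

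The necessity direction, however, has a genuine gap. First, the framing is off: refluence only provides a filesystem $\FS$ on which $\alpha$ and $\beta$ are \emph{separately} defined; it says nothing about $\beta$ being applicable to $\alpha\FS$, and in the interesting cases it is not, so asking via Corollary~\ref{file-comparable} whether ``$\beta$ still works on $\alpha\FS$'' does not address refluence. Second, you cannot invoke Lemma~\ref{lemma-1} (or Theorem~\ref{thm:struct}) to conclude that a leader $\sigma\in\alpha$ with $\sigma\notindep\beta$ and its witness $\tau\in\beta$ are structural commands linked by a populated chain: those results concern commands inside a \emph{single} non-breaking simple sequence, and the union of $\alpha$ and $\beta$ is in general not one -- that is precisely why reconciliation is nontrivial. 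The real work, which your plan defers to ``a short case analysis,'' is semantic and is more than classifying structural types on two comparable nodes, because the witness $\tau$ may lie far from the node $n$ of $\sigma$; the content of the theorem is that the interaction is forced onto \emph{immediately adjacent} nodes. The paper does this by first reducing to the pair $\{\sigma,\beta\}$ (permuting $\alpha$ to start with its leader $\sigma$ by Theorem~\ref{thm:struct}) and then arguing on $\FS$ itself: if the witness lies below $n$, then $\FS(n)$ is a directory, hence $\sigma$ matches $\(n,\D,\F\E)$ and every node below $n$ is empty in $\FS$; the witness is therefore an up command, and the leader (in $\beta$) of its $\orderrel$-subtree must be an $\(m',\E,\F\D)$ command at a \emph{child} $m'$ of $n$, since when that leader is executed the values at and above $m'$ are still those of $\FS$, which are empty strictly below $n$. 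The symmetric analysis when the witness lies above $n$ forces the nearest such command onto $\parent n$ with pattern $\(\parent n,\D,\F\E)$ and forces $\FS(n)$, hence the input type of $\sigma$, to be empty. Without this localization argument -- which is also what rules out, say, a transient $\sigma$ interacting with a distant down command of $\beta$ -- the stated characterization is not established.
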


\begin{proof}
Assume first that $\alpha$, $\beta$ satisfy the above conditions. Create the
filesystem $\FS$ by repeating the process indicated in Theorem
\ref{thm:simple-completeness} for both $\alpha$ and $\beta$: start from the
empty filesystem, and for each leader in $\alpha$ and in $\beta$ execute the
described modifications of $\FS$.
Conditions of this theorem guarantee that Theorem \ref{non-breaking} applies
equally
to $\FS$ and $\alpha$ and to $\FS$ and $\beta$.

\smallskip

For the converse suppose $\alpha$ and $\beta$ are refluent and $\sigma$ is a
leader in $\alpha$. By Theorem \ref{thm:struct} there is an equivalent
permutation of $\alpha$ which starts with $\sigma$; and then the pair
$\{\sigma,\beta\}$ is refluent as well. Thus we may suppose that the first
sequence consists of this single leader only; let the node of $\sigma$ be
$n$.

If $\sigma\indep\beta$ then we are done, so suppose otherwise, which means
that there is a command $\tau\in\beta$ on a node $m$ comparable to $n$. 
Let $\FS$ be a filesystem neither $\sigma$ nor $\beta$ breaks.
If
$m$ is below $n$, then $\FS(n)$ is a directory (as $\tau$ is not an
$\(n,\E,\E)$ command and there are no commands on $n$ in $\beta$). As $\sigma$ changes $\FS(n)$ to a non-directory, every
node below $n$ is empty -- in particular, $\FS(m)$ is empty. Now $\tau$
changes $\FS(m)$ to a non-empty value, thus $\tau$ is an up command.
Consider the leader of the subtree of $\tau$, let it be $\tau'$ on node
$m'\preceq m$. There are no commands in $\beta$ above $m'$, and there are
structural commands in $\beta$ on every node between $m'$ and $m$.
Consequently $m'$ is below $n$, and just before $\tau'$ is executed, the
content of all nodes on and above $m'$ are the same as in $\FS$. As in $\FS$
all nodes below $n$ are empty, $m'$ must be the child of $n$, and the
leader $\tau'$ matches $\(m',\E,\F\D)$, while $\sigma$ matches $\(\parent
m',\D,\F\E)$.

The last case is when $n$ is below the node of some command in $\beta$.
Consider $\tau\in\beta$ on the node $m$ above $n$ such that no command in
$\beta$ is on a node between $n$ and $m$. As $\sigma$ is not a $\(n,\E,\E)$
command, and $\sigma\FS$ is not broken, all nodes above $m$ are directories.
$\tau$ is not $\(m,\D,\D)$, and when executed, the content at nodes between
$n$ and $m$ is the original value of $\FS$. As $\beta$ does not break $\FS$,
$m$ must be the parent of $n$, and $\FS(n)$ must be empty, leading to the
second possibility.
\end{proof}

Next we consider the general case when the simple sequences can share
commands on the same node. The following lemmas will be used later.

\begin{lemma}\label{lemma-4}
Suppose $\alpha$, $\beta$ are refluent simple sequences and $\tau_1$,
$\tau_2\in\beta$ are such that $\tau_1\orderrel\tau_2$. If $\tau_2\indep\alpha$,
then $\tau_1\indep\alpha$.
\end{lemma}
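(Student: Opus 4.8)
The plan is to analyze the geometry forced by $\tau_1\orderrel\tau_2$ together with refluence, and show that if some command $\sigma\in\alpha$ were not independent of $\tau_1$, then $\sigma$ would fail to be independent of $\tau_2$ either. Recall that $\tau_1\orderrel\tau_2$ means both are structural commands on consecutive nodes, both up or both down; say $\tau_1$ is on node $p$ and $\tau_2$ on node $q$ with $p,q$ immediately related. Suppose for contradiction that $\sigma\in\alpha$ is on a node $r$ comparable to $p$ with $\sigma\notindep\tau_1$. By the definition of $\indep$, since $\tau_1$ is structural (hence not a $\(\cdot,\D,\D)$ command if it is the higher one, and not a $\(\cdot,\E,\E)$ command if it is the lower one), the non-independence $\sigma\notindep\tau_1$ forces constraints on $\sigma$ as well.

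The key steps, in order: first, record the two cases of $\orderrel$. In the \emph{down} case, $\tau_1$ matches $\(q',\D\F,\E)$ and $\tau_2$ matches $\(\parent q',\D,\F\E)$, with $p=q'$ below $q=\parent q'$; wait — I should match the definition exactly: $\(n,\D\F,\E)\orderrel\(\parent n,\D,\F\E)$, so in the down case $\tau_1$ is on the lower node $n$ and $\tau_2$ on $\parent n$. In the \emph{up} case $\(\parent n,\E\F,\D)\orderrel\(n,\E,\F\D)$, so $\tau_1$ is on the higher node $\parent n$ and $\tau_2$ on the lower node $n$. Second, in each case, use $\sigma\notindep\tau_1$ to locate $r$ relative to $p$ and to constrain the type of $\sigma$: since $\tau_1$ is structural, the "escape clauses" in the definition of $\indep$ (higher node is $\(\cdot,\D,\D)$, or lower node is $\(\cdot,\E,\E)$) cannot be satisfied on the $\tau_1$ side, so they must fail on the $\sigma$ side too, or $r$ is incomparable — but we assumed comparable, contradiction unless $\sigma$ itself is structural or appropriately typed. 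Third, and this is where refluence enters: use the filesystem $\FS$ on which neither $\alpha$ nor $\beta$ breaks, together with Theorem \ref{non-breaking} applied to both sequences (after permuting so that relevant leaders come first, via Theorem \ref{thm:struct}), to pin down the type of $\FS$ at the nodes between $r$, $p$, and $q$. Finally, show these forced types make $\sigma\notindep\tau_2$: roughly, $\tau_2$ lies on $\parent p$ or the child of $p$ along the same branch as $r$, so $r$ is comparable to $q$ as well, and the type constraints on $\sigma$ carry over, defeating every escape clause for $\sigma$ versus $\tau_2$.

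The main obstacle I expect is the third step — extracting enough information from refluence. Non-independence $\sigma\notindep\tau_1$ by itself (Proposition \ref{remark-R1*}) only tells us $\sigma\orderrel\tau_1$ or $\tau_1\orderrel\sigma$ when they are on immediately related nodes, but $r$ need not be immediately related to $p$; Lemma \ref{lemma-1} then says $\alpha$ contains structural commands all along the branch from $r$ to $p$ — except that $\tau_1\in\beta$, not $\alpha$, so Lemma \ref{lemma-1} does not directly apply to the pair $\sigma,\tau_1$. This is the crux: I must instead argue directly from the refluent filesystem $\FS$. Concretely, I would take $\FS$ witnessing refluence, apply Theorem \ref{non-breaking} to $\beta$ to see that along the $\orderrel$-chain through $\tau_1,\tau_2$ the nodes have a determined type profile in $\FS$, then apply Theorem \ref{non-breaking} to $\alpha$ at node $r$: condition (b) or (c) of that theorem, depending on whether $\sigma$ is up or down, forces the nodes between $r$ and $p$ (and beyond to $q$) to have specific types, and comparing with the $\beta$-profile yields that $r$ must in fact be immediately related to $q$ as well (or that $\sigma$ on $r$ and $\tau_2$ on $q$ violate independence for type reasons). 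Once the type bookkeeping along the single branch containing $r$, $p$, $q$ is laid out, $\sigma\notindep\tau_2$ drops out by inspecting the definition of $\indep$, completing the contrapositive.
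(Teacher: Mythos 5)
Your plan reduces the lemma to a pointwise implication---for each $\sigma\in\alpha$, if $\sigma\notindep\tau_1$ then $\sigma\notindep\tau_2$---and that is where it fails. In the up case, $\tau_1\orderrel\tau_2$ matching $\(\parent n,\E\F,\D)\orderrel\(n,\E,\F\D)$, take $\sigma$ to be an up command, say $\(m,\E,\FF)$, on a node $m$ below $\parent n$ but on a sibling branch of $n$. Then $\sigma\notindep\tau_1$, yet $\sigma\indep\tau_2$ holds trivially because $m$ and $n$ are incomparable; your concluding claim that ``$r$ is comparable to $q$ as well'' is simply false here, and the type bookkeeping ``along the single branch containing $r$, $p$, $q$'' cannot be carried out because these nodes do not lie on one branch. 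The lemma still holds in this configuration, but only because the hypothesis $\tau_2\indep\alpha$ concerns \emph{all} of $\alpha$, not just the single command $\sigma$; exploiting that is the idea missing from your outline.

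The paper's proof supplies it as follows. By refluence there is an $\FS$ on which both sequences work, so $\FS(\parent n)$ matches $\E\F$ (the input type of $\tau_1$), hence every node below $\parent n$ is empty in $\FS$ and every $\alpha$-command below $\parent n$ has input type $\E$. If such a $\sigma$ is not an $\(\cdot,\E,\E)$ command it is an up command; its leader in $\alpha$ has only directories above it in $\FS$ by Theorem \ref{non-breaking}(b), so that leader sits on or above $\parent n$, and the $\orderrel$-chain inside $\alpha$ connecting $\sigma$ to its leader (Theorem \ref{thm:struct}(c)) forces $\alpha$ to contain a command on $\parent n$ itself. That command has input type $\E$ or $\F$ and is therefore not independent of $\tau_2$, contradicting $\tau_2\indep\alpha$. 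In short, the argument must be global in $\alpha$: it applies Theorem \ref{non-breaking} to $\alpha$, uses the simple-set/leader structure of $\alpha$, and invokes the hypothesis at a command different from $\sigma$---none of which your proposal does. The down case of $\orderrel$, and the subcases with $\sigma$ on $n$, on $\parent n$, or above $\parent n$, do go through pointwise essentially as you sketch, so the gap is confined to, but decisive in, this one subcase.
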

\begin{proof}
Consider first the case when $\tau_1\orderrel\tau_2$ matches
$\(n,\D\F,\E)\orderrel\(\parent n,\D,\F\E)$. It suffices to consider commands
$\sigma\in\alpha$ which are on a node comparable to $n$. If there is a $\sigma$ on
$n$, then its input type is the same as that of $\tau_1$, namely not $\E$,
and then $\tau_2\notindep\sigma$. For the same reason $\sigma$ cannot be on
$\parent n$. Thus $\sigma$ is either below $n$, or above $\parent n$, and in
both cases $\tau_2\indep\sigma$ implies $\tau_1\indep\sigma$.

In the second case the pair matches $\(\parent n,\E\F,\D)\orderrel \(n, \E,
\F\D)$. $\sigma\in\alpha$ cannot be on $\parent n$ as the input type of
$\tau_1$ is not $\D$, and so $\tau_2\notindep\sigma$ would hold. Otherwise,
if $\sigma$ is above
$\parent n$ then $\tau_2\indep\sigma$ implies $\tau_1\indep\sigma$. In the
remaining cases the node $m$ of $\sigma$ is below $\parent n$. The input
type of $\tau_1$ is not a directory, thus every command below $\parent n$
must have $\E$ as input type by Theorem \ref{non-breaking}. If the output
type of $\sigma$ is also $\E$, then we are done. If not, then $\sigma$ is an
up command, and consider the leader (in $\alpha$) of $\sigma$; there is a
command in $\alpha$ on every node between $m$ and the leader. Every node
above the leader is a directory (Theorem \ref{non-breaking}), thus this
leader must be on or above $\parent n$. But then $\alpha$ has a command on
$\parent n$, which is a contradiction.
\end{proof}

\begin{lemma}\label{lemma-2}
Suppose $\tau\alpha$ and $\beta$ are non-breaking simple sequences, and
$\tau\indep\beta$. If $\alpha$ and $\beta$ are refluent, then so are
$\tau\alpha$ and $\beta$.
\end{lemma}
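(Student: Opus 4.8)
The plan is to exhibit a filesystem broken by neither $\tau\alpha$ nor $\beta$; I would obtain it from a refluence witness of $\alpha$ and $\beta$ by reshaping that witness around the node $n$ of $\tau$.

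First, the preparatory facts. As $\tau\alpha$ is a non-breaking simple sequence it honours $\orderrel$ (Theorem~\ref{thm:struct}), and being its first command, $\tau$ is $\orderrel$-minimal, hence a leader of $\tau\alpha$. Its node $n$ appears in neither $\alpha$ (simplicity of $\tau\alpha$) nor $\beta$ (the relation $\indep$ is only defined between commands on distinct nodes, so $\tau\indep\beta$ forces this). From $\tau\indep\beta$ one reads off that every command of $\beta$ on a node comparable with $n$ is either a $\(\cdot,\D,\D)$ command strictly above $n$ or a $\(\cdot,\E,\E)$ command strictly below $n$. For $\alpha$ I would combine $\tau$ being a leader with Lemma~\ref{lemma-1}: a command of $\alpha$ comparable with $n$ that is not independent of $\tau$ must sit in a $\orderrel$-chain running with $\tau$ through all intermediate nodes, and the leader property forbids such a chain on the side of $n$ on which $\tau$ would cease to be $\orderrel$-minimal; consequently every command of $\alpha$ on a node strictly above $n$ has input type $\D$ and every command of $\alpha$ on a node strictly below $n$ has input type $\E$ (exactly as for $\beta$), with the sole exceptions $\tau=\(n,\D,\D)$ and $\tau=\(n,\E,\E)$, where one side of $n$ is entirely unconstrained. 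Finally, since a refluence witness $\FS$ of $\alpha$ and $\beta$ is broken by neither, Theorem~\ref{non-breaking}(a) gives that $\FS$ carries at each command node of $\alpha$ and of $\beta$ precisely that command's input type, while Theorem~\ref{non-breaking}(b) controls the directory structure above the $\(\cdot,\E,\F\D)$ leaders.

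Next comes the reduction. Because $\tau\indep\beta$, Proposition~\ref{command-pairs-2}(a) applied to each command of $\beta$ yields $\tau\beta\equiv\beta\tau$. It therefore suffices to produce a filesystem $\FS''$ with $\alpha\FS''\neq\bot$, $\beta\FS''\neq\bot$, and which admits a $\tau$-preimage: a filesystem $\FS'\in\X$ agreeing with $\FS''$ off $n$, carrying some value of the input type of $\tau$ at $n$, and with $\tau\FS'=\FS''$. Then $\FS'$ is the desired common witness: $(\tau\alpha)\FS'=\alpha(\tau\FS')=\alpha\FS''\neq\bot$, while $\beta\FS'=\bot$ would give $\bot=\beta\FS''=\beta(\tau\FS')=\beta\tau\FS'=\tau\beta\FS'$, which is impossible. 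Inspecting the tree property shows that a suitable preimage exists whenever $\FS''$ carries the replacement value of $\tau$ at $n$, has directory values at the proper ancestors of $n$ that are required to be directories (and, in the sole case $\tau=\(n,\E,\E)$, keeps them as in $\FS$), and has empty values at the proper descendants of $n$ that are required to be empty (and, in the sole case $\tau=\(n,\D,\D)$, keeps them as in $\FS$).

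Finally I would build such an $\FS''$ from the witness $\FS$ by performing exactly those changes at $n$ and at the relevant ancestors and descendants of $n$, leaving $\FS$ untouched elsewhere; the altered nodes form one branch segment that reads ``directories, then $n$, then empties'', so $\FS''\in\X$. That $\alpha$ and $\beta$ still do not break $\FS''$ follows from Corollary~\ref{file-comparable} together with the preparatory facts: an altered node bearing a command of $\alpha$ or $\beta$ has been given precisely that command's input type, which by Theorem~\ref{non-breaking}(a) is what $\FS$ already had there; and an altered node bearing no such command is either irrelevant for $\alpha$ and $\beta$ in the sense of Corollary~\ref{file-comparable}, or already carried the installed value in $\FS$ because a structural command of $\alpha$ or $\beta$ forced the tree property to put one there. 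I expect this last verification to be the main obstacle: it unfolds into a case split on the input and output types of $\tau$, and in each case — using that $\tau$ is a leader of $\tau\alpha$ together with Theorem~\ref{non-breaking} — one has to check both that the region around $n$ can be reset without violating the tree property and that no configuration of commands of $\alpha$ or $\beta$ near $n$ is broken by the reset; the leader property and $\tau\indep\beta$ are exactly what rule out the problematic configurations.
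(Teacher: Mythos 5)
Your plan is correct and, at its core, it is the same argument as the paper's: take a refluence witness $\FS$ of $\alpha$ and $\beta$ and perform type surgery on it around the node $n$ of $\tau$, then conclude via Theorem~\ref{non-breaking} and Corollary~\ref{file-comparable}. The packaging differs in two ways. First, you build the post-$\tau$ filesystem $\FS''$ and pull it back along a $\tau$-preimage, using $\tau\indep\beta$ to commute $\tau$ past $\beta$; the paper instead constructs the pre-$\tau$ witness $\FSa$ directly, which avoids the commutation step. Second, and more substantially, the paper's case split is driven by Proposition~\ref{claim:add-one} (either $\tau\indep\alpha$ or $\tau\orderrel\sigma$ for some $\sigma\in\alpha$), and in the two $\orderrel$ cases the surgery is deliberately minimal because Theorem~\ref{non-breaking} applied to $\alpha$ already forces the surrounding types (e.g.\ when $\tau\orderrel\(\parent n,\D,\F\E)$, condition (c) makes $n$ and everything below it empty in $\FS$), so the verification that $\alpha$ and $\beta$ survive is nearly immediate. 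Your single uniform surgery pushes all of that work into the final check, which you only sketch; note that in the transient cases $\tau=\(n,\D,\D)$ and $\tau=\(n,\E,\E)$ Corollary~\ref{file-comparable} alone does not suffice, since commands of $\alpha$ below (respectively above) $n$ are then unconstrained and may be structural, so $n$ and its ancestors (respectively descendants) are ``relevant'' nodes whose type you change. Closing these cases needs exactly the supplementary observation you gesture at: any structural command of $\alpha$ or $\beta$ strictly below $n$ forces, via Theorem~\ref{non-breaking} and the tree property of the witness, that $\FS(n)$ and all ancestors of $n$ are already directories, and dually a structural command above $n$ forces $n$ and its descendants to be already empty, so the surgery is a type-preserving no-op at every relevant node. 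With that spelled out your argument goes through; the paper's decomposition buys a shorter verification, yours buys a single uniform construction at the price of a heavier case analysis on the type of $\tau$.
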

\begin{proof}
Let $\tau$ be on the node $n$, and $\FS$ be a filesystem which neither
$\alpha$ nor $\beta$ breaks. Our aim is to construct a filesystem $\FSa$ on
which both $\tau\alpha$ and $\beta$ work. As $\tau\alpha$ is non-breaking,
by Proposition \ref{claim:add-one} either $\tau\indep\alpha$ or
$\tau\orderrel\sigma$ for some $\sigma\in\alpha$. 

If $\tau\indep\alpha$, then to get $\FSa$,
change $\FS(n)$ to a value matching the input type of $\tau$, all nodes
above $n$ to a directory value (except when $\tau$ matches $\(n,\E,\E)$), and
all nodes below $n$ to an empty value (except when $\tau$ matches
$\(n,\D,\D)$). Clearly, $\tau$ is applicable to $\FSa$, and according to
Corollary \ref{file-comparable}, neither $\beta$ breaks $\FSa$, nor $\alpha$
breaks $\tau\FSa$.

If $\tau\orderrel\sigma$ matches $\(n,\D\F,\E)\orderrel\(\parent n,\D,\F\E)$,
then $\FS(\parent n)$ is a directory, and $\FS(n)$ is empty by Theorem 
\ref{non-breaking} (as $n$ is not mentioned in $\alpha$). To get
$\FSa$, change $\FS$ only at $n$ to a value matching the input type of
$\tau$. Then $\FSa$ is a filesystem, and due to Corollary \ref{file-comparable}
$\beta$ does not break $\FSa$, and $\alpha$ does not break
$\tau\FSa$.

Finally, if $\tau$ matches $\(n,\E\F,\D)$, then $\sigma$ is an up command on
a child of $n$, consequently $\FS(n)$ is a directory. All commands in
$\alpha$ and in $\beta$ with a node below $n$ have the empty input type. In this
case in $\FS$ change the content of every node below $n$ to an empty value,
and at $n$ to a value matching the input type of $\tau$. Then $\FSa$ is a
filesystem, and, as before, both $\tau\alpha$ and $\beta$ can be applied to
$\FSa$.
\end{proof}

Recall that $\tp(\sigma)=\tp(\tau)$ if these commands are on the same node,
and have the same input and the same output types.

\begin{lemma}\label{lemma-3}
Suppose $\alpha$ and $\beta$ are refluent simple sequences,
$\sigma\in\alpha$, $\tau\in\beta$, and $\sigma\orderrel\tau$. Then
there is a $\sigma'\in\beta$ such that $\tp(\sigma)=\tp(\sigma')$.
\end{lemma}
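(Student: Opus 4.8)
The plan is to reduce to a single leader on the $\alpha$-side, argue about what the refluent filesystem $\FS$ must look like locally, and then read off that $\beta$ is forced to contain a command of type $\tp(\sigma)$. First I would invoke Theorem \ref{thm:struct}(b): since $\sigma\orderrel\tau$ there is a $\orderrel$-chain in $\alpha$ from the leader of $\sigma$'s subtree down to $\sigma$ (or up to it), so by permuting $\alpha$ we may assume $\alpha$ begins with its relevant leader; more precisely, because $\sigma$ is $\orderrel$-related to $\tau\in\beta$, $\sigma$ itself is structural, say on node $n$, and $\sigma\orderrel\tau$ means one of the two patterns in the definition of $\orderrel$ holds with $\sigma,\tau$ in that order. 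I would treat the two cases separately. Say $\sigma$ matches $\(n,\D\F,\E)$ and $\tau$ matches $\(\parent n,\D,\F\E)$, i.e. both are down commands; the other case ($\sigma$ up on $\parent n$, $\tau$ up on $n$) is symmetric in the up/down sense.

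Next, fix a filesystem $\FS$ that neither $\alpha$ nor $\beta$ breaks (refluence). In the down case, $\tau\in\beta$ has input type $\D$ at $\parent n$; applying Theorem \ref{non-breaking} to $\beta$ and $\FS$ gives $\tp(\FS(\parent n))=\D$. Likewise $\sigma\in\alpha$ has input type $\D$ or $\F$ at $n$, so $\FS(n)\in\D\cup\F$. Now I want to pin down the input type of $\sigma$: since $\FS(\parent n)$ is a directory and $\FS(n)$ is not empty, the tree property is not yet threatened, but I must use the fact that $\beta$ actually contains a command on node $n$ of the right type. The key observation is that $\tau\in\beta$ matches $\(\parent n,\D,\F\E)$, so $\parent n$ is the node of a structural command in $\beta$; by Corollary \ref{file-comparable} (applied to $\beta$ and two filesystems agreeing on types at the node of every command and at nodes comparable to structural-command nodes), whatever $\FS(n)$'s type is, it must be compatible with $\beta$ not breaking. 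If $\beta$ had no command on $n$, then after $\beta$ acts, $(\beta\FS)(n)=\FS(n)$ has type $\D$ or $\F$ while $(\beta\FS)(\parent n)$ has been downgraded to $\F$ or $\E$ by (the $\tau$-segment of) $\beta$ — I would check, using Theorem \ref{non-breaking}(c) for $\beta$ and the fact that $\tau$ downgrades $\parent n$, that this violates the tree property, unless some command in $\beta$ on $n$ or below rescues it; tracing the $\orderrel$-chain through $\tau$ shows the rescuing command on $n$ must itself be a down command with the same input type as $\sigma$ and output $\E$. That is the command $\sigma'$, and $\tp(\sigma')=\(n,\tp(\FS(n)),\E)=\tp(\sigma)$.

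The main obstacle will be this last step: carefully showing that $\beta$ is \emph{forced} to contain a command on $n$, and that its input and output types coincide with those of $\sigma$ rather than merely being ``compatible.'' The output type is easy once the command exists (both $\sigma$ and $\sigma'$ downgrade to $\E$, being the lower element of a $\orderrel$-pair whose upper element downgrades), but the input type requires knowing $\FS(n)$ precisely, which in turn needs the tree-property bookkeeping across the whole $\beta$-sequence. I expect the cleanest route is: let $\tau'$ be the leader of $\tau$'s down-subtree in $\beta$ (so $\tau'$ is on a leaf); the $\orderrel$-chain from $\tau'$ up through $\tau$ reaches $\parent n$, and Theorem \ref{non-breaking}(c) applied to that chain says that at the moment $\tau$ fires in $\beta$, every node strictly below $\parent n$ not mentioned in $\beta$ is already empty. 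Since $\FS(n)$ is \emph{not} empty, $n$ must be mentioned in $\beta$; the command on $n$ lies on the $\orderrel$-chain (as it is the child of $\parent n$ and the subtree is a down-subtree by Theorem \ref{thm:struct}/the simple-set structure), hence it is the down command $\sigma'=\(n,t,\EE)$ where $t=\tp(\FS(n))$ is exactly the input type of $\sigma$. The up-case is handled the same way with the roles of ``above/below'' and ``$\D/\E$'' exchanged, using Theorem \ref{non-breaking}(b) in place of (c). This gives $\tp(\sigma)=\tp(\sigma')$ with $\sigma'\in\beta$, as required.
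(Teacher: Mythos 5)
Your proposal is correct and takes essentially the same route as the paper: fix a filesystem $\FS$ that neither sequence breaks, force $\beta$ to contain a structural command $\sigma'$ on the node of $\sigma$, then use the simple-set structure (Theorem \ref{thm:struct}) to get $\sigma'\orderrel\tau$ and hence equal output types, with equal input types because both $\sigma$ and $\sigma'$ must be applicable to the same value of $\FS$ at that node. The only difference is cosmetic -- where the paper argues inline that executing $\tau$ would break $\FS$ if $\beta$ had no structural command there, you cite conditions (b)/(c) of Theorem \ref{non-breaking} to force the node to be mentioned in $\beta$ -- and your clean final paragraph supersedes the muddled middle one (in the up case just note that condition (b) covers the leader of $\tau$'s chain, and the chain itself supplies commands on the intermediate nodes).
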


\begin{proof}
Let $\FS$ be a filesystem on which both $\alpha$ and $\beta$ work, and
suppose by contradiction that $\beta$ has no structural command on the node
of $\sigma$. Distinguish two cases based on the pattern of $\sigma\orderrel
\tau$. If it is $\(n,\D\F,\E) \orderrel \(\parent n, \D,\F\E)$, then $\FS(n)$
matches the input type of $\sigma$ (as otherwise $\alpha$ would break
$\FS$), in particular $\FS(n)$ is not empty. As $\beta$ has no structural
command on $n$, $\beta$ does not change the type of $\FS(n)$. But when
$\tau$ is executed, $\FS(\parent n)$ becomes a non-directory, which breaks
the filesystem.

In the second case the pattern is $\(\parent n,\E\F,\D) \orderrel
\(n,\E,\F\D)$. Then $\FS(\parent n)$ is not a directory and $\beta$ keeps the
value type stored here. However $\tau$ adds a non-empty value at $n$
breaking the filesystem.

Thus $\beta$ contains a structural command $\sigma'$ on the node of
$\sigma$. Since $\alpha$ and $\beta$ are refluent simple sequences, the
input types of $\sigma$ and $\sigma'$ are the same (both are
applicable to the same node). By Theorem
\ref{thm:struct} $\sigma'\orderrel\tau$ must also hold (as $\sigma'$ and
$\tau$ are structural commands on immediate relatives in a non-breaking
simple sequence). Combined with
the fact that $\sigma\orderrel\tau$ and $\sigma'\orderrel\tau$ implies that
$\sigma$ and $\sigma'$ have the same output type, the lemma follows.
\end{proof}

\begin{definition}
Write $\captp\alpha\beta$ for the set of commands from $\alpha$ which are on
the same node, have the same input type and the same output type as some
command in $\beta$:
$$
   \captp\alpha \beta = \{\sigma\in\alpha: \tp(\sigma)=\tp(\tau)
       \mbox{ for some } \tau\in\beta\,\}.
$$
\end{definition}
Clearly, if $\alpha$ and $\beta$ are simple, the elements of
$\captp\alpha\beta$ and $\captp\beta\alpha$ are in a
one-to-one correspondence: a command in one set corresponds to the command
in the other set on the same node; the pairs share the node, the input type,
and the output type.

\begin{theorem}[Reduction of refluent sequences]\label{thm:reduction}
Suppose $\alpha$ and $\beta$ are refluent simple sequences. Then
$\alpha\equiv\alpha_1\alpha_2$ and $\beta\equiv\beta_1\beta_2$ where
$\alpha_1$ consists of commands in $\captp\alpha\beta$, and $\beta_1$
consists of commands in $\captp\beta\alpha$. Furthermore $\alpha_2$ and
$\beta_2$ are refluent.
\end{theorem}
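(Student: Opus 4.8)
The plan is to peel off the shared-type commands from the front of $\alpha$ and $\beta$ simultaneously, one leader at a time. First I would invoke Theorem~\ref{thm:struct}(b) to recognize that I am free to reorder each of $\alpha$ and $\beta$ into any permutation honoring $\orderrel$, so the statement is really about splitting the underlying simple sets $S_\alpha$ and $S_\beta$. The claim then becomes: the set $\captp\alpha\beta$ can be realized as an \emph{initial segment} of some $\orderrel$-honoring linearization of $S_\alpha$ (and symmetrically for $\beta$), and the two remainders $\alpha_2,\beta_2$ are still refluent.

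The key structural point I would establish is a downward-closure property: if $\sigma\in\captp\alpha\beta$ and $\sigma'\in\alpha$ with $\sigma'\orderrel\sigma$, then $\sigma'\in\captp\alpha\beta$ as well. Indeed, since $\sigma\in\captp\alpha\beta$ there is $\tau\in\beta$ with $\tp(\tau)=\tp(\sigma)$, and since $\alpha$ honors $\orderrel$ and $\sigma'\orderrel\sigma$ both are structural on immediate relatives; by Theorem~\ref{thm:struct} applied to $\beta$ (or Lemma~\ref{lemma-3} run in the appropriate direction), $\beta$ must then also carry a structural command on the node of $\sigma'$ with matching input and output type, hence $\sigma'\in\captp\alpha\beta$. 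This closure under $\orderrel$-predecessors is exactly what guarantees that $\captp\alpha\beta$, together with $\mintp\alpha\beta=\alpha\setminus\captp\alpha\beta$ in that order, can be arranged to honor $\orderrel$: put an $\orderrel$-honoring linearization of $\captp\alpha\beta$ first, then one of $\mintp\alpha\beta$; no $\orderrel$-edge can point from the second block back into the first. Theorem~\ref{thm:struct}(b) then gives $\alpha\wequiv\alpha_1\alpha_2\equiv\alpha_1\alpha_2$, and symmetrically $\beta\equiv\beta_1\beta_2$.

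It remains to show $\alpha_2$ and $\beta_2$ are refluent, i.e. some filesystem is broken by neither. Here I would take a filesystem $\FS$ broken by neither $\alpha$ nor $\beta$ (exists by hypothesis) and consider $\FS' := \alpha_1\FS$. Since $\alpha_1$ is an initial segment of (an equivalent of) $\alpha$, $\alpha_2$ does not break $\FS'$. The work is to check $\beta_2$ does not break $\FS'$ either. Because the commands of $\alpha_1$ and $\beta_1$ are in $\tp$-correspondence and are structural-consistent, executing $\alpha_1$ changes $\FS$ at exactly the nodes where $\beta_1$ would, and to values of the same types; so $\alpha_1\FS$ and $\beta_1\FS$ agree on the types of all nodes that are nodes of commands in $\beta_2$, or comparable to nodes of structural commands in $\beta_2$ --- one has to verify that the only type-changes caused by $\alpha_1$ at such nodes are precisely those that $\beta_1$ also makes, which follows from $\alpha_1$'s commands all lying in $\captp\alpha\beta$. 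Corollary~\ref{file-comparable} then yields $\beta_2\FS' \neq \bot$ iff $\beta_2(\beta_1\FS)\neq\bot$, and the latter holds since $\beta=\beta_1\beta_2$ does not break $\FS$.

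I expect the main obstacle to be the last step: verifying rigorously that $\alpha_1$ and $\beta_1$ have ``the same effect on types'' at all nodes relevant to $\beta_2$ via Corollary~\ref{file-comparable}. The subtlety is that $\alpha_1$ and $\beta_1$ need not be equal as command sequences (they only correspond node-by-node), and one must rule out that $\alpha_1$ touches some node $m$ --- comparable to a structural command of $\beta_2$ --- that $\beta_1$ leaves alone, thereby changing a type that Corollary~\ref{file-comparable} needs fixed. Ruling this out uses that any command of $\alpha_1$ on such an $m$ would lie in $\captp\alpha\beta$, forcing a $\tp$-matching command in $\beta_1$ on $m$; and a transient command in $\alpha_1$ makes no type change at all. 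Assembling these cases carefully, together with the symmetric bookkeeping for leaders of $\beta$ via Lemma~\ref{lemma-4}, is the technical heart of the argument.
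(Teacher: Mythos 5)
Your proposal is correct and follows essentially the same route as the paper: the $\orderrel$-predecessor closure of $\captp\alpha\beta$ via Lemma~\ref{lemma-3} (noting $\orderrel$ depends only on $\tp$, so the relation transfers to the matching command in $\beta$), then reordering by Theorem~\ref{thm:struct}, and finally observing that $\alpha_1\FS$ and $\beta_1\FS$ agree in type at every node so that Corollary~\ref{file-comparable} transfers applicability of $\beta_2$. The worry in your last paragraph is vacuous for exactly the reason you state (the node-by-node $\tp$-bijection between $\alpha_1$ and $\beta_1$), and Lemma~\ref{lemma-4} is not needed here.
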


\begin{proof}
The first part of the theorem follows from the structural Theorem
\ref{thm:struct} after we show that if for any two commands
$\sigma \orderrel\tau$ from $\alpha$ such that $\tau\in\alpha_1$, then
$\sigma\in\alpha_1$ as well. But this is immediate from Lemma
\ref{lemma-3}.

To see the second part, observe that the simple command sets of $\alpha_1$
and $\beta_1$ are on the same node set with the same input and output types.
Consequently for any filesystem $\FS$, $\alpha_1\FS$ and $\beta_1\FS$ have
the same data type (but not necessarily the same value) at every node. Thus
if $\alpha_2(\alpha_1\FS)\neq\bot$ and $\beta_2(\beta_1\FS)\neq\bot$, then
$\beta_2(\alpha_1\FS)\neq\bot$ as well, showing that $\alpha_2$ and
$\beta_2$ are refluent indeed.
\end{proof}

A partial converse of Theorem \ref{thm:reduction} is true.

\begin{theorem}\label{thm:converse}
Suppose $\gamma\alpha$ and $\gamma\beta$ are non-breaking simple sequences.
They are refluent if and only if $\alpha$ and $\beta$ are refluent.
\end{theorem}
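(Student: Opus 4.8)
The plan is to prove the two directions separately, the forward direction ($\gamma\alpha$ and $\gamma\beta$ refluent $\Rightarrow$ $\alpha$ and $\beta$ refluent) being the substantive one since the backward direction should follow quickly from Lemma~\ref{lemma-2} applied command-by-command along $\gamma$. For the backward direction, suppose $\alpha$ and $\beta$ are refluent. Peeling $\gamma$ from the front one command at a time: if $\gamma = \tau\gamma'$, then since $\gamma\alpha$ is non-breaking, $\tau$ is a leader in $\gamma\alpha$, and an easy case analysis (or Proposition~\ref{claim:add-one}) lets us relate $\tau$ to the commands of $\gamma'\alpha$ and $\gamma'\beta$. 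The cleanest route is actually reverse induction on the length of $\gamma$: assume $\gamma'\alpha$ and $\gamma'\beta$ are refluent (inductive hypothesis with shorter prefix) and show $\tau\gamma'\alpha$ and $\tau\gamma'\beta$ are refluent. This is not literally Lemma~\ref{lemma-2}, because there $\tau\indep\beta$ was assumed; here we instead know $\tau$ sits in front of \emph{both} sequences, so I expect to adapt the construction in the proof of Lemma~\ref{lemma-2}: take a filesystem $\FS$ on which $\gamma'\alpha$ and $\gamma'\beta$ both work, then modify $\FS$ at the node of $\tau$ and the nodes above/below it (exactly as in that proof, depending on whether $\tau$ matches $\(\cdot,\E,\E)$ or $\(\cdot,\D,\D)$ or is structural) so that $\tau$ becomes applicable while, by Corollary~\ref{file-comparable}, neither $\gamma'\alpha$ nor $\gamma'\beta$ is broken on the new filesystem.

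For the forward direction, suppose $\gamma\alpha$ and $\gamma\beta$ are refluent; let $\FS$ be a filesystem that neither breaks. Then $\FSa := \gamma\FS$ is a well-defined filesystem, and both $\alpha$ and $\beta$ are applicable to $\FSa$ (since $\gamma\alpha\FS\neq\bot$ means $\alpha(\gamma\FS)\neq\bot$, and likewise for $\beta$). Hence $\FSa$ witnesses that $\alpha$ and $\beta$ are refluent. This direction is essentially immediate from the left-to-right action of command sequences and the definition of ``breaks''; the only thing to be careful about is that $\gamma\FS$ really is a genuine filesystem and not $\bot$, which holds because $\gamma$ is an initial segment of the non-breaking sequence $\gamma\alpha$.

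So the real content is packaged in the backward direction, and there the main obstacle is handling the case where $\tau$ is a structural up-command whose $\orderrel$-successor lies in $\alpha$ but whose interaction with $\beta$ is only through the independence/$\orderrel$-chain structure: I need the modified filesystem $\FSa$ to simultaneously satisfy the hypotheses of Corollary~\ref{file-comparable} relative to \emph{both} $\gamma'\alpha$ and $\gamma'\beta$, i.e.\ the type changes I make to accommodate $\tau$ must not disturb the types at any node that is a command-node of $\gamma'\alpha$ or $\gamma'\beta$ or comparable to a structural command of either. The argument in Lemma~\ref{lemma-2} already does exactly this kind of bookkeeping for one sequence using Theorem~\ref{non-breaking}; I expect the extension to two sequences sharing the prefix to go through by the same analysis, invoking Theorem~\ref{non-breaking} on $\gamma'\alpha$ and on $\gamma'\beta$ to control the relevant node types around $\tau$, and I would structure the proof as the three cases ($\tau\indep(\gamma'\alpha)$ and $\tau\indep(\gamma'\beta)$; $\tau\orderrel\sigma$ with $\sigma$ in $\gamma'\alpha$ or $\gamma'\beta$ matching $\(\parent n,\D,\F\E)\orderrel\(\cdot,\dots)$-type; and the up-command case) mirroring that lemma.
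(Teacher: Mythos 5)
Your proposal is correct and matches the paper's own argument: the paper likewise treats the direction from $\gamma\alpha,\gamma\beta$ refluent to $\alpha,\beta$ refluent as immediate (apply $\gamma$ to a common witness), and proves the other direction by reducing to the case where $\gamma$ is a single command and adapting the construction in the proof of Lemma~\ref{lemma-2}, exactly the case analysis you outline. Indeed, the paper gives even less detail than you do, stating only that ``an easy adaptation'' of that proof works.
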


\begin{proof}
The direction that if $\gamma\alpha$ and $\gamma\beta$ are refluent, then so
are $\alpha$ and $\beta$ is clear. The other direction follows from the
special case when $\gamma$ consists of a single command $\tau$. For this
case, however, an easy adaptation of the proof of Lemma \ref{lemma-2} works.
\end{proof}


\section{Reconciliation}\label{sec:reconciliation}

Let us revisit the problem of file synchronization. We have two simple
sequences $\alpha$ and $\beta$ which create two divergent replicas of the same original
filesystem. The goal is to find (preferably maximal) subsets which can
then be carried over to the other copy without destroying local
modifications.

\begin{definition}
(a) The sequence $\beta'$ formed from commands in $\beta$ is a
\emph{reconciler for $\alpha$ over $\beta$}, if $\beta'$ does not destroy
any local change made by $\alpha$, and is always applicable after $\alpha$
in the sense that $\{\alpha,\beta \} \models \alpha\beta'$.

\noindent (b)
The sequences $\alpha$ and $\beta$ are \emph{confluent} if they are
refluent, and there are reconcilers $\beta'$ and $\alpha'$
which create identical results,
written succinctly as $\{\alpha,\beta\}\models
\alpha\beta'\equiv\beta\alpha'$.
\end{definition}

\begin{theorem}[Confluent node-disjoint sequences]\label{thm:confluent}
The node-disjoint non-break\-ing simple sequences $\alpha$ and $\beta$ are
confluent if and only if $\alpha\indep\beta$. In this case the
reconciler sequences are $\beta$ and $\alpha$ respectively as
$\{\alpha,\beta\}\models\alpha\beta\equiv\beta\alpha$.
\end{theorem}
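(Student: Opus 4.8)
The plan is to prove both directions separately, leaning on Theorem~\ref{thm:confluent}'s companion Theorem~\ref{thm:refluent} for the hard implications. First I would settle the easy direction: if $\alpha\indep\beta$, then by Proposition~\ref{command-pairs-2}(a) every command of $\alpha$ commutes with every command of $\beta$, so $\alpha\beta\equiv\beta\alpha$, and in particular $\{\alpha,\beta\}\models\alpha\beta\equiv\beta\alpha\not\equiv\cbrk$; moreover $\beta$ trivially consists of commands of $\beta$ and—since its commands act on nodes disjoint from those of $\alpha$ and commute with $\alpha$—it destroys no local change made by $\alpha$, so $\beta$ is a reconciler for $\alpha$ over $\beta$, and symmetrically $\alpha$ is a reconciler for $\beta$ over $\alpha$. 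Hence $\alpha$ and $\beta$ are confluent with the stated reconcilers.

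For the converse, suppose $\alpha$ and $\beta$ are confluent but $\alpha\notindep\beta$; I want a contradiction. Confluence gives refluence, so Theorem~\ref{thm:refluent} applies. Since $\alpha\notindep\beta$, there is some command $\sigma\in\alpha$ and $\tau\in\beta$ on comparable nodes with $\sigma\notindep\tau$. Pushing this down to a leader via Theorem~\ref{thm:struct}, I can find a leader (say in $\alpha$, up to symmetry) that is \emph{not} $\indep\beta$; then the second or third bullet of Theorem~\ref{thm:refluent} must hold, which forces a structural interaction between $\alpha$ and $\beta$ on a parent/child pair of nodes—concretely, one of the sequences contains an up command $\(m,\E,\F\D)$ on a node $m$ whose parent carries a down command $\(\parent m,\D,\F\E)$ in the other sequence (or the mirror image). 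The key point is then that in this configuration a reconciler cannot exist: if $\beta'$ is a reconciler for $\alpha$ over $\beta$, then for every $\FS$ on which both $\alpha$ and $\beta$ work, $\alpha\beta'$ must work and $\beta'$ must not override $\alpha$'s changes; but the down command in $\alpha$ at $\parent m$ empties the subtree, so after $\alpha$ the value at $m$ is empty and at $\parent m$ is a non-directory, and the corresponding up command from $\beta$ (which must appear in $\beta'$, since it changes a node $\alpha$ does not touch at that type—or overriding it would contradict the non-override clause) breaks the filesystem at $m$; the mirror case is analogous. This contradicts $\{\alpha,\beta\}\models\alpha\beta'$, so no reconciler $\beta'$ exists, contradicting confluence.

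Finally, in the case $\alpha\indep\beta$ I should double-check the explicit claim $\{\alpha,\beta\}\models\alpha\beta\equiv\beta\alpha$: refluence ($\{\alpha,\beta\}\nmodels\cbrk$) holds by hypothesis, and the equivalence $\alpha\beta\equiv\beta\alpha$ follows by repeatedly commuting adjacent commands as noted above, so on any $\FS$ that both $\alpha$ and $\beta$ do not break, $\alpha\beta\FS=\beta\alpha\FS\neq\bot$, which is exactly $\{\alpha,\beta\}\models\alpha\beta\equiv\beta\alpha$.

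I expect the main obstacle to be the converse direction: precisely identifying, from the failure of $\alpha\indep\beta$, which bullet of Theorem~\ref{thm:refluent} is responsible, and then arguing carefully that the offending structural command of $\beta$ (respectively $\alpha$) is \emph{forced} to appear in any reconciler—this needs the non-override clause together with the observation that the node in question is not shared, or is shared only with a command of a different type, so omitting it would leave a detectable difference. Handling the parent/child bookkeeping and the two mirror-symmetric configurations uniformly is where the care is needed; everything else is routine once Theorems~\ref{thm:struct} and~\ref{thm:refluent} are invoked.
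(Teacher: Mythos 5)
Both directions of your proposal have a genuine gap. In the direction $\alpha\indep\beta\Rightarrow{}$confluent, the substantive claim is $\{\alpha,\beta\}\models\alpha\beta$, and this does \emph{not} follow from $\alpha\beta\equiv\beta\alpha$: semantic equivalence only says the two sequences act identically on every filesystem, and it is perfectly compatible with both of them breaking the very filesystem $\FS$ on which $\alpha$ and $\beta$ individually work. Your ``double-check'' paragraph merely restates the conclusion ($\alpha\beta\FS=\beta\alpha\FS\neq\bot$) without an argument; note also that refluence is not a hypothesis in this direction but part of what confluence demands, so it too must be established. The paper proves $\{\alpha,\beta\}\models\alpha\beta$ via Theorem~\ref{non-breaking}: since the sequences are node-disjoint and $\alpha\indep\beta$, the command set of $\alpha\beta$ is simple and $\alpha\beta$ honors $\orderrel$, and every leader of $\alpha\beta$ is a leader of $\alpha$ or of $\beta$; hence conditions (a)--(c) of Theorem~\ref{non-breaking} for $\alpha\beta$ hold on any $\FS$ on which they hold for $\alpha$ and $\beta$ separately. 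Some argument of this kind has to be supplied.

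In the converse direction, your key step---that the offending up command of $\beta$ ``must appear in $\beta'$''---is not justified by the non-override clause: a reconciler is not required to contain any particular command (the empty sequence is always a reconciler); that clause restricts what $\beta'$ may do, not what it must contain. What actually forces commands into $\beta'$ is the identical-results requirement $\{\alpha,\beta\}\models\alpha\beta'\equiv\beta\alpha'$: if $\tau\in\beta$ is missing from $\beta'$, then on a common filesystem (which exists by refluence) $\beta\alpha'$ changes the value at the node of $\tau$ while $\alpha\beta'$ leaves it untouched, since by node-disjointness no other command acts there and each command may be assumed to change the stored value. This is exactly the paper's route: it concludes $\beta'=\beta$ and $\alpha'=\alpha$, so $\alpha\beta\equiv\beta\alpha$ with both sides non-breaking simple sequences; both then honor $\orderrel$ by Theorem~\ref{thm:struct}, and Proposition~\ref{remark-R1*} yields $\alpha\indep\beta$ directly, with no appeal to Theorem~\ref{thm:refluent}. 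Your detour through Theorem~\ref{thm:refluent} also contains an unsupported step: from $\alpha\notindep\beta$ you ``push down to a leader,'' but Lemma~\ref{lemma-4} propagates independence toward the $\orderrel$-minimal end, i.e.\ non-independence \emph{away} from leaders, so a non-independent non-leader does not immediately produce a non-independent leader, and the case analysis you base on the bullets of Theorem~\ref{thm:refluent} is not yet available.
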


\begin{proof}
Suppose $\alpha$ and $\beta$ are applied to the filesystem $\FS$. We may
assume that each command actually changes the content of the filesystem. Let
$\beta'$ and $\alpha'$ be the reconcilers, that is, $\alpha\beta'
\equiv \beta\alpha'$. If $\sigma\in\beta$ on node $n$ were not in $\beta'$,
then $(\alpha\beta')\FS$ has the original content at $n$, while
$(\beta\alpha')\FS$ has a different value as changed by $\sigma$. Thus
$\beta'$ contains all commands of $\beta$, and as it is non-breaking, it is
equivalent to $\beta$ by Theorem \ref{thm:struct}. Consequently
$\beta'=\beta$ and $\alpha'=\alpha$ satisfy $\alpha\beta\equiv \beta\alpha$.
As $\alpha\beta$ and $\beta\alpha$ are non-breaking simple sequences, both
honor $\orderrel$. By Theorem \ref{thm:struct} it means that
$\alpha\indep\beta$.

For the other direction assume $\alpha$, $\beta$ are node-disjoint,
non-breaking sequences such that $\alpha\indep\beta$. By Theorem
\ref{thm:struct} $\alpha\beta\equiv\beta\alpha$, thus it suffices to show
that $\{\alpha,\beta\}\models \alpha\beta$.
The command sets of $\alpha$ and $\beta$ are simple. As $\alpha\indep\beta$,
the same is true for the command set of $\alpha\cup\beta$; moreover
$\alpha\beta$ is an ordering of this simple set which honors $\orderrel$. In
particular, $\alpha\beta$ is non-breaking. Let $\FS$ be a filesystem which
neither $\alpha$ nor $\beta$ breaks, that is, conditions of Theorem
\ref{non-breaking} hold for $\alpha$ and $\beta$. Since $\alpha\indep
\beta$, every leader in $\alpha\beta$ is either a leader in $\alpha$, or is
a leader in $\beta$. From here it follows that the same conditions hold for
the sequence $\alpha\beta$, meaning $(\alpha\beta)\FS\neq\bot$, as required.
\end{proof}

To state the main result of this section we need some additional
definitions. Recall that $\captp\alpha\beta$ is the set of those commands
from $\alpha$ which have the same node, same input type and same output type
as some command in $\beta$.

\begin{definition}
The set of commands in $\alpha$ not in $\captp\alpha\beta$ is denoted by
$\mintp\alpha\beta$ as
$$
   \mintp\alpha\beta = \{ \sigma\in\alpha:  \mbox{for every $\tau\in\beta$, } 
         ~\tp(\sigma)\neq\tp(\tau)\,\}.
$$
Finally, let us define
$$
   \R\beta\alpha = 
     \{ \tau\in\mintp\beta\alpha : \tau\indep \mintp\alpha\beta \,\}.
$$
When this set is used as a sequence, it is ordered so that the ordering
honors $\orderrel$. Any two such ordering gives equivalent sequences by
Theorem \ref{thm:struct}.
\end{definition}

\begin{theorem}\label{thm:reconc}
Let $\alpha$, $\beta$ be refluent simple sequences. Then\\
{\upshape(a)} $\R\beta\alpha$ is a reconciler for $\alpha$ over 
$\beta$.\\
{\upshape(b)} If $\beta'$ is a reconciler, then
$\beta'\subseteq \R\beta\alpha$.
\end{theorem}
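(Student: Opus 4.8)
The plan is to build both parts on the reduction Theorem \ref{thm:reduction}: write $\alpha\equiv\alpha_1\alpha_2$ and $\beta\equiv\beta_1\beta_2$, where $\alpha_1=\captp\alpha\beta$, $\beta_1=\captp\beta\alpha$ sit on a common node set with matching input and output types, $\alpha_2=\mintp\alpha\beta$, $\beta_2=\mintp\beta\alpha$ (each ordered to honour $\orderrel$), and $\alpha_2,\beta_2$ are refluent; recall also from the proof of that theorem that $\alpha_1,\beta_1$ are $\orderrel$-downward closed inside $\alpha,\beta$, equivalently $\alpha_2,\beta_2$ are $\orderrel$-upward closed. I would dispose of (b) first. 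A reconciler $\beta'$ acts on nodes disjoint from those of $\alpha$ (a command of $\beta'$ on a node also touched by $\alpha$ overrides a local change), so each $\tau\in\beta'$ matches the type of no command of $\alpha$, hence lies in $\mintp\beta\alpha=\beta_2$. If some $\tau\in\beta'$ had $\tau\notindep\mintp\alpha\beta$, witnessed by $\sigma\in\alpha_2$ with $\tau\notindep\sigma$, then the two are on different nodes, so by Proposition \ref{remark-R1*} either $\sigma\orderrel\tau$ or $\tau\orderrel\sigma$; applying Lemma \ref{lemma-3} to the refluent pair $\alpha_2,\beta_2$ produces, in the first case, a command of $\beta_2$ with $\tp=\tp(\sigma)$ (contradicting $\sigma\in\mintp\alpha\beta$), and in the second a command of $\alpha_2$ with $\tp=\tp(\tau)$ (contradicting $\tau\in\mintp\beta\alpha$). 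Hence $\tau\indep\mintp\alpha\beta$, i.e.\ $\tau\in\R\beta\alpha$, proving (b).

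For (a), $\gamma:=\R\beta\alpha$ visibly consists of commands of $\beta$, and it is node-disjoint from $\alpha$: a command $\tau\in\gamma$ on a node also carrying some $\sigma\in\alpha$ would, by refluence, have the same input type and hence (being in $\mintp\beta\alpha$) a different output type, forcing $\sigma\in\mintp\alpha\beta$ and therefore $\tau\notindep\mintp\alpha\beta$, impossible. So $\gamma$ destroys no local change of $\alpha$, and it remains to prove $\{\alpha,\beta\}\models\alpha\gamma$.

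I would first record the structure of $\gamma$: it is a subset of $\beta_2$, independent of and node-disjoint from $\alpha_2$, and by Lemma \ref{lemma-4} applied to the refluent pair $\alpha_2,\beta_2$ it is $\orderrel$-downward closed inside $\beta_2$; together with the $\orderrel$-upward closure of $\beta_2$ this makes the command set of $\gamma$ a simple set, yields $\gamma\not\wequiv\cbrk$ (Theorem \ref{thm:struct}(d)), and shows every leader of $\gamma$ is already a leader of $\beta_2$. Now fix $\FS$ with $\alpha\FS\neq\bot$, $\beta\FS\neq\bot$ and put $\FS':=\alpha_1\FS$, $\FS'':=\beta_1\FS$; these are filesystems agreeing in type at every node (matched signatures of $\alpha_1,\beta_1$), with $\alpha_2\FS'=\alpha\FS\neq\bot$ and $\beta_2\FS''=\beta\FS\neq\bot$. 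I would then check the three conditions of Theorem \ref{non-breaking} for $\gamma$ on $\FS''$: (a) is inherited from $\beta_2\FS''\neq\bot$; (b) holds because a leader of $\gamma$ matching $\(n,\E,\F\D)$ is a leader of $\beta_2$, so nodes above $n$ are directories in $\FS''$; and (c) — for $\sigma\in\gamma$ matching $\(n,\D,\F\E)$ and a node $m\prec n$ untouched by $\gamma$ — follows since a $\beta_2$-command $\rho$ on $m$ lies outside $\gamma$, hence cannot be $\orderrel$-chained to $\sigma$ inside $\beta_2$ (such a chain, contained in $\beta_2$ by upward closure, would force $\rho\in\gamma$), so $\rho\indep\sigma$, and as $\sigma$ is structural this forces $\rho$ to match $\(m,\E,\E)$, whence $\FS''(m)$ is empty; while if $m$ carries no $\beta_2$-command it is empty by Theorem \ref{non-breaking}(c) for $\beta_2$. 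Thus $\gamma\FS''\neq\bot$, so $\gamma\FS'\neq\bot$ as $\FS',\FS''$ have equal types. Finally $\gamma$ changes no type at any node comparable to a structural command of $\alpha_2$ (independence plus ``$\sigma$ structural'' kills the $\(\cdot,\D,\D)$ and $\(\cdot,\E,\E)$ options, so the $\gamma$-command is transient there) and touches no node of $\alpha_2$, so $\alpha_2\FS'\neq\bot$ gives $\alpha_2(\gamma\FS')\neq\bot$ by Corollary \ref{file-comparable}; since each command of $\gamma$ commutes with each command of $\alpha_2$ (Proposition \ref{command-pairs-2}), $\alpha\gamma\FS=\gamma(\alpha_2\FS')=\alpha_2(\gamma\FS')\neq\bot$, which is $\{\alpha,\beta\}\models\alpha\gamma$.

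The delicate point is condition (c) of Theorem \ref{non-breaking} in part (a): it must hold for the node set actually touched by $\gamma$, which is strictly smaller than the one touched by $\beta_2$, so it cannot be inherited. Making it go through relies on the precise interplay of three facts — the $\orderrel$-upward closure of $\beta_2$ (from the proof of Theorem \ref{thm:reduction}), the $\orderrel$-downward closure of $\gamma$ in $\beta_2$ (Lemma \ref{lemma-4}), and the rigidity of down-subtrees in simple sets (Theorem \ref{thm:struct}, Proposition \ref{remark-R1*}) — which together pin every node dropped from $\beta_2$ below a surviving down-command as either empty in $\FS''$ or already touched by $\gamma$. The commutation and Corollary \ref{file-comparable} bookkeeping at the end, and the reconciler-is-node-disjoint observation used throughout, are routine by comparison.
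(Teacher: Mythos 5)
Your part (a) is essentially sound and follows the paper's skeleton: Theorem \ref{thm:reduction}, then $\orderrel$-downward closure of $\R\beta\alpha$ inside $\mintp\beta\alpha$ via Lemma \ref{lemma-4}; your hands-on finish through Theorem \ref{non-breaking}, Corollary \ref{file-comparable} and pairwise commutation re-derives what the paper gets more quickly by rearranging $\beta_2\equiv\beta'\beta''$ and citing Theorems \ref{thm:confluent} and \ref{thm:converse}.

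Part (b), however, has a genuine gap. From $\tau\notindep\sigma$ on different nodes you infer ``$\sigma\orderrel\tau$ or $\tau\orderrel\sigma$ by Proposition \ref{remark-R1*}'', but the third alternative of that proposition, $\sigma\tau\wequiv\tau\sigma$, includes (by the paper's convention) the case where both orders rewrite to $\cbrk$, and $\notindep$ does not exclude it: it occurs whenever the two nodes are comparable but not parent--child, and also for parent--child pairs whose types fit neither $\orderrel$ pattern. Concretely, let $u=\parent w$, $w=\parent v$, and let $\FS$ have directories at $u,w$ and empty values elsewhere; take $\alpha=\(w,\D,\EE)\(u,\D,\EE)$ and $\beta=\(v,\E,\FF)$. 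These are refluent, $\tau=\(v,\E,\FF)$ lies in $\mintp\beta\alpha$ and is node-disjoint from $\alpha$, yet $\tau\notindep\(w,\D,\EE)$ and $\tau\notindep\(u,\D,\EE)$ with no $\orderrel$ in either direction. Your argument for (b) never uses the defining property $\{\alpha,\beta\}\models\alpha\beta'$ of a reconciler, so, if valid, it would prove that every command of $\mintp\beta\alpha$ on a node untouched by $\alpha$ is independent of $\mintp\alpha\beta$ -- which this example refutes ($\tau\notin\R\beta\alpha$, and indeed $\alpha\tau$ breaks $\FS$, so $\tau$ belongs to no reconciler). The reconciler hypothesis is exactly the missing ingredient: the paper applies Theorem \ref{thm:reduction} to the refluent pair $\alpha\beta'$, $\beta$, so that $\beta'\alpha_2$ becomes a non-breaking simple sequence; Theorem \ref{thm:struct} (simple set plus honoring $\orderrel$) then converts $\beta'\notindep\alpha_2$ into an oriented crossing edge $\tau\orderrel\sigma$ with $\tau\in\beta'$, $\sigma\in\alpha_2$, and only at that point does Lemma \ref{lemma-3} yield the contradiction. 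Your two Lemma \ref{lemma-3} horns are fine once such an oriented edge exists; producing it from the reconciler property is the step your proof is missing.
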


\begin{proof}
(a) By the Reduction Theorem \ref{thm:reduction}, $\alpha$ and $\beta$ can
be equivalently rearranged as $\alpha_1\alpha_2$ and $\beta_1\beta_2$ where
$\alpha_1$ consists of the commands of $\captp\alpha\beta$, $\alpha_2$
consists of the commands of $\mintp\alpha\beta$, and similarly for $\beta_1$
and $\beta_2$. Recall that by the same theorem $\alpha_2$ and $\beta_2$ are
also refluent.

We claim that $\beta_2$ can be rearranged so that it starts with
$\R\beta\alpha$. To this end we only need to show that if $\tau_1\orderrel
\tau_2$ are in $\beta_2$ and $\tau_2\in\R\beta\alpha$, then so is $\tau_1$.
By definition $\tau\in\beta_2$ is in $\R\beta\alpha$ iff
$\tau\indep\alpha_2$. Thus $\tau_2\indep\alpha_2$, and then by Lemma
\ref{lemma-4} we have $\tau_1\indep\alpha_2$, as required.

Therefore $\beta_2\equiv\beta'\beta''$ where $\beta'$ consists of the
commands in $\R\beta\alpha$. Now $\beta'$ and $\alpha_2$ are node-disjoint
non-breaking refluent sequences such that $\beta'\indep\alpha_2$. Theorem
\ref{thm:confluent} gives that in this case $\{\alpha_2,\beta'\} \models
\alpha_2\beta'$, and then by Theorem \ref{thm:converse} we have
$\{\alpha,\beta\}\models\alpha\beta'$, proving that $\R\beta\alpha$ is
indeed a reconciler.

\smallskip

(b) Suppose $\beta'$ is a reconciler, in particular $\{\alpha,
\beta\}\models\alpha\beta'$. Then $\alpha\beta'$ and $\beta$ are also
refluent, consequently, by Theorem \ref{thm:reduction}, $\alpha\equiv \alpha_1 \beta'
\alpha_2$ and $\beta\equiv\beta_1\beta'\beta_2$ where $\alpha_1$ and
$\beta_1$ are the commands from $\captp\alpha\beta$ and $\captp\beta\alpha$,
respectively. As $\beta'\alpha_2$ is a non-breaking simple sequence, if
$\beta'\notindep\alpha_2$ then according to Theorem \ref{thm:struct} there
are $\tau\in\beta'$ and $\sigma\in\alpha_2$ such that $\tau\orderrel\sigma$.
But $\alpha$ and $\beta$ are refluent, $\tau\in\beta$, $\sigma\in\alpha$,
and then Lemma \ref{lemma-3} gives that there is a $\tau'\in\alpha$ such
that $\tp(\tau')=\tp(\tau)$ meaning that $\tau\in\beta_1$,
which is impossible. Thus $\beta'\indep\alpha_2$ and
then $\beta'\subseteq\R\beta\alpha$.
\end{proof}
 
It should be clear that $\R\beta\alpha$ can be determined from the simple 
sequences $\alpha$ and $\beta$ in quadratic time. Split the commands in
$\alpha$ into the sets $\captp\alpha\beta$ and $\mintp\alpha\beta$, and
similarly for $\beta$. Then go over each element of $\mintp\beta\alpha$ and check
whether it satisfies $\tau\indep\mintp\alpha\beta$. As $\beta$ honors
$\orderrel$, keeping elements of $\R\beta\alpha$ in the same order as they
are in $\beta$ provides a correct ordering of $\R\beta\alpha$.

\smallskip

We can write the refluent sequences as $\alpha \equiv \alpha_1\R\alpha\beta
\alpha_3$ where $\alpha_1$ consists of commands in $\captp\alpha\beta$, and
similarly for $\beta$. Unresolved conflicts come from two sources. First,
the matching commands in $\alpha_1$ and $\beta_1$ might store different
values (of the same type) at the same node, which would override a local
change made by $\alpha$. These conflicts should be
resolved by some content negotiation. Second, a command $\sigma\in \beta_3$
is either on the same node as some command in $\alpha$ (actually, in
$\alpha_3$) assigning a different value type thus again overriding a local
change, or $\sigma\notindep\alpha_3$. In this latter case
executing $\sigma$ after $\alpha$ (or even after $\alpha\R\beta\alpha$)
would break the filesystem.

\smallskip

Finally, let us state an immediate consequence of Theorem \ref{thm:reconc}
of which Theorem \ref{thm:confluent} is a special case.

\begin{theorem}\label{thm:conf-gen}
The refluent simple sequences $\alpha$ and $\beta$ are confluent if and only
if $\captp\alpha\beta = \captp\beta\alpha$ and $\mintp\alpha\beta \indep
\mintp\beta\alpha$.
\qed
\end{theorem}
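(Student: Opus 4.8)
The plan is to derive this characterization directly from Theorem~\ref{thm:reconc}. Confluence asks for reconcilers $\beta'$ for $\alpha$ over $\beta$ and $\alpha'$ for $\beta$ over $\alpha$ with $\{\alpha,\beta\}\models\alpha\beta'\equiv\beta\alpha'$. By part~(b) of Theorem~\ref{thm:reconc} the only candidate worth checking is the maximal one: $\beta'=\R\beta\alpha$ and $\alpha'=\R\alpha\beta$, since any reconciler is a subsequence of these and a smaller $\beta'$ could only make fewer commands of $\beta$ survive, which cannot help match $\beta\alpha'$. So I would first argue that $\alpha$ and $\beta$ are confluent iff $\alpha\R\beta\alpha$ and $\beta\R\alpha\beta$ produce the same filesystem whenever both $\alpha$ and $\beta$ are defined, i.e.\ $\{\alpha,\beta\}\models\alpha\R\beta\alpha\equiv\beta\R\alpha\beta$.

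Next I would use the Reduction Theorem~\ref{thm:reduction} to write $\alpha\equiv\alpha_1\alpha_2$ and $\beta\equiv\beta_1\beta_2$ with $\alpha_1,\beta_1$ carrying the $\captp{}{}$-commands and $\alpha_2=\mintp\alpha\beta$, $\beta_2=\mintp\beta\alpha$ (as sequences), and $\alpha_2,\beta_2$ refluent. By the proof of Theorem~\ref{thm:reconc}(a), $\beta_2\equiv\R\beta\alpha\,\beta''$ and symmetrically $\alpha_2\equiv\R\alpha\beta\,\alpha''$, where $\R\beta\alpha$ consists of exactly the commands of $\beta_2$ independent of $\alpha_2$. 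Now I compare the two sides node by node. On a filesystem $\FS$ that neither breaks, $(\alpha\R\beta\alpha)\FS$ and $(\beta\R\alpha\beta)\FS$ agree on every node outside the union of the node sets of $\alpha$ and $\beta$. On a node touched only by $\alpha$ (a command in $\alpha_2$ but not in $\R\alpha\beta$, i.e.\ an $\alpha_2$-command not independent of $\beta_2$), the left side shows $\alpha$'s new value while the right side, having no command there, shows the original value — so such a node forces non-confluence; symmetrically for nodes touched only by $\beta$. On a shared node (in $\alpha_1$/$\beta_1$), the left side ends with the value from $\alpha_1$ and the right side with the value from $\beta_1$; these agree for all $\FS$ exactly when the two commands are literally equal, i.e.\ $\captp\alpha\beta=\captp\beta\alpha$.

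Putting this together: confluence forces $\mintp\alpha\beta=\R\alpha\beta$ and $\mintp\beta\alpha=\R\beta\alpha$, which by the definition of $\R{}{}$ is exactly $\mintp\alpha\beta\indep\mintp\beta\alpha$; and it forces $\captp\alpha\beta=\captp\beta\alpha$. Conversely, if these two conditions hold, then $\R\beta\alpha=\mintp\beta\alpha=\beta_2$ and $\R\alpha\beta=\mintp\alpha\beta=\alpha_2$, so $\alpha\R\beta\alpha\equiv\alpha_1\alpha_2\beta_2$ and $\beta\R\alpha\beta\equiv\beta_1\beta_2\alpha_2$; since $\alpha_1$ and $\beta_1$ consist of pairwise-equal commands (same node, input and output types, hence same command by simplicity, using $\captp\alpha\beta=\captp\beta\alpha$) and $\alpha_2\indep\beta_2$, both sides are permutations of the same simple set honoring $\orderrel$, so Theorem~\ref{thm:confluent} (applied to $\alpha_2,\beta_2$) and Theorem~\ref{thm:converse} give $\{\alpha,\beta\}\models\alpha\R\beta\alpha\equiv\beta\R\alpha\beta$. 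The main obstacle I anticipate is the bookkeeping in the ``only if'' direction: carefully arguing, via Corollary~\ref{file-comparable} and the freedom to vary values within a type (using the hypothesis of at least two values per type, inherited from where it is needed), that a discrepancy on a node touched by only one side, or a genuine value-difference on a shared node, cannot be an artifact of the particular $\FS$ chosen but really obstructs $\equiv$.
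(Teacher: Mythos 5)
Your route is the one the paper intends: it states the theorem with no written proof, calling it an immediate consequence of Theorem~\ref{thm:reconc}, and your derivation via Theorem~\ref{thm:reconc}, the Reduction Theorem~\ref{thm:reduction}, Theorem~\ref{thm:confluent} and Theorem~\ref{thm:struct} fills in exactly that derivation; the ``if'' direction as you sketch it is sound, and your caveat about needing at least two values per type is genuine (the paper only acknowledges this global assumption in its Conclusion; with a singleton type a no-op transient command does produce counterexamples, which also touch Theorem~\ref{thm:reconc}(b) itself).

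One step needs repair, though it is easily repaired. In the ``only if'' direction you reduce to checking $\{\alpha,\beta\}\models\alpha\,\R\beta\alpha\equiv\beta\,\R\alpha\beta$, justified only by ``a smaller $\beta'$ cannot help''; as stated this is circular, since whether the \emph{maximal} reconcilers witness confluence is not known before the theorem is proved. The fix is to run your node-by-node comparison on the actual witnesses $\beta'$, $\alpha'$ and use only the containments $\beta'\subseteq\R\beta\alpha\subseteq\mintp\beta\alpha$ and $\alpha'\subseteq\R\alpha\beta\subseteq\mintp\alpha\beta$ from Theorem~\ref{thm:reconc}(b): on a node carrying a matched pair ($\captp\alpha\beta$/$\captp\beta\alpha$) neither $\beta'$ nor $\alpha'$ acts, so the two results show the two replacement values, forcing $\captp\alpha\beta=\captp\beta\alpha$; and if $\sigma\in\mintp\alpha\beta$, $\tau\in\mintp\beta\alpha$ were not independent, then $\sigma\notin\R\alpha\beta$ and $\tau\notin\R\beta\alpha$, and comparing the node of $\sigma$ gives the contradiction. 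Related to this, your case split is slightly off: a node of such a $\sigma$ need not be ``touched only by $\alpha$'' --- $\beta$ may have a command there with a different output type (input types agree by refluence), in which case the right side is not ``the original value'' but a value of a different type, which contradicts confluence even more directly; only when $\beta$ really has no command on that node do you need Corollary~\ref{file-comparable} and the value-variation argument. With these two adjustments the proposal is a complete proof.
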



\section{Conclusion}\label{sec:conclusion}

Our main motivation was to generalize and extend some of the results from
\cite{epcs} in a more abstract setting, and concentrated mainly on proving
several characterization results of this intriguing algebraic model. One of
the main contributions of \cite{epcs} is the model of filesystems and
filesystem commands, which have been adopted here as well. The semantics of
command sequences is defined through their action on the filesystems, which
gives rise to semantic equivalence and semantic validity.
Two sequences are semantically equivalent if they have the same effect on
all filesystems, while the sequence $\alpha$ is semantically valid on a
filesystem $\FS$ if $\alpha$ can be executed on $\FS$ without breaking it.
This semantical validity shares many properties of the ``logically valid''
notion of mathematical logic as discussed in Proposition \ref{prop:models}.
Similarly, the \emph{non-breaking} property of sequences corresponds to that
of consistency or satisfiability in logic. The set of filesystem commands is
also \emph{functionally complete}: if two filesystems differ at finitely
many nodes, then there is a \emph{simple} command sequence transforming one
into the other as shown by Theorem \ref{thm:sem-compl}.

Command sequences can be manipulated syntactically by applying the
\emph{rewriting rules} defined in Proposition \ref{prop:rules}. Every
sequence can be rewritten into a simple sequence while extending its semantics
(Theorem \ref{thm:rewriting}), and two simple sequences are semantically
equivalent if and only if they can be rewritten into each other (Theorem
\ref{thm:simple-completeness}). By Theorem \ref{thm:struct} the semantics of
a simple sequence is uniquely determined by the \emph{set of its commands};
each feasible ordering (which can be found in quadratic time) of such a
simple set gives a sequence with the same semantics. The existence of an
effective update detector algorithm follows easily from these properties
yielding the statements in Theorem \ref{thm:informal-1}.

Two simple sequences are \emph{refluent} if they are jointly consistent:
there is a filesystem which neither of them breaks. The problem of
syntactical characterization of such sequence pairs seems to be hard and
have been solved only partially. Theorem \ref{thm:refluent} gives a complete
characterization for the special case of node-disjoint sequences.
By Theorems \ref{thm:reduction} and \ref{thm:converse} commands on the same
node with the same input and same output type can be ignored.

Reconciliation is a relaxed notion of confluence: given two simple sequences
$\alpha$ and $\beta$, can we add further commands (without overriding the
effects
of old ones) to these sequences so that $\alpha\beta'\equiv\beta \alpha'$ ?
Theorem \ref{thm:conf-gen} gives a complete characterization when sequences
are confluent, while by Theorem \ref{thm:reconc} there is a unique maximal 
reconciler for any pair of refluent simple sequences. This result justifies
Theorem \ref{thm:informal-2}.
We have assumed that all three data types $\D$, $\F$, $\E$ contain at least
two elements; this fact was used in the proof of the Completeness theorem
\ref{thm:simple-completeness}. When one (or more) of them has a single
element only, the corresponding transient command should be deleted, as
explained at the end of Section \ref{sec:simple-sequences}. With that
modification all theorems remain valid.


\subsection{Open problems and future work}

During the preparation of this work we have looked at changing, relaxing, or
modifying several parameters of the chosen model, without any success. One
such modification was to use more value types beyond $\D$, $\F$, and $\E$. 
The resulting
filesystem semantics (using various restrictions on how the values vary
along each branch) with the corresponding command set did not lead to any
syntactical characterization of semantically equivalent sequences. It is an
interesting and intriguing problem to understand why this particular
semantics is so powerful and, at the same time, so tractable.

It is an open problem whether a reconciler with similar properties as claimed in 
Theorem \ref{thm:informal-2} exists for more than two replicas. While many
synchronization algorithms generalize easily to many replicas, it is not
clear how the constructions from Section \ref{sec:refluent} can be
extended.

The reconciler algorithm extracts the reconciler sequences to be applied to
the replicas, and marks the remaining commands as conflicting. The
downstream conflict resolver may resolve these conflicts by selecting the
leaders from the conflicting commands, consider the pairs that caused the
conflict, and decide to roll one of them back. Further research could
establish how these decisions affect the outcome; what rollbacks and what
order of them would undo the minimum amount of changes, and whether there is
a theoretically optimal or otherwise preferred set of rollbacks to perform.
Also, as noted earlier, extensions of the current model could be introduced
to handle links and a convenient \emph{rename} operation.

Another possible extension could be to include additional \emph{edit} commands that are
commutative like summation, decrementing or incrementing, which are common
operations in various database applications. Extending the algebra to
consider these commands can reduce the number of conflicts and increase the
power and efficiency of the reconciler in itself.

We also hope that this work provides a blueprint of constructing an algebra
of commands for different storage protocols (e.g., mailbox folders, generic
relational databases, etc.), and of demonstrating the adequacy and
completeness of the update and conflict detection and reconciliation
algorithms defined over it. This, in turn, can offer formal verification of
the algorithms underlying specific implementations in a variety of
synchronizers.


\section*{Acknowledgment}
The research of the second author (L.~Cs.) has been supported by the
GACR project number 19-04579S and by the Lend\"ulet Program of the HAS,
which is thankfully acknowledged.



\end{document}